\theoremstyle{plain}
\newtheorem{thm}{Theorem}
\newtheorem{prop}{Proposition}[section]
\newtheorem{corollary}{Corollary}[section]
\newtheorem{lemma}{Lemma}[section]
\theoremstyle{definition}
\newtheorem{assumption}{Assumption}
\theoremstyle{remark}
\newtheorem{remark}{Remark}[section]
\begin{document}

\title{Semiclassical study of shape resonances in the Stark effect}
\author{Kentaro Kameoka}

\date{}

\maketitle

\begin{abstract}
Semiclassical behavior of Stark resonances is studied. 
The complex distortion outside a cone is introduced to study  
resonances in any energy region for the Stark Hamiltonians with non-globally analytic potentials. 
The non-trapping resolvent estimate is proved by the escape function method. 
The Weyl law and the resonance expansion of the propagator are proved in the shape resonance model. 
To prove the resonance expansion theorem, 
the functional pseudodifferential calculus in the Stark effect is established, 
which is also useful in the study of the spectral shift function.
\end{abstract}

\section{Introduction}
In this paper, we study the semiclassical behavior of the resonances for the Stark Hamiltonian:
\[ P(\hbar)=-\hbar^2 \Delta +\beta x_1+V(x), \]
where $V(x)\in C^{\infty}(\mathbb{R}^n; \mathbb{R})$ is a non-globally analytic potential and $\beta>0$. 
Throughout this paper, the constant $\beta>0$ is fixed. 

We set the cone $C(K, \rho)=\{x \in \mathbb{R}^n | |x'| \le K(x_1+\rho)\}$, 
where $x'=(x_2, \dots, x_n)$, and denote its complement by $C(K, \rho)^c$. 
We denote the set of all bounded smooth functions with bounded derivatives by $C_b^{\infty}$.
Our assumption on the potential $V$ is as follows:

\begin{assumption}
The potential $V(x)\in C_b^{\infty}(\mathbb{R}^n; \mathbb{R})$ has an analytic continuation  
to the region $\{x\in \mathbb{C}^n|\mathrm{Re}x\in C(K_0, \rho_0)^c, |\mathrm{Im}x|<\delta_0\}$ 
for some $\rho_0\in \mathbb{R}, K_0>0$ and $\delta_0>0$, 
and $\partial V(x)$ goes to zero when $\mathrm{Re}x\to \infty$ in this region.
\end{assumption}

We introduce the complex distortion outside a cone to study semiclassical Stark resonances. 
This reduces the study of resonances to that of eigenvalues of a non-self-adjoint operator $P_{\theta}$. 
We take any $K>K_0$ and sufficiently large $\rho>0$ (such that Lemma 2.1 holds) and deform $P(\hbar)$ in $C(K, \rho)^c$. 
Take a convex set $\widetilde{C}(K, \rho)$ which has a smooth boundary 
such that $\widetilde{C}(K, \rho)$ is rotationally symmetric with respect to $x'$ and 
$\widetilde{C}(K, \rho)=C(K,\rho)$ in $x_1>-\rho+1$.
We define $F=-(1+K^{-2})^{\frac{1}{2}}\mathrm{dist}\left(\bullet, \widetilde{C}(K, \rho)\right)\ast \phi$, 
where $\phi \in C_c^{\infty}(\mathbb{R}^n)$, 
$\mathrm{supp}\phi \subset \{|x|<1\}$,  $\phi \ge 0$ and $\int\phi=1$. 
We also set $v(x)=(v_1(x), \dots, v_n(x))=\partial F(x)\in C_b^{\infty}(\mathbb{R}^n; \mathbb{R}^n)$. 
We next set $\Phi_{\theta}(x)=x+\theta v(x)$. 
This is a diffeomorphism for real $\theta$ with small $|\theta|$. 
We set $U_{\theta}f(x)=\left(\mathrm{det}\Phi_{\theta}'(x)\right)^{\frac{1}{2}}f(\Phi_{\theta}(x))$, 
which is unitary on $L^2(\mathbb{R}^n)$.
We define the distorted operator $P_{\theta}(\hbar)=U_{\theta}P(\hbar)U_{\theta}^{-1}$. 

The $P_{\theta}(\hbar)$ is an analytic family of closed operators for $\theta$ with 
$|\mathrm{Im}\theta|<\delta_0(1+K^{-2})^{-\frac{1}{2}}$ and $|\mathrm{Re}\theta|$ small
(Proposition 2.1). 
Moreover $P_{\theta}(\hbar)$ with $\mathrm{Im}\theta <0$ has discrete spectrum in 
$\{\mathrm{Im}z>\beta\mathrm{Im}\theta \}$ (Proposition 2.2). 
We note that we exclude the condition that $|\theta|$ is small by repeated applications of the Kato-Rellich theorem. 
We also note that we do not require that $\hbar$ is small. 

We set $L^p_{\mathrm{cone}}=\{f \in L^p | \mathrm{supp}f \subset C(K, \rho) \mspace{7mu} \text{for some} \mspace{7mu} K, \rho\}$ 
(in the following, we can replace $L^p_{\mathrm{cone}}$ by $L^p_{\mathrm{comp}}$). 
We also set $R_{+}(z, \hbar)=(z-P)^{-1}$ for $\mathrm{Im}z>0$.
Then we define the (outgoing) resonances of $P$ by meromorphic continuations of cutoff resolvents:
\begin{thm}
Suppose that Assumption 1 holds. 
Fix any $\hbar>0$. Then for any $\chi_1, \chi_2 \in L_{\mathrm{cone}}^{\infty}(\mathbb{R}^n)$ 
such that $\chi_j \not = 0$ on some open sets, the cutoff resolvent 
$\chi_1 R_{+}(z) \chi_2 \mspace{7mu} (\mathrm{Im}z>0)$ has a meromorphic continuation to 
$\mathrm{Im}z>-\beta \delta_0$ with finite rank poles. 
The pole $z$ is called a resonance and the multiplicity is defined by 
\[m_z=\mathrm{rank}\frac{1}{2\pi i}\oint_z \chi_1 R_{+}(\zeta) \chi_2 d\zeta .\]
The set of resonances is independent of the choices of $\chi_1$ and $\chi_2$ including multiplicities 
and denoted by $\mathrm{Res}(P)$. 
Moreover, $\mathrm{Res}(P)=\sigma_d(P_{\theta})$ including multiplicities 
in $\{\mathrm{Im}z>\beta\mathrm{Im}\theta\}$ if 
$0>\mathrm{Im}\theta >-\delta_0(1+K^{-2})^{-\frac{1}{2}}$ and 
$|\mathrm{Re}\theta|$ is small. 
\end{thm}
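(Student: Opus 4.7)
The plan is the standard complex-distortion argument, reducing the cutoff resolvent to the resolvent of $P_\theta$. For real $\theta$ with $|\theta|$ small, $U_\theta$ is unitary and $P_\theta = U_\theta P U_\theta^{-1}$, so
\[
\chi_1 R_{+}(z) \chi_2 = \chi_1 U_\theta^{-1} (z - P_\theta)^{-1} U_\theta \chi_2, \qquad \mathrm{Im}\,z > 0.
\]
For given $\chi_1, \chi_2$ I would first enlarge $K$ (still with $K > K_0$) and $\rho$ so that the supports of both $\chi_j$ lie inside $\widetilde{C}(K, \rho)$ at distance greater than $1$ from its boundary; since each $\chi_j$ is supported in some cone, this is always possible. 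On such a region $v \equiv 0$, so $\Phi_\theta$ is the identity there, $U_\theta$ commutes with multiplication by the $\chi_j$, and the identity reduces to $\chi_1 R_{+}(z) \chi_2 = \chi_1 (z - P_\theta)^{-1} \chi_2$.

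The second step is analytic continuation in $\theta$. By Proposition 2.1, $\theta \mapsto P_\theta$ is an analytic family on $|\mathrm{Im}\,\theta| < \delta_0(1+K^{-2})^{-1/2}$ with $|\mathrm{Re}\,\theta|$ small, so both sides of the reduced identity are analytic in $\theta$; the left side being $\theta$-independent, the identity persists for complex $\theta$. Proposition 2.2 then gives discreteness of $\sigma(P_\theta)$ in $\{\mathrm{Im}\,z > \beta\,\mathrm{Im}\,\theta\}$ for $\mathrm{Im}\,\theta < 0$, so the right side provides the meromorphic continuation of $\chi_1 R_{+}(z)\chi_2$ to that half-plane with finite rank poles. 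Letting $\mathrm{Im}\,\theta$ tend to $-\delta_0(1+K^{-2})^{-1/2}$ covers the whole region $\mathrm{Im}\,z > -\beta\delta_0$.

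Finally I would prove $\mathrm{Res}(P) = \sigma_d(P_\theta)$ with multiplicities. Denoting the Riesz spectral projector of $P_\theta$ at an eigenvalue $z_0$ by $\Pi_{z_0}$, one inclusion is immediate since $\chi_1 \Pi_{z_0} \chi_2$ has rank at most $\mathrm{rank}\,\Pi_{z_0}$. For the reverse, I would apply unique continuation to each vector in the generalized eigenspace: outside the distortion region $P_\theta$ coincides with $P$, whose coefficients are real-analytic by Assumption 1, so a generalized eigenfunction $\psi$ is real-analytic there, and $\chi_j \psi = 0$ with $\chi_j \not\equiv 0$ on an open set forces $\psi \equiv 0$ on an open set; Aronszajn's unique continuation theorem for the second-order elliptic operator $P_\theta - z_0$ then propagates this globally. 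Thus multiplication by each $\chi_j$ is injective on the generalized eigenspace, so $\mathrm{rank}(\chi_1 \Pi_{z_0} \chi_2) = \mathrm{rank}\,\Pi_{z_0}$. Independence of resonances and multiplicities from $\chi_1, \chi_2, K, \rho$ follows, with independence from $\theta$ coming from the unitary equivalence of the $P_\theta$ for real $\theta$ together with analyticity.

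The main obstacle I expect is the unique continuation step: in the deformed region $P_\theta$ has complex-valued principal symbol, so the classical Aronszajn theorem does not apply verbatim. I would handle this via a Carleman estimate valid for small complex perturbations of a real elliptic principal symbol, or alternatively by pulling the eigenfunction back along the analytic continuation of $U_\theta$ and invoking unique continuation for the real-analytic operator $P - z_0$ on a complex neighborhood of the undistorted region.
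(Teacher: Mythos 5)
Your overall strategy matches the paper's proof closely: choose the deformation so that $v$ and its derivatives vanish on neighborhoods of $\mathrm{supp}\chi_j$, obtain $\chi_1 R_+(z)\chi_2 = \chi_1(z-P_\theta)^{-1}\chi_2$ for real $\theta$, continue analytically in $\theta$ via Proposition 2.1, use Proposition 2.2 for meromorphy, and conclude with unique continuation. However, there is a genuine gap in the multiplicity argument. You claim that injectivity of each $\chi_j$ on $\mathrm{Ran}\,\Pi_{z_0}$ gives $\mathrm{rank}(\chi_1\Pi_{z_0}\chi_2)=\mathrm{rank}\,\Pi_{z_0}$. This handles the $\chi_1$ factor correctly, but not the $\chi_2$ factor. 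Since $P_\theta$ is not self-adjoint, $\Pi_{z_0}=\Pi_{z_0}^\theta$ is not an orthogonal projection, and $\mathrm{Ker}\,\Pi_{z_0}=(\mathrm{Ran}\,\Pi_{z_0}^{*})^{\perp}$ rather than $(\mathrm{Ran}\,\Pi_{z_0})^{\perp}$. One has $\mathrm{rank}(\Pi_{z_0}\chi_2)=\mathrm{rank}\,\Pi_{z_0}$ if and only if $\bar\chi_2$ is injective on $\mathrm{Ran}\,\Pi_{z_0}^{*}$, which is the generalized eigenspace of $P_\theta^{*}=P_{\bar\theta}$ at $\bar z_0$ --- a different space from $\mathrm{Ran}\,\Pi_{z_0}$. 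Injectivity of $\chi_2$ on $\mathrm{Ran}\,\Pi_{z_0}$ does not imply this. The paper makes exactly the needed move: it observes $(\Pi_z^\theta)^{*}=\Pi_{\bar z}^{\bar\theta}$ and runs the unique continuation argument a second time for the adjoint, concluding first $\mathrm{rank}\,\chi_1\Pi_z^\theta=\mathrm{rank}\,\Pi_z^\theta$ and then $\mathrm{rank}\,\chi_1\Pi_z^\theta\chi_2=\mathrm{rank}\,\chi_1\Pi_z^\theta$.

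Two further, more minor, remarks. First, your statement that the coefficients of $P$ are real-analytic in the undistorted region by Assumption 1 is incorrect: Assumption 1 gives analyticity of $V$ outside the cone $C(K_0,\rho_0)$, while the undistorted region $\widetilde C(K,\rho)$ with $K>K_0$ contains that cone, so there $V$ is only $C^\infty$. This is harmless for the conclusion because unique continuation for the genuine second-order elliptic operator $P-z_0$ needs only smooth (or Lipschitz) real principal coefficients, not analyticity, but the stated reason is wrong. Second, you correctly flag that in the deformed region $P_\theta-z_0$ has complex principal symbol, so classical Aronszajn-type results do not apply verbatim; however, your first proposed remedy (a Carleman estimate for small complex perturbations) is insufficient for the full range of the theorem, because reaching $\mathrm{Im}\,z$ close to $-\beta\delta_0$ requires $|\mathrm{Im}\,\theta|$ close to $\delta_0(1+K^{-2})^{-1/2}$, which need not be small. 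The paper is terse on this point as well; a complete treatment should address unique continuation for $P_\theta$ with $\theta$ of definite size. Also note that the "repeated applications" are needed because a generalized eigenfunction satisfies $(P_\theta-z_0)^k\psi=0$ rather than $(P_\theta-z_0)\psi=0$; one applies unique continuation successively to $\psi_{k-1},\psi_{k-2},\dots,\psi_0$ where $\psi_j=(P_\theta-z_0)^j\psi$, each of which vanishes on the open set on which $\psi$ vanishes.
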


We emphasize that there is no restriction on $\mathrm{Re}z$ in Theorem 1. 
The resonances are also described including multiplicities in terms of meromorphic continuations 
of the matrix elements of the resolvent $(f, R_{+}(z)g)$ for $f, g \in L^{\infty}_{\mathrm{cone}}$ (Proposition 2.3) 
or $f, g\in \mathcal{A}=\{u\in L^2| \mathrm{supp}\hat{u} \mspace{7mu}\text{is compact}\}$ (Proposition 2.4). 
The latter formalism based on analytic vectors for $\frac{1}{i}\partial$ shows that our definition of resonances 
coincides with that based on the global analytic translation when the potential is globally analytic (Corollary 2.2).

The resonances for the Stark Hamiltonians have been investigated by many authors.
Avron-Herbst~\cite{AH} defined the Stark resonances by the translation analyticity. 
Herbst~\cite{H} defined the Stark resonances by the dilation analyticity. 
Herbst~\cite{H2} discussed the exponential decay of matrix elements of Stark propagator 
and its relation with Stark resonances. 

The resonance of $-\Delta +V(x)+\beta x_1$ near a negative eigenvalue $E$ of $-\Delta +V(x)$ 
and the exponentially small estimate of its width in the limit $\beta \to 0$ 
are studied by Sigal~\cite{Si} and Wang~\cite{W2} 
(see also Briet~\cite{Br} and Hislop-Sigal~\cite[Chapter 23]{HS}). 
These works employ the complex distortion in the half space.
Resonances for many body Stark Hamiltonians have been also studied 
(see Herbst-Simon~\cite{HeSi}, Sigal~\cite{Si2} and Wang~\cite{W3}). 

Dimassi-Petkov~\cite{DP} studied resonances of $-\hbar^2\Delta+V(x)+x_1$ and its relation with 
the spectral shift function in the semiclassical limit ($\hbar \to 0$). 
In \cite{DP}, resonances are defined and studied in the region $\mathrm{Re}z<R$ by the 
complex distortion in the region $x_1<R$. 

We next state the non-trapping resolvent estimate in our setting.
We denote the trapped set for the classical flow in the energy interval $[a,b]$ by $K_{[a,b]}$.
Thus $K_{[a,b]}$ is the set of all $(x_0, \xi_0)\in T^*\mathbb{R}^n$ 
such that $a\le p(x_0, \xi_0)\le b$ and $\sup_{t\in \mathbb{R}}|x(t)|<\infty$, 
where $(x(t), \xi(t))$ is the solution of the Hamilton equation for 
$p(x, \xi)=|\xi|^2+\beta x_1+V(x)$ with the initial value $(x_0, \xi_0)$.

Wang~\cite{W1} proved the non-trapping limiting absorbtion principle bound for the Stark Hamiltonians, 
that is, the $\mathcal{O}(\hbar^{-1})$ bound of $R_+(z, \hbar)$ for $\mathrm{Im}z>0$ with suitable weights 
(see also Hislop-Nakamura~\cite{HN}). 
The following bound implies the bound for the analytically continued cutoff resolvent 
$\chi R_+(z, h) \chi$ for $\mathrm{Im}z>-M\hbar\log\hbar^{-1}$, 
where $\chi \in L^{\infty}_{\mathrm{cone}}(\mathbb{R}^n)$, since 
$\chi R_+(z, h) \chi=\chi(z-P_{\theta}(\hbar))^{-1}\chi$ if $P_{\theta}$ is 
constructed by the deformation outside $\mathrm{supp}\chi$.
\begin{thm}
Suppose that Assumption 1 holds and $K_{[a,b]}=\emptyset$. 
Then for any $0<M\ll \widetilde{M}$ there exists $C>0$, 
which also depends on the construction of $P_{\theta}$, such that 
for small $\hbar>0$ and $z \in [a ,b]+i[-M\hbar\log\hbar^{-1}, \infty)$, 
\[\|(P_{\theta}(\hbar)-z)^{-1}\|\le C\exp(C(\mathrm{Im}z)_-/\hbar)/\hbar, \]
where $(\mathrm{Im}z)_-=\max\{-\mathrm{Im}z, 0\}$ 
and $\theta=-i\widetilde{M}\hbar\log\hbar^{-1}$. 
\end{thm}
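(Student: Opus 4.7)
The plan is to combine the escape function method on the bounded portion of the energy shell with the symbolic damping that the distortion provides at infinity, then conjugate $P_\theta$ by $\exp(sG^w)$ for an $\hbar$-dependent parameter $s$ and extract the resolvent bound from a sharp G\r{a}rding inequality. The first step is to construct $G \in C_c^\infty(T^*\mathbb{R}^n; \mathbb{R})$ with $H_p G \leq -c_0 < 0$ on a neighborhood of $p^{-1}([a, b]) \cap \{|x| \leq R_1\}$ for a sufficiently large radius $R_1$, with $\mathrm{supp}(G)$ compact. This set is compact in $T^*\mathbb{R}^n$ (since $|\xi|^2 \leq b - \beta x_1 - V$ is bounded on the energy shell for bounded $x$), and $K_{[a,b]}=\emptyset$ yields $G$ via the standard time-averaging construction.

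The second step is to show symbolic damping of $P_\theta$ outside $\mathrm{supp}(G)$. A direct Taylor expansion of $\sigma(U_\theta P U_\theta^{-1})$ in $\theta$ yields
\[\mathrm{Im}\sigma(P_\theta)(x, \xi) = \mathrm{Im}\theta\bigl[\beta v_1(x) - 2\langle \partial v(x)\xi, \xi\rangle + v(x)\cdot \nabla V(x)\bigr] + O(|\mathrm{Im}\theta|^2).\]
The function $F$ is concave (negated distance to the convex set $\widetilde{C}(K,\rho)$, convolved with a mollifier), so $\partial v = \partial^2 F \leq 0$ and hence $-2\langle \partial v\xi,\xi\rangle \geq 0$. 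A geometric computation shows $v_1 \geq c > 0$ on the saturated distortion region: the perpendicular projection of an exterior point onto the lateral surface of $\widetilde{C}$ lies at a larger $x_1$-coordinate, yielding $v_1 = 1$ there, and in the behind-apex region $v_1 > 0$ as well. Assumption 1 makes $|\nabla V\cdot v|$ arbitrarily small once $R_1$ is large enough. Therefore $\mathrm{Im}\sigma(P_\theta) \leq -c_1|\mathrm{Im}\theta|$ on $p^{-1}([a,b])\setminus\mathrm{supp}(G)$, while on $\mathrm{supp}(G)$ one needs only the weaker estimate $\mathrm{Im}\sigma(P_\theta) = O(|\mathrm{Im}\theta|)$.

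The third step is the conjugation. I would set $s = (c_2 + (\mathrm{Im}z)_-/\hbar)/c_0$ with $c_2 > 0$ small and form $Q_s = e^{sG^w}P_\theta e^{-sG^w}$. The assumption $M \ll \widetilde{M}$ combined with $(\mathrm{Im}z)_- \leq M\hbar\log\hbar^{-1}$ keeps $s\hbar$ bounded, so the Egorov expansion is valid and gives $\sigma(Q_s) = \sigma(P_\theta) + is\hbar H_p G + O((s\hbar)^2 + \hbar^2)$. On $p^{-1}([a,b])$,
\[\mathrm{Im}\sigma(Q_s) \leq -\min(s\hbar c_0, c_1|\mathrm{Im}\theta|) = -(c_2\hbar + (\mathrm{Im}z)_-),\]
where $M \ll \widetilde{M}$ ensures the first term realizes the minimum. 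Off the energy shell, $|\mathrm{Re}(\sigma(Q_s) - z)|$ is bounded below, yielding ellipticity. Sharp G\r{a}rding then gives $\mathrm{Im}(Q_s - z) \leq -c_2\hbar/2$ as an operator modulo lower order, so $\|(Q_s - z)^{-1}\| \leq C/\hbar$. Transferring back,
\[\|(P_\theta - z)^{-1}\| \leq e^{2s\|G\|_\infty}\cdot\|(Q_s - z)^{-1}\| \leq C\exp(C'(\mathrm{Im}z)_-/\hbar)/\hbar.\]

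The main obstacle I expect is the symbolic sign analysis in the second step: one must verify the geometric lower bound on $v_1$ uniformly outside $\mathrm{supp}(G)$ and combine it with concavity of $F$ so that the bracket is strictly positive even in directions where one of the two contributions could degenerate (for instance when $|\xi|$ is small and $v_1$ is small simultaneously). A secondary point is that both $\theta$ and $s$ depend logarithmically on $\hbar$, so the pseudodifferential calculus must be carried out in an exotic $S_\delta$ class with $\delta$ small; this is standard but requires keeping error terms uniform in $\hbar$ throughout.
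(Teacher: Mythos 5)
Your overall strategy is the same as the paper's (and as Sj\"{o}strand--Zworski's): construct an escape function $G$ on the compact portion of the energy shell, use the concavity of $F$ and the lower bound $v_1\gtrsim 1$ to get damping from the distortion where it is active, conjugate by $e^{\pm \varepsilon G^W/\hbar}$ with $\varepsilon$ chosen proportional to $(\mathrm{Im}z)_-$ when the latter exceeds $M_1\hbar$, and require $M\ll\widetilde{M}$ so that the conjugation does not overwhelm the distortion damping. However, the final step as you state it has a genuine gap. You claim that a single sharp G\r{a}rding inequality gives $\mathrm{Im}(Q_s-z)\le -c_2\hbar/2$ as an operator, but the symbol $\mathrm{Im}\sigma(Q_s)-\mathrm{Im}z$ is not pointwise $\le -c_2\hbar$ off the energy shell when $\mathrm{Im}z<0$: there one only has ellipticity of $\mathrm{Re}(\sigma(Q_s)-z)$, and $\mathrm{Im}\sigma(Q_s)-\mathrm{Im}z$ can be positive of size $(\mathrm{Im}z)_-$. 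These two different sign conditions (imaginary-part negativity on the shell, real-part ellipticity off it) cannot be fed into a single sharp G\r{a}rding estimate; they must be combined by a microlocal partition. This is exactly what the paper does with $\Psi_1$ (elliptic region, including $x_1\ge R_1$ and the off-shell compact part), $\Psi_2$ (compact energy shell, escape function + sharp G\r{a}rding), and $\Psi_3$ (the region $\{x_1<-R_1\}\cup\{|x_1|<R_1,|x'|>R'\}$, handled by a quadratic-form argument because $P_{\theta,\varepsilon}$ is not semiclassically elliptic there), followed by commutator estimates that allow the three bounds to be glued.

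A secondary point: your Taylor expansion of $\mathrm{Im}\sigma(P_\theta)$ in $\theta$ has an error $O(|\mathrm{Im}\theta|^2)$ only locally; the quadratic term $\langle(I+\theta F'')^{-1}\xi,(I+\theta F'')^{-1}\xi\rangle$ contributes $O(|\mathrm{Im}\theta|^2|\xi|^2)$, which is not uniformly small on the energy shell since $|\xi|$ is unbounded as $x_1\to-\infty$. Lemma 2.1 avoids this by an exact argument: diagonalizing $F''$ (which is $\le 0$ by concavity) shows $\mathrm{Im}\langle(I+\theta F'')^{-1}\xi,(I+\theta F'')^{-1}\xi\rangle\le 0$ for $\mathrm{Im}\theta\le 0$ for all $\xi$, not merely to first order in $\theta$. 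You should use that exact sign statement rather than the $\theta$-expansion, and then organize the endgame around the three-region partition rather than one global sharp G\r{a}rding.
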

The proof of Theorem 2 is based on the escape function method 
as in \cite{M}, \cite{SZ}, where the same result is proved for decaying potentials. 
Theorem 2 implies the non-trapping time decay estimate (Corollary 3.1) as in \cite{NSZ}.

Our principal motivation comes from the shape resonance model. 
Denote the full potential by $V_{\beta}=\beta x_1+V$. 

\begin{assumption}[shape resonance model]
Fix $a<b$. We assume that 
$\{x \in \mathbb{R}^n|V_{\beta}(x)\le b\}=\mathcal{G}^{\mathrm{int}}\cup\mathcal{G}^{\mathrm{ext}}$, 
where $\mathcal{G}^{\mathrm{int}}$ is compact and non-empty, 
$\mathcal{G}^{\mathrm{ext}}$ is closed, and $\mathcal{G}^{\mathrm{int}}\cap\mathcal{G}^{\mathrm{ext}}=\emptyset$. 
Moreover, we assume 
$K_{[a,b]}\cap\{(x,\xi)|x\in \mathcal{G}^{\mathrm{ext}}\}=\emptyset$.
\end{assumption}

Our first main theorem is the Weyl-type asymptotics for the Stark shape resonances: 
\begin{thm}
Under Assumption 1 and 2, there exists $S>0$ such that  
\[\lim_{\hbar \to 0}(2\pi \hbar)^n \#(\mathrm{Res}(P(\hbar))\cap ([a, b]-i[0, e^{-S/{\hbar}}]))
=\mathrm{Vol}(K_{[a,b]}).\]
\end{thm}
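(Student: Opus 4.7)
\emph{Setup.} The plan is to combine Theorem 1, which identifies $\mathrm{Res}(P(\hbar))\cap([a,b]-i[0,e^{-S/\hbar}])$ with eigenvalues of $P_\theta(\hbar)$ in that box (for $\theta=-i\widetilde{M}\hbar\log\hbar^{-1}$), with a comparison to an auxiliary reference operator $P^\sharp(\hbar)=-\hbar^2\Delta+V^\sharp_\beta$ obtained by modifying $V_\beta$ inside a small neighborhood of $\mathcal{G}^{\mathrm{int}}$ so that $V^\sharp_\beta>b$ there. Since $V^\sharp_\beta$ still satisfies Assumption 1 and the corresponding classical flow is non-trapping in $[a,b]$, Theorem 2 applies to $P^\sharp_\theta$ and gives $\|(P^\sharp_\theta(\hbar)-z)^{-1}\|\le C\hbar^{-1}\exp(C(\mathrm{Im}z)_-/\hbar)$ uniformly in the box. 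In parallel I introduce a Dirichlet realization $P^{\mathrm{int}}(\hbar)$ of $-\hbar^2\Delta+V_\beta$ on a bounded open set $\widetilde{\mathcal{G}}\supset\mathcal{G}^{\mathrm{int}}$ disjoint from $\mathcal{G}^{\mathrm{ext}}$; since $K_{[a,b]}\subset\{x\in\mathcal{G}^{\mathrm{int}}\}$ by Assumption 2, the classical semiclassical Weyl law yields
\[\#\bigl(\sigma(P^{\mathrm{int}}(\hbar))\cap[a,b]\bigr)=(2\pi\hbar)^{-n}\mathrm{Vol}(K_{[a,b]})(1+o(1)),\]
and this count is independent of how $\widetilde{\mathcal{G}}$ is enlarged across the classically forbidden barrier $\{V_\beta>b\}$.

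\emph{Upper bound.} Let $W=V_\beta-V^\sharp_\beta$, compactly supported in $\mathcal{G}^{\mathrm{int}}$. Then eigenvalues of $P_\theta$ in a box of size $O(1)\times O(\hbar)$ near $[a,b]$ coincide with zeros of the holomorphic determinant $D(z)=\det(I+W(P^\sharp_\theta(\hbar)-z)^{-1})$. Using the functional pseudodifferential calculus developed in the paper together with the compact spatial support of $W$, one obtains a trace-norm estimate $\|W(P^\sharp_\theta-z)^{-1}\|_{\mathcal{L}_1}=O(\hbar^{-n})$ after inserting a spectral cutoff $\chi(P^\sharp_\theta)$ localized near $[a,b]$ and controlling the complementary high-energy piece by ellipticity. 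Combined with the non-trapping bound, this yields $\log|D(z)|=O(\hbar^{-n})$ in the box, and a Jensen-type argument in the style of Sjöstrand's counting of resonances gives
\[\#\mathrm{Res}(P(\hbar))\cap([a,b]-i[0,e^{-S/\hbar}])\le(2\pi\hbar)^{-n}\mathrm{Vol}(K_{[a,b]})+o(\hbar^{-n}).\]

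\emph{Lower bound.} For each eigenvalue $\lambda_j(\hbar)\in[a,b]$ of $P^{\mathrm{int}}$ with normalized eigenfunction $u_j$, Agmon estimates across the barrier $\{V_\beta>b\}$ produce a quasimode $\tilde u_j$ supported in $\widetilde{\mathcal{G}}$ with $\|(P-\lambda_j)\tilde u_j\|=O(e^{-S_0/\hbar})$ for some barrier-height $S_0>0$ independent of $j$; since $\theta$ is supported outside $\widetilde{\mathcal{G}}$, we may read $P$ as $P_\theta$ on such quasimodes. Orthonormalising the family $\{\tilde u_j\}$ and performing a Grushin/Schur reduction, in which the inversion of $P^\sharp_\theta-z$ off the span of $\{\tilde u_j\}$ uses the non-trapping bound of Theorem 2, reduces the counting of eigenvalues of $P_\theta$ in the box to the counting of eigenvalues of an exponentially small perturbation of the diagonal matrix $(\lambda_j\delta_{ij})$. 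For any $S<S_0$ this places at least $\#(\sigma(P^{\mathrm{int}})\cap[a,b])$ eigenvalues of $P_\theta$ in $[a,b]-i[0,e^{-S/\hbar}]$, which matches the upper bound in the limit $\hbar\to0$.

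\emph{Main obstacle.} The hardest step is the Grushin reduction of the lower bound together with the trace-norm estimate in the upper bound: both require a clean semiclassical functional calculus for $P^\sharp_\theta$ in the presence of the non-Euclidean cone-based complex distortion, and both demand that the Agmon exponent produced by the barrier dominate any spurious exponents from the imaginary distortion $\theta=O(\hbar\log\hbar^{-1})$. Controlling these exponential weights cleanly through the distortion region just outside $\widetilde{C}(K,\rho)$ is where the functional pseudodifferential calculus of the present paper, rather than the global translation/dilation calculus of earlier Stark resonance work, becomes essential.
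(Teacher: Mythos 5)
Your route is genuinely different from the paper's. The paper does \emph{not} use a Fredholm determinant nor a Jensen-type count, nor does it pass through a Grushin reduction for the lower bound. Instead it proves both bounds simultaneously by comparing spectral projectors: it decomposes the relevant box into $O(\hbar^{-n})$ clusters $\widetilde{\Omega}_j(\hbar)$ (Lemma 4.2), sets $\Pi_j^{\theta}$ and $\Pi_j^{\mathrm{int}}$ as the contour integrals of the resolvents of $P_{\theta}$ and of the ``flattened'' reference operator $P^{\mathrm{int}}$ over $\partial\widetilde{\Omega}_j$, and then establishes (Proposition 4.2) $\Pi_j^{\theta}=\Pi_j^{\mathrm{int}}+\mathcal{O}(e^{-S/\hbar})$ using the Agmon estimate and the resolvent bound of Proposition 4.1. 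Since the difference is exponentially small while the ranks are $O(\hbar^{-n})$, the ranks coincide for small $\hbar$, and the Weyl law for the self-adjoint $P^{\mathrm{int}}$ finishes the proof. Note also that the paper's $P^{\mathrm{int}}$ is not a Dirichlet realization but a Schr\"odinger operator on all of $\mathbb{R}^n$ with the potential filled up to $b+2\delta$ outside the well; this makes it amenable to the same pseudodifferential machinery as $P$. Your choice of a Dirichlet box is workable but less convenient here.

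The main gap is in your upper bound. A Jensen-type argument from $\log|D(z)|=O(\hbar^{-n})$ gives only $\#\{\text{zeros of }D\}=O(\hbar^{-n})$, i.e.\ the correct \emph{order} of magnitude, but not the sharp constant $(2\pi\hbar)^{-n}\mathrm{Vol}(K_{[a,b]})$. Your determinant $D(z)=\det(I+W(P^{\sharp}_{\theta}-z)^{-1})$ relates the resonances of $P$ to the spectrum of the exterior operator $P^{\sharp}_{\theta}$ (which, being non-trapping, has none in the box), so it cannot by itself produce a comparison with the eigenvalue count of the interior operator. The sharp upper bound in the shape resonance Weyl law comes precisely from pairing each cluster of resonances with a cluster of eigenvalues of the interior reference operator and showing their total multiplicities agree---which is exactly what the contour-integral projector comparison accomplishes. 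Without that pairing your argument gives a matching lower bound but only a rough upper bound, so the asymptotic equality does not follow. One further point worth noting: the paper's Remark 4.2 explicitly flags that $P_{\theta}-P^{\mathrm{int}}$ has an unbounded coefficient (the $\beta x_1$ term), so the naive Agmon-based estimate from the decaying-potential case fails and must be replaced by a commutator trick; your outline does not engage with this issue, though in your determinant setup the perturbation $W$ is compactly supported and so sidesteps it at the cost of losing the interior comparison.
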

Our second main theorem is the resonance expansion theorem for Stark propagators 
(in this paper, the symbol $\mathcal{O}$ for some operator means $\mathcal{O}_{L^2 \to L^2}$ 
unless otherwise stated).
\begin{thm}
Suppose that Assumption 1 and 2 hold. 
Then for any $\psi \in C_c^{\infty}([a,b])$, $\delta>0$ 
and $\chi \in C_b^{\infty}(\mathbb{R}^n)\cap L^{\infty}_{\mathrm{cone}}(\mathbb{R}^n)$, 
there exist $a(\hbar)\in(a-\delta,a)$, $b(\hbar)\in (b,b+\delta)$ and $C>0$ such that for $t\ge C$, 
\begin{align*}
\chi e^{-itP/{\hbar}}\chi \psi(P)=&\sum_{z\in \mathrm{Res}(P(\hbar))\cap\Omega (\hbar)}
\mathrm{Res}_{\zeta=z}e^{-it\zeta/{\hbar}}\chi R_+(\zeta, \hbar)\chi\psi(P)+\mathcal{O}(\hbar^{\infty}),
\end{align*}
where $\Omega (\hbar)=[a(\hbar),b(\hbar)]-i[0, \hbar]$.
\end{thm}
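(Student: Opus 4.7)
The plan is to start from a Stone-type representation
\begin{equation*}
\chi e^{-itP/\hbar}\chi\psi(P) = \frac{1}{2\pi i}\int_\mathbb{R} e^{-itE/\hbar}\psi(E)\,\chi\bigl(R_+(E) - R_-(E)\bigr)\chi\, dE,
\end{equation*}
splitting the integral into an $R_+$ piece $J_+$ and an $R_-$ piece $J_-$, and to deform each contour into its half-plane of (meromorphic) analyticity. For $J_-$ I would replace $\psi(E)$ by an almost analytic extension $\tilde\psi$ and push the contour down to $\{\mathrm{Im}\,E = -T_0\}$ for a small fixed $T_0>0$. Since $R_-(E) = (E-P)^{-1}$ is analytic in the whole lower half-plane with $\|R_-(E)\|\le 1/|\mathrm{Im}\,E|$, and $|e^{-itE/\hbar}| = e^{-tT_0/\hbar}$ on that line, this piece is $\mathcal{O}(e^{-tT_0/\hbar})=\mathcal{O}(\hbar^\infty)$ for $t\ge 1$; the $\bar\partial\tilde\psi$-correction over the strip is $\mathcal{O}(\hbar^\infty)$ as well because $\bar\partial\tilde\psi$ vanishes to infinite order on the real axis.

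For $J_+$ I would use Theorem 1 to identify $\chi R_+(E)\chi = \chi(E-P_\theta)^{-1}\chi$ (with $\theta$ arranged so that the distortion region is outside $\mathrm{supp}\,\chi$) and deform the contour to a rectangular loop $\Gamma_\hbar$ with bottom segment on $\{\mathrm{Im}\,E = -T_0'\}$ and vertical sides at $\mathrm{Re}\,E = a(\hbar)$, $b(\hbar)$. The endpoints $a(\hbar)\in(a-\delta,a)$ and $b(\hbar)\in(b,b+\delta)$ are chosen by a pigeonhole argument using the Weyl law of Theorem 3, so that these vertical sides stay at distance $\gtrsim \hbar^{N_0}$ from every resonance. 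The deformation picks up residues at all resonances inside the deeper box $[a(\hbar),b(\hbar)]-i[0,T_0']$.

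To match the residue sum stated in the theorem (over $\Omega(\hbar) = [a(\hbar),b(\hbar)]-i[0,\hbar]$), I would invoke Assumption 2 together with a resolvent estimate driven by the exterior non-trapping dynamics: in the shape resonance regime every resonance in the energy window has imaginary part of size at most $Ce^{-S/\hbar}\ll \hbar$, so the deeper and shallower boxes contain exactly the same resonances. The contribution of the vertical sides of $\Gamma_\hbar$ is $\mathcal{O}(\hbar^\infty)$ by the infinite-order vanishing of $\tilde\psi$ off the real support of $\psi$, combined with the polynomial resolvent bound just mentioned.

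The decisive step, which I expect to be the main obstacle, is the bound on the bottom of $\Gamma_\hbar$. I need an estimate of the form $\|\chi(E-P_\theta)^{-1}\chi\|\le C\exp(CT_0'/\hbar)/\hbar$ on $\{\mathrm{Im}\,E=-T_0'\}$, so that combining with $|e^{-itE/\hbar}|=e^{-tT_0'/\hbar}$ yields $\mathcal{O}(\hbar^\infty)$ precisely when $t$ exceeds a fixed constant $C$. Theorem 2 provides this bound under $K_{[a,b]}=\emptyset$, which is violated by Assumption 2 because of the interior trapping generating the shape resonances. To recover the bound here I would use the functional pseudodifferential calculus developed in the paper to microlocally split the analysis: the exterior part is handled by the escape-function argument of Theorem 2 (valid since $K_{[a,b]}\cap\{x\in\mathcal{G}^{\mathrm{ext}}\}=\emptyset$), and the interior trapped dynamics is treated by the shape resonance machinery (Dirichlet realisation on $\mathcal{G}^{\mathrm{int}}$ plus exponentially small tunneling coupling), with the matching handled through a Grushin-type problem.
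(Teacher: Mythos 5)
Your single-contour-deformation plan does not match how the paper proves Theorem~4, and the obstacle you flag is a genuine gap that your sketch does not close. The paper proves Theorem~4 by a two-regime argument: Proposition~4.3 handles $t>\hbar^{-n+1-\varepsilon}$ by exactly the Stone/almost-analytic/Green contour argument you describe, but only because for such large $t$ the factor $e^{-t\gamma(\hbar)/\hbar}$ beats the shallow-strip resolvent bound $\alpha(\hbar)^{-1}$ from Proposition~4.1, where $\gamma(\hbar)=M\hbar\log\hbar^{-1}$ (or $M\hbar$). For the fixed-time regime $C\le t\le e^{S/(2\hbar)}$, the paper switches to a completely different argument (Proposition~4.4, Lemma~4.4, Lemma~4.5, Corollary~3.1): Duhamel and the Agmon-type Lemma~4.3 are used to replace the propagator by $\chi_1 e^{-itP^{\mathrm{int}}/\hbar}\psi_1(P^{\mathrm{int}})\chi_1\psi(P)+\chi_2 e^{-itP^{\mathrm{ext}}/\hbar}\psi_1(P^{\mathrm{ext}})\chi_2\psi(P)+\mathcal{O}(\hbar^\infty)$, the exterior term decays in time by Corollary~3.1 (non-trapping), and the interior term is matched to the residue sum by comparing $\Pi_j^\theta$ with $\Pi_j^{\mathrm{int}}$ as in Proposition~4.2. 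No resolvent bound at fixed depth $T_0'$ is ever established or needed.

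Your approach instead requires exactly such a bound $\|\chi(E-P_\theta)^{-1}\chi\|\le Ce^{C/\hbar}/\hbar$ on $\{\mathrm{Im}\,E=-T_0'\}$ with $T_0'>0$ \emph{fixed}. This is not available from anything in the paper: Theorem~2 is inapplicable (its hypothesis $K_{[a,b]}=\emptyset$ is violated under Assumption~2), Proposition~4.1 only gives information for $\mathrm{Im}\,z\ge -\gamma(\hbar)$, and Remark~2.3 gives only the far worse $\exp(C\hbar^{-n}\log\delta(\hbar)^{-1})$ bound. Your proposed remedy, a ``Grushin-type problem,'' is precisely the Helffer--Sj\"ostrand strategy, and it is a substantial piece of work in its own right (essentially reproving their resolvent estimate near the well), not something you can invoke from the toolkit the paper builds. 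A second, independent gap: you assert that the deeper box $[a(\hbar),b(\hbar)]-i[0,T_0']$ and the shallower box $\Omega(\hbar)$ contain the same resonances because all resonances in the window have width $\le Ce^{-S/\hbar}$. The paper's dichotomy (Remark~4.1) only shows $\mathrm{Res}(P)\cap([a-\delta,b+\delta]-i[e^{-S_0/\hbar},\gamma(\hbar)])=\emptyset$; it says nothing about the region $-T_0'\le\mathrm{Im}\,z<-\gamma(\hbar)$, where additional resonances could well exist, and whose presence would both spoil your residue bookkeeping and spoil the bottom-of-contour resolvent bound. So the conclusion is correct, but the proof strategy here has two unresolved holes that the paper's interior/exterior decomposition is specifically designed to sidestep.
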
 
In the decaying potential case, Helffer-Sj\"{o}strand~\cite{HeSjo} and 
Stefanov~\cite{S} \cite{S2} proved Theorem 3.
Nakamura-Stefanov-Zworski~\cite{NSZ} provided a simplified proof of 
Theorem 3 and proved Theorem 4 after the work of Burq-Zworski~\cite{BZ}. 
We follow the general line of \cite{NSZ} with a minor simplification given by direct resolvent 
estimates (Proposition 4.1), which does not depend on the maximal principle technique 
(see Datchev-Vasy~\cite{DV1} \cite{DV2} for related resolvent estimates). 
Note that Theorem 4 is the resonance expansion in the limit $\hbar\to 0$ 
while the resonance expansion in Herbst~\cite{H2} is valid in the limit $t\to \infty$. 

To prove the resonance expansion theorem, we study the pseudodifferential property of $\psi(P)$. 
The symbol class is defined by 
\[S(m)=\{a(\bullet ; \hbar) \in C^{\infty}(T^*\mathbb{R}^n) | |\partial_{x, \xi}^{\alpha} a(x, \xi; \hbar)|
\le C_{\alpha} m(x, \xi) \}.\]
The Weyl quantization is defined by 
\[a^W(x, \hbar D; \hbar)u(x)=(2\pi \hbar)^{-n} 
\iint a(\frac{x+y}{2},\xi;\hbar)e^{i\langle x-y, \xi \rangle /{\hbar}}u(y)dyd\xi. \] 
We set $\sigma(x, \xi; y,\eta)=\langle \xi, y \rangle-\langle \eta, x \rangle$. 
The composition of Weyl symbols is 
\begin{align*}
(a\sharp b)(x, \xi)&=e^{\frac{i\hbar}{2}\sigma(D_x, D_{\xi}; D_y, D_{\eta})}a(x, \xi)b(y, \eta)|_{y=x, \eta=\xi}\\
&\sim \sum _{k\ge 0}\frac{1}{k!}\left(\frac{i\hbar}{2}\sigma(D_x, D_{\xi}; D_y, D_{\eta})\right)^k 
a(x, \xi)b(y, \eta)|_{y=x, \eta=\xi},
\end{align*}
which makes sense also for the formal power series. 
We denote $\mathrm{Op}S(m)=\{a^W(x, \hbar D; \hbar)|a \in S(m)\}$ and 
$S(m_1m_2^{-\infty})=\bigcap_{N>0}S(m_1m_2^{-N})$.

In the case where $\beta=0$, 
the usual functional pseudodifferential calculus implies 
$f(P) \in \mathrm{Op}S(\langle \xi \rangle^{-\infty})$ 
with the principal symbol $f(|\xi|^2+V(x))$ 
for $f\in C_c^{\infty}(\mathbb{R})$ (see~\cite[section 8]{DS}). 
In the case where $\beta>0$, 
this does not hold since $P$ is not elliptic in the semiclassical sense. 
In fact, $f(|\xi|^2+\beta x_1+V(x))\not \in S(m)$ for any tempered $m$ 
since $\partial_{\xi}^{\alpha}f(|\xi|^2+\beta x_1+V(x))$ 
involves the term $2^{|\alpha|}\xi^{\alpha}f^{(|\alpha|)}(|\xi|^2+\beta x_1+V(x))$ 
and $|\xi|$ can be arbitrary large on the support of $f(|\xi|^2+\beta x_1+V(x))$ 
when $x_1\to -\infty$. 
Thus $f(P)\not \in \mathrm{Op}S(m)$ for any tempered $m$.

Nevertheless, we can treat the weighted function $f(P)\chi$ and the difference of functions $f(P_2)-f(P_1)$.
We set $m=|\xi|^2+\langle x_1 \rangle$, where $ \langle x \rangle=(1+|x|^2)^{\frac{1}{2}}$.
Take $w \in  C^{\infty}(\mathbb{R}^n ; \mathbb{R}_{\ge 1})$ depending only on $x_1$ and 
$w=|x_1|$ for $x_1 \le -2$ and $w=1$ for $x_1 \ge -1$.

For the weighted function $f(P)\chi$, we prove the following. 
Suppose $V \in C_b^{\infty}(\mathbb{R}^n; \mathbb{R})$ and 
set $P(\hbar)=-\hbar^2 \Delta + \beta  x_1+V(x)$. 
\begin{thm}
Let $\chi \in S(w^{-\infty}\langle x' \rangle^{-s'})$ for some $s'\in \mathbb{R}$ 
and $f\in C_c^{\infty}(\mathbb{R})$. Then 
$f(P)\chi^W=a^W(x, \hbar D; \hbar)$ with $a\in S(m^{-\infty}\langle x' \rangle^{-s'})$ for $0<\hbar \le 1$. 
Moreover $a$ has an asymptotic expansion $a \sim \sum_{j=0}^{\infty} h^j a_j$ in $S(m^{-\infty}\langle x' \rangle^{-s'})$, 
which is the composition of the formal asymptotic expansion of the symbol of $f(P)$ and $\chi$. 
\end{thm}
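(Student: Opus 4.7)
The approach is the Helffer--Sj\"ostrand almost-analytic functional calculus combined with a parametrix construction adapted to the support of $\chi$. Pick an almost analytic extension $\tilde f\in C_c^{\infty}(\mathbb{C})$ of $f$ with $|\bar\partial\tilde f(z)|\le C_N|\mathrm{Im}\,z|^N$ for each $N$, so that the Helffer--Sj\"ostrand formula gives
\[
f(P)\chi^W=-\frac{1}{\pi}\int_{\mathbb{C}}\bar\partial\tilde f(z)(z-P)^{-1}\chi^W\,dL(z).
\]
The plan is to replace $(z-P)^{-1}\chi^W$ inside the integral by a Weyl operator $B(z)=b(z;\hbar)^W$ with an explicit symbol modulo $\hbar^{\infty}$, to integrate term by term against $\bar\partial\tilde f$, and to absorb the $(z-p)^{-1}$ singularities in $b$ via the rapid vanishing of $\bar\partial\tilde f$ near the real axis.

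To build the parametrix I would solve $(z-p)\sharp b(z;\hbar)\sim\chi$ formally in powers of $\hbar$, where $p(x,\xi)=|\xi|^2+\beta x_1+V(x)$ is the Weyl symbol of $P$. The leading term is $b_0(z)=\chi/(z-p)$, and since $p$ is polynomial of degree $2$ in $\xi$, an induction using the Weyl composition formula yields
\[
b_j(z;x,\xi)=\sum_{k=1}^{2j+1}\frac{Q_{j,k}(x,\xi)}{(z-p(x,\xi))^{k+1}},
\]
where each $Q_{j,k}$ is a universal polynomial expression in finitely many derivatives of $\chi$ and of $p$. Truncating at order $N$ and applying the Weyl calculus gives $(z-P)(b^{(N)})^W=\chi^W+R_N(z)$, where the symbol of $R_N(z)$ is $\hbar^{N+1}$ times a symbol polynomially bounded in $|\mathrm{Im}\,z|^{-1}$.

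Next, set $a_j(x,\xi)=-\frac{1}{\pi}\int_{\mathbb{C}}\bar\partial\tilde f(z)\,b_j(z;x,\xi)\,dL(z)$. The Cauchy--Pompeiu identity $-\frac{1}{\pi}\int\bar\partial\tilde f(z)(z-p)^{-k-1}dL(z)=f^{(k)}(p)/k!$ converts each singular $z$-integral into a compactly $p$-supported expression, giving
\[
a_j(x,\xi)=\sum_{k=1}^{2j+1}\frac{Q_{j,k}(x,\xi)\,f^{(k)}(p(x,\xi))}{k!}.
\]
On $\mathrm{supp}\,a_j$ we have $p(x,\xi)\in\mathrm{supp}\,f$, so $x_1$ is bounded above and $|\xi|^2\le C(1+|x_1|)$; hence $m=|\xi|^2+\langle x_1\rangle\le C\langle x_1\rangle$ there. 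Combined with the $w^{-\infty}\langle x'\rangle^{-s'}$ decay of $\chi$ and its derivatives entering $Q_{j,k}$ (the polynomial growth of $Q_{j,k}$ in $\xi$ is absorbed since $|\xi|\le C\langle x_1\rangle^{1/2}$), this forces $a_j\in S(m^{-\infty}\langle x'\rangle^{-s'})$. Borel summation in $\hbar$ then produces $a\in S(m^{-\infty}\langle x'\rangle^{-s'})$ with $a\sim\sum_{j\ge 0}\hbar^j a_j$, and by construction this is the Weyl composition of the formal symbol of $f(P)$ with $\chi$.

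It remains to show $f(P)\chi^W-a^W$ is residual in $\mathrm{Op}\,S(m^{-\infty}\langle x'\rangle^{-s'})$. Writing this difference, modulo the Borel tail, as $-\frac{1}{\pi}\int\bar\partial\tilde f(z)(z-P)^{-1}R_N(z)\,dL(z)$ and using $\|(z-P)^{-1}\|\le|\mathrm{Im}\,z|^{-1}$, the polynomial $|\mathrm{Im}\,z|^{-1}$ blow-up of $R_N(z)$, and the $|\mathrm{Im}\,z|^N$ vanishing of $\bar\partial\tilde f$, yields an $\mathcal{O}(\hbar^{\infty})$ operator-norm bound; applying the same argument after composition with weights in $(x,\xi)$, or equivalently iterating the parametrix construction, promotes this to the full symbol-class estimate. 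The principal obstacle is precisely the non-elliptic nature of $P$: the parametrix symbols $b_j(z)$ do not themselves lie in $S(m^{-\infty}\langle x'\rangle^{-s'})$ uniformly in $z$, being singular on the energy shell $\{p=\mathrm{Re}\,z\}$ and polynomially growing in $\xi$; only after integration against $\bar\partial\tilde f$ does the Cauchy--Pompeiu identity replace these singular factors by the compactly $p$-supported $f^{(k)}(p)$, and it is there that the compactness of $\mathrm{supp}\,f$ combines with the $x_1$-decay of $\chi$ to produce the required $m^{-\infty}$ decay.
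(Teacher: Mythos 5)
Your overall strategy (Helffer--Sj\"ostrand formula, parametrix $(z-p)\sharp b\sim\chi$, Cauchy--Pompeiu to convert $\int\bar\partial\tilde f\,(z-p)^{-k-1}\,dL(z)$ into $f^{(k)}(p)/k!$, Borel summation) is essentially the same as the paper's, and your computation of the terms $a_j$ and the support/decay argument giving $a_j\in S(m^{-\infty}\langle x'\rangle^{-s'})$ is correct. However, there is a genuine gap in the remainder estimate, and it is precisely the step that carries the technical weight of the paper's proof.

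The gap is the sentence claiming that ``applying the same argument after composition with weights in $(x,\xi)$, or equivalently iterating the parametrix construction, promotes [the operator-norm $\mathcal{O}(\hbar^\infty)$ bound] to the full symbol-class estimate.'' This does not follow. To conclude $a\in S(m^{-\infty}\langle x'\rangle^{-s'})$ you must show that $-\frac{1}{\pi}\int\bar\partial\tilde f(z)\,(z-P)^{-1}R_N(z)\,dL(z)$ is not merely $\mathcal{O}_{L^2\to L^2}(\hbar^\infty)$ but lies in $\hbar^\infty\,\mathrm{Op}\,S(m^{-\infty}\langle x'\rangle^{-s'})$, i.e.\ you need control of all Beals-type commutators $\mathrm{ad}_{l_1^W}\cdots\mathrm{ad}_{l_k^W}\bigl((z-P)^{-1}R_N(z)\bigr)$ with bounds polynomial in $|\mathrm{Im}\,z|^{-1}$. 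Each commutator with a linear form produces a factor $(z-P)^{-1}[\,l^W,P\,](z-P)^{-1}$, and $[\,x_j,P\,]\sim\hbar^2\partial_j$ grows like $\langle\xi\rangle$ in the region $x_1\to-\infty$ where $P$ is not semiclassically elliptic; the naive bound $\|(z-P)^{-1}\|\le|\mathrm{Im}\,z|^{-1}$ gives no handle on these. The paper overcomes this with a genuinely quantitative ingredient that your proposal lacks: the weighted resolvent estimates
\[
\|w^{-k-1}(P-z)^{-1}w^k\|_{L^2\to H^2_\hbar}\lesssim |\mathrm{Im}\,z|^{-1}\bigl(1+\hbar/|\mathrm{Im}\,z|\bigr)^{3k}
\]
(Lemma~5.1), which show that each $\langle\xi\rangle$-growing commutator can be absorbed by shifting a factor of $w$ to the left, at the cost of a controlled polynomial in $|\mathrm{Im}\,z|^{-1}$ that $\bar\partial\tilde f$ still kills. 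This, fed into Beals's theorem, is the content of Proposition~5.1, and the commutator manipulations themselves are justified separately (Appendix~A) because $(P-z)^{-1}$ is a priori only defined on $L^2$. ``Iterating the parametrix'' cannot substitute for this, since the obstruction is not in the parametrix symbols but in the operator $(z-P)^{-1}$ itself, which is not a pseudodifferential operator with a good uniform symbol.

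A secondary omission: the paper only obtains the $m^{-1}$-decay on a first pass and then bootstraps to $m^{-\infty}$ by considering $f_k(t)=(t-i)^kf(t)$ and applying $(P-i)^{-k}$ (again via the weighted resolvent machinery). Your support argument for $a_j$ gives $m^{-\infty}$ decay for the individual asymptotic terms, but to upgrade the membership $a\in S(m^{-\infty}\cdots)$ from $S(m^{-1}\cdots)$ you still need a bootstrap of this type, because the remainder estimate only reaches a fixed power of $m$ at each stage.

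In short: the algebraic/combinatorial part of your argument is right and matches the paper, but the analytic core --- the weighted resolvent estimates and the Beals-theorem argument that turn the operator-norm bound into a symbol-class bound despite the non-ellipticity of $P$ --- is exactly the missing content.
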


We note that Theorem 5 holds true for $\chi^W f(P)$ since it is the adjoint of $\overline{f}(P)\overline{\chi}^W$.
\begin{remark}
In particular, $a_0=f(|\xi|^2+x_1+V(x))\chi(x, \xi)$ and 
$\mathrm{supp}a_j \subset \mathrm{supp}\chi\cap\left(\cup_{k \ge 1} \mathrm{supp}f^{(k)}(|\xi|^2+\beta x_1+V(x))\right)$ 
for $j \ge 1$. 
This implies that $(1-g)(P(\hbar))\chi^W f(P(\hbar))=\hbar^{\infty}\mathrm{Op}S(m^{-\infty})$ 
for $f, g\in C_c^{\infty}(\mathbb{R})$ with $g=1$ near $\mathrm{supp}f$. 
This is used in subsection 4.3. 
\end{remark}

For the difference of functions $f(P_2)-f(P_1)$, we prove the following.
Suppose $V_j \in C_b^{\infty}(\mathbb{R}^n; \mathbb{R})$ and 
set $P_j(\hbar)=-\hbar^2 \Delta + \beta  x_1+V_j(x)$, where $j=1,2$. 
 
\begin{thm}
Suppose $V_2-V_1 \in S(w^{-\infty}\langle x' \rangle^{-s'})$  for some $s'\in \mathbb{R}$ 
and let $f\in C_c^{\infty}(\mathbb{R})$. 
Then $f(P_2)-f(P_1)=a^W(x, \hbar D; \hbar)$ with $a\in S(m^{-\infty}\langle x' \rangle^{-s'})$ for $0<\hbar \le 1$. 
Moreover $a$ has an asymptotic expansion $a \sim \sum_{j=0}^{\infty} h^j a_j$ in $S(m^{-\infty}\langle x' \rangle^{-s'})$, 
which is the difference of the formal asymptotic expansion of the symbols of $f(P_2)$ and $f(P_1)$.
\end{thm}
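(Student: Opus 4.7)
The plan is to follow the strategy of Theorem~5, with the Helffer-Sj\"{o}strand formula combined with the resolvent identity, treating $V_2-V_1$ as playing exactly the role that $\chi^W$ played in Theorem~5. Picking a compactly supported almost-analytic extension $\tilde f\in C_c^\infty(\mathbb{C})$ of $f$, I would start from
\[ f(P_j)=-\frac{1}{\pi}\int_{\mathbb{C}} \bar\partial\tilde f(z)\,(z-P_j)^{-1}\,dL(z) \]
and apply $(z-P_2)^{-1}-(z-P_1)^{-1}=(z-P_2)^{-1}(V_2-V_1)(z-P_1)^{-1}$ to obtain
\[ f(P_2)-f(P_1)=-\frac{1}{\pi}\int_{\mathbb{C}} \bar\partial\tilde f(z)\,(z-P_2)^{-1}(V_2-V_1)(z-P_1)^{-1}\,dL(z). \]
The decay $V_2-V_1\in S(w^{-\infty}\langle x'\rangle^{-s'})$ is what forces the composition to lie in $\mathrm{Op}S(m^{-\infty}\langle x'\rangle^{-s'})$, despite the fact that each individual resolvent is \emph{not} in $\mathrm{Op}S$.

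Next I would construct formal Weyl parametrices $B_j(z)\sim\sum_{k\ge 0}\hbar^k b_{j,k}(z)$ for $z-P_j$, with leading term $b_{j,0}(z)=(z-|\xi|^2-\beta x_1-V_j(x))^{-1}$, on the elliptic region $\{m\ge R\}$ for $R$ large enough (depending on $\mathrm{supp}\,\tilde f$); on the complementary bounded region one uses the non-trapping resolvent bound (Theorem~2) and that $\bar\partial\tilde f$ vanishes to infinite order on $\mathbb{R}$. The crucial computation is then to compose the symbols $B_2(z)\sharp (V_2-V_1)\sharp B_1(z)$. Although neither factor $B_j(z)$ individually lies in any $S(m^{\ell})$ class uniformly in $x_1\to -\infty$, the sandwiched $(V_2-V_1)\in S(w^{-\infty}\langle x'\rangle^{-s'})$ provides rapid decay precisely in the problematic region, so the composition is well-defined and its symbol $C(z,\hbar)$ lies in $S(m^{-\infty}\langle x'\rangle^{-s'})$ with polynomial dependence on $(z-p_j)^{-1}$. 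Integrating $\int \bar\partial\tilde f(z)\,C(z,\hbar)\,dL(z)$ then yields the claimed symbol $a\in S(m^{-\infty}\langle x'\rangle^{-s'})$ with the expected asymptotic expansion, term by term matching the formal difference of the Weyl symbols of $f(P_2)$ and $f(P_1)$.

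For the leading term one recovers $a_0=f(|\xi|^2+\beta x_1+V_2)-f(|\xi|^2+\beta x_1+V_1)$; this lies in $S(m^{-\infty}\langle x'\rangle^{-s'})$ since
\[ f(\lambda+V_2)-f(\lambda+V_1)=(V_2-V_1)\int_0^1 f'(\lambda+V_1+s(V_2-V_1))\,ds \]
inherits the rapid decay from $V_2-V_1$, and this class is preserved under composition with the compactly supported $f',f''\ldots$ An analogous computation handles the $a_j$ for $j\ge1$, ensuring agreement with the formal difference. The main obstacle is the non-ellipticity of $z-P_j$ as $x_1\to-\infty$, which prevents each resolvent from being pseudodifferential on its own; the argument works only because $V_2-V_1$ is interposed between the two resolvents, localizing the composition to a region where the parametrix construction and the symbol calculus are valid. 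This is exactly the mechanism underlying Theorem~5, now applied symmetrically on both sides.
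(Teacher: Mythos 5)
Your starting point is exactly the paper's: the Helffer--Sj\"ostrand formula combined with the resolvent identity, with $V_2-V_1$ interposed between the two resolvents as the mechanism that supplies decay where the individual resolvents fail to be pseudodifferential. But the execution you sketch diverges from the paper's and has several genuine gaps.

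First, you invoke Theorem~2 (the non-trapping resolvent bound) on the ``complementary bounded region.'' Theorem~6 carries no non-trapping hypothesis whatsoever, so Theorem~2 is not available; all one has off the elliptic set is the trivial bound $\|(z-P_j)^{-1}\|\le |\mathrm{Im}z|^{-1}$ together with $\overline{\partial}\widetilde{f}=\mathcal{O}(|\mathrm{Im}z|^{\infty})$. Second, the complementary region $\{m<R\}$ is \emph{not} bounded: $m=|\xi|^2+\langle x_1\rangle$ does not control $x'$, so $\{m<R\}$ extends to infinity in the $x'$ directions and a ``compact region'' argument does not apply as stated.

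The deeper gap is between the formal symbol calculus and the actual membership of $f(P_2)-f(P_1)$ in $\mathrm{Op}\,S(m^{-\infty}\langle x'\rangle^{-s'})$. Composing formal parametrices $B_2(z)\sharp (V_2-V_1)\sharp B_1(z)$ produces a formal series with good apparent $m$-decay (more $(p-z)^{-1}$ factors in $r_j$ as $j$ grows), but the full symbol is not that series; it is the series plus a remainder, and remainder control requires showing the operator is a pseudodifferential operator in the first place with a symbol in the claimed class. The paper does this through Lemma~5.1 (weighted resolvent estimates giving $\|w^{-k-1}(P-z)^{-1}w^{k}\|$-type bounds with controlled $|\mathrm{Im}z|^{-1}$ powers), Beals's theorem (Proposition~5.1), and a uniformity statement in $z$ (Proposition~5.2). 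Moreover, a single pass of this machinery gives only $m^{-2}\langle x'\rangle^{-s'}$ decay (one power of $m^{-1}$ per resolvent), not $m^{-\infty}$; the paper reaches $m^{-\infty}$ by a bootstrap, applying the partial result to $g(t)=(t+i)f(t)$ and peeling off $(P_2+i)^{-1}$ with the identity $f(P_2)-f(P_1)=\big(f(P_2)-(P_2+i)^{-1}(P_1+i)f(P_1)\big)+(P_2+i)^{-1}(V_1-V_2)f(P_1)$, the last term being handled by Theorem~5. Your sketch claims the $m^{-\infty}$ decay directly and skips both the Beals-type justification and the bootstrap. Finally, identifying the asymptotic expansion with the formal difference also needs an argument (the paper localizes with compactly supported $\chi$ and invokes Theorem~5), which you assert but do not supply. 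Your computation of $a_0$ is correct in spirit, though to place $a_0$ in $S(m^{-\infty}\langle x'\rangle^{-s'})$ one should also note that $m\lesssim w$ on the support of $f^{(k)}(p_1+s(V_2-V_1))$, so the $w^{-\infty}$ decay from $V_2-V_1$ dominates $m^{-\infty}$.
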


\begin{corollary}
Suppose that the assumption in Theorem 6 holds with $s'>n-1$. 
Then the derivative of the spectral shift function $\xi'$ defined by 
$\langle \xi', f\rangle =\mathrm{tr}(f(P_2)-f(P_1))$ for $f\in C_c^{\infty}(\mathbb{R})$ has an asymptotic expansion 
$\xi'\sim (2\pi \hbar)^{-n}\sum_{j\ge 0}\hbar^j\tau_j$ in $\mathscr{D}'(\mathbb{R})$ (the space of distributions), where 
$\langle \tau_0, f \rangle =\iint(f(|\xi|^2+\beta x_1+V_2)- f(|\xi|^2+\beta x_1+V_1))dxd\xi$ and $\tau_1=0$.
\end{corollary}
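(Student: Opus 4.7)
The plan is to feed Theorem 6 into the standard Weyl trace identity and then read off the first two coefficients. By Theorem 6, $f(P_2)-f(P_1)=a^W(x,\hbar D;\hbar)$ with $a\in S(m^{-\infty}\langle x'\rangle^{-s'})$ and $a\sim\sum_{j\ge 0}\hbar^j a_j$, the expansion being the difference of the Weyl-symbol expansions of $f(P_2)$ and $f(P_1)$. The first step is to check that $a^W$ is trace class and that the standard identity
\[\mathrm{tr}(a^W)=(2\pi\hbar)^{-n}\int_{T^*\mathbb{R}^n} a(x,\xi;\hbar)\,dx\,d\xi\]
applies. For integrability one computes, for $N$ large,
\[\int_{\mathbb{R}^n}(|\xi|^2+\langle x_1\rangle)^{-N}d\xi = C_N\langle x_1\rangle^{n/2-N},\]
which is integrable over $x_1\in\mathbb{R}$ once $N>n/2+1$; combined with $s'>n-1$, which gives $\langle x'\rangle^{-s'}\in L^1(\mathbb{R}^{n-1})$, one obtains $a\in L^1(T^*\mathbb{R}^n)$. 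This together with the symbol regularity yields trace class, and truncating the expansion at any order $N$ produces a remainder of size $\mathcal{O}(\hbar^N)$ in trace norm.

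Applying the trace identity termwise then gives
\[\langle\xi',f\rangle=\mathrm{tr}(f(P_2)-f(P_1))=(2\pi\hbar)^{-n}\sum_{j\ge 0}\hbar^j\tau_j(f)+\mathcal{O}(\hbar^\infty),\qquad \tau_j(f):=\int a_j\,dx\,d\xi,\]
with the error uniform when $f$ runs over bounded subsets of $C_c^\infty(\mathbb{R})$, which is exactly what the asymptotic expansion in $\mathscr{D}'(\mathbb{R})$ requires. By the leading-symbol statement in Theorem 6, $a_0(x,\xi)=f(|\xi|^2+\beta x_1+V_2)-f(|\xi|^2+\beta x_1+V_1)$, giving the stated formula for $\tau_0$.

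It remains to show $\tau_1=0$, for which I would exploit the absence of an $\hbar^1$ contribution in the Weyl symbol of $f(p^W)$ when $p$ is a real classical symbol. Writing $P_j=p_j^W$ with $p_j=|\xi|^2+\beta x_1+V_j$, the Weyl parametrix $q^{(j)}\sharp(p_j-z)\sim 1$ determines $q_0^{(j)}=(p_j-z)^{-1}$ and, at order $\hbar$,
\[q_1^{(j)}(p_j-z)+\tfrac{i}{2}\{q_0^{(j)},p_j\}=0;\]
since $q_0^{(j)}$ is a function of $p_j$, the Poisson bracket vanishes and $q_1^{(j)}\equiv 0$. Via the Helffer--Sj\"ostrand representation $f(P_j)=-\frac{1}{\pi}\int\bar\partial\tilde f(z)(P_j-z)^{-1}\,dL(z)$, the $\hbar^1$ coefficient in the Weyl symbol of $f(P_j)$ therefore vanishes, so the expansion of the difference satisfies $a_1\equiv 0$ and $\tau_1=0$. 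I expect the integrability and trace-class bookkeeping to be the only slightly delicate point; once Theorem 6 is in hand the rest is classical.
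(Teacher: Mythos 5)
Your proof is correct and follows what is surely the intended argument (the paper states the corollary without proof, as a direct consequence of Theorem~6): the trace formula for Weyl quantized operators together with the integrability in $S(m^{-\infty}\langle x'\rangle^{-s'})$ when $s'>n-1$ yields the expansion with $\tau_j(f)=\int a_j\,dx\,d\xi$, and the vanishing of the subprincipal term in the Weyl parametrix of $(P_j-z)^{-1}$ (since $q_0^{(j)}=(p_j-z)^{-1}$ is a function of $p_j$, so $\{q_0^{(j)},p_j\}=0$) forces $a_1\equiv 0$. One small bookkeeping point: the trace-class remainder after truncating at order $N$ is $\mathcal{O}(\hbar^{N-n})$ rather than $\mathcal{O}(\hbar^{N})$, because of the $(2\pi\hbar)^{-n}$ in the trace formula; this is harmless for the conclusion but should be stated.
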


We can also discuss the spectral shift function by the formula (\cite{RW}) 
$\mathrm{tr}(f(P)-f(P_0))=-\mathrm{tr}((\partial_{x_1}V)f(P))$ and Theorem 5, 
where $P_0=-\hbar^2\Delta+\beta x_1$. 
Dimassi-Petkov~\cite{DP} and Dimassi-Fujii\'{e}~\cite{DF} proved many properties of the spectral shift function by 
constructing an elliptic operator $\widetilde{P}$ such that 
$-\mathrm{tr}((\partial_{x_1}V)f(P))=-\mathrm{tr}((\partial_{x_1}V)f(\widetilde{P}))+\mathcal{O}(\hbar^{\infty})$.

\begin{remark}
The trace class property and finite terms in the asymptotic expansion can be discussed 
even if we only assume $V_1-V_2 \in S(w^{-M}\langle x' \rangle^{-s'})$ for large $M$ and $s'>n-1$.
\end{remark}

This paper is organized as follows.
In section 2, we define the Stark resonances in various manners and in particular prove Theorem 1.
In section 3, we prove the non-trapping resolvent estimate for the Stark Hamiltonian (Theorem 2). 
In section 4, we study the shape resonance model in the Stark effect 
and prove the Weyl-type asymptotics (Theorem 3) and the resonance expansion (Theorem 4). 
In section 5, we prove the functional pseudodifferential calculus in the Stark effect (Theorem 5, 6). 
In the Appendix, we justify the commutator calculations of the Stark resolvent in section 5.

\section{Definition of resonances}
Throughout this section, we assume Assumption 1. 
\subsection{Complex distortion}
We prove Theorem 1 in this subsection.
Recall from section 1 that $F=-(1+K^{-2})^{\frac{1}{2}}\mathrm{dist}\left(\bullet, \widetilde{C}(K, \rho)\right)\ast \phi$, 
$v(x)=(v_1(x), \dots, v_n(x))=\partial F(x)$, $\Phi_{\theta}(x)=x+\theta v(x)$, 
$U_{\theta}f(x)=\left(\mathrm{det}\Phi_{\theta}'(x)\right)^{\frac{1}{2}}f(\Phi_{\theta}(x))$, 
and $P_{\theta}(\hbar)=U_{\theta}P(\hbar)U_{\theta}^{-1}$. 
We first note that $F\in C^{\infty}(\mathbb{R}^n; \mathbb{R})$ is concave since $\widetilde{C}(K, \rho)$ is convex 
and the convolution with a positive function preserves convexity. 
We have $v_1(x)\ge 1$ on $C(K, \rho+1)^c$ by the coefficient $(1+K^{-2})^{\frac{1}{2}}$ in the 
definition of $F$. 
Moreover $(x_1)_- \partial^{\alpha}v_j$ is bounded for $|\alpha|\ge 1$. 
This follows from the replacement of $C(K, \rho)$ by $\widetilde{C}(K, \rho)$ for $|\alpha| = 1$ 
and from the mollification for $|\alpha|\ge 2$. 
We also note that $ \Phi_{\theta}'=I+\theta \partial^2 F$ is symmetric.
A calculation (using the invariance of Laplace-Beltrami operator) shows that
\begin{align*}
P_{\theta}(\hbar)
&=-\hbar^2 \sum_{i,j} g_{\theta}^{-\frac{1}{4}}\partial_i g_{\theta}^{\frac{1}{2}}g_{\theta}^{ij}
\partial_j g_{\theta}^{-\frac{1}{4}}+\beta x_1+\beta\theta v_1+V(\Phi_{\theta}(x))\\
&=-\hbar^2 \sum_{i,j} \partial_i g_{\theta}^{ij}\partial_j
+\hbar^2 r_{\theta}(x)+\beta x_1+\beta\theta v_1+V(\Phi_{\theta}(x)),
\end{align*}
where $(g_{\theta}^{ij})=(\Phi_{\theta}')^{-2}$, $g_{\theta}=\mathrm{det}(\Phi_{\theta}')^{2}$ 
and $r_{\theta}=-\sum_{i,j}g_{\theta}^{-\frac{1}{4}}
(\partial_i(g_{\theta}^{\frac{1}{2}}g_{\theta}^{ij}\partial_j g_{\theta}^{-\frac{1}{4}}))$.
This expression defines $P_{\theta}(\hbar)$ as a differential operator 
for complex $\theta$ with small $|\mathrm{Re}\theta|$ and $|\mathrm{Im}\theta|<(1+K^{-2})^{-\frac{1}{2}}\delta_0$. 
We denote the semiclassical principal symbol of $P_{\theta}(\hbar)$ by 
\[p_{\theta}=\langle (I+\theta F'')^{-1}\xi, (I+\theta F'')^{-1}\xi \rangle
+\beta x_1+\beta \theta v_1+V(\Phi_{\theta}(x)).\]
An advantage of our definition of $P_{\theta}(\hbar)$ is as follows:
\begin{lemma}
For $\mathrm{Im}\theta \le 0$, $\mathrm{Im}(-\hbar^2 \sum_{i,j} \partial_i g_{\theta}^{ij}\partial_j)\le 0$ 
in the form sense. 
If $\rho>0$ is large and $\mathrm{Im}\theta \le 0$, then $\mathrm{Im}p_{\theta}\le 
-\frac{1}{2}\beta|\mathrm{Im}\theta|v_1(x) \le 0$ on $T^*\mathbb{R}^n$.
\end{lemma}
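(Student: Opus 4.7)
The plan is to treat both assertions by diagonalizing $(I+\theta F'')^{-2}$. Since $F$ is concave (noted just above the lemma), $F''(x)$ is real symmetric with all eigenvalues $\lambda_i(x)\le 0$. For $\mathrm{Im}\theta\le 0$ and $|\mathrm{Re}\theta|$ small, each $z_i:=1+\theta\lambda_i$ satisfies $\mathrm{Re}z_i>0$ and $\mathrm{Im}z_i=\lambda_i\mathrm{Im}\theta\ge 0$, so $\arg z_i\in[0,\pi/2]$, hence $\arg z_i^{-2}\in[-\pi,0]$ and $\mathrm{Im}(z_i^{-2})\le 0$. Conjugating by the real orthogonal matrix diagonalizing $F''$, this shows that the real symmetric matrix $\mathrm{Im}g_\theta:=\mathrm{Im}(I+\theta F'')^{-2}$ is negative semi-definite. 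This is the single ingredient I would use in both parts.

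\textbf{First assertion.} After integration by parts I would write the form as
\[\left\langle -\hbar^2\sum_{i,j}\partial_i g_\theta^{ij}\partial_j u,\,u\right\rangle = \hbar^2\int\sum_{i,j}g_\theta^{ij}(\partial_j u)\overline{(\partial_i u)}\,dx,\]
and then, using that $g_\theta$ is complex symmetric together with the symmetry of $\mathrm{Re}(\eta_j\overline{\eta_i})$ and the antisymmetry of $\mathrm{Im}(\eta_j\overline{\eta_i})$ in $(i,j)$, reduce the imaginary part of the integrand to
\[(\mathrm{Im}g_\theta)(\mathrm{Re}(\partial u),\mathrm{Re}(\partial u))+(\mathrm{Im}g_\theta)(\mathrm{Im}(\partial u),\mathrm{Im}(\partial u)),\]
which is pointwise $\le 0$ by the negative semi-definiteness established in the plan.

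\textbf{Second assertion.} I would split $\mathrm{Im}p_\theta$ into the four terms in the definition of $p_\theta$. The kinetic symbol is $\xi^T(I+\theta F'')^{-2}\xi$ for real $\xi$, with imaginary part $\xi^T(\mathrm{Im}g_\theta)\xi\le 0$. The term $\beta x_1$ is real, and $\mathrm{Im}(\beta\theta v_1)=-\beta|\mathrm{Im}\theta|v_1(x)$ since $v_1$ is real and non-negative. For the potential, $V$ is real on $\mathbb{R}^n$, so the line-integral representation
\[V(\Phi_\theta(x))-V(\mathrm{Re}\Phi_\theta(x))=i\int_0^1\mathrm{Im}\Phi_\theta(x)\cdot(\partial V)(\mathrm{Re}\Phi_\theta(x)+it\mathrm{Im}\Phi_\theta(x))\,dt\]
is valid because the segment sits inside the analyticity domain of $V$ (using $|v|\le(1+K^{-2})^{1/2}$, the constraint $|\mathrm{Im}\theta|<(1+K^{-2})^{-1/2}\delta_0$, and enlarging $\rho$ so that $\mathrm{Re}\Phi_\theta(\mathrm{supp}\,v)\subset C(K_0,\rho_0)^c$); it yields $|\mathrm{Im}V(\Phi_\theta(x))|\le|\mathrm{Im}\theta|\,|v(x)|\sup|\partial V|$. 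By Assumption 1, $\partial V\to 0$ as $\mathrm{Re}x\to\infty$ in the analyticity region, so taking $\rho$ large makes $\sup_{\mathrm{supp}\,v}|\partial V|$ arbitrarily small. Combined with the pointwise bound $|v(x)|\le(1+K^{-2})^{1/2}v_1(x)$ on $\mathrm{supp}\,v$ (from the coefficient $(1+K^{-2})^{1/2}$ in $F$ and the cone geometry $-\partial_1\mathrm{dist}(\cdot,\widetilde C)\ge(K/\sqrt{1+K^2})|\partial\mathrm{dist}(\cdot,\widetilde C)|$ on $\widetilde C^c$, which underlies the paper's bound $v_1\ge 1$ on $C(K,\rho+1)^c$), this gives $|\mathrm{Im}V(\Phi_\theta(x))|\le\tfrac{1}{2}\beta|\mathrm{Im}\theta|v_1(x)$ for $\rho$ large enough. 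Adding the four contributions produces $\mathrm{Im}p_\theta\le-\tfrac{1}{2}\beta|\mathrm{Im}\theta|v_1\le 0$. The main obstacle is precisely this potential term: one must absorb $|\mathrm{Im}V(\Phi_\theta)|$ into the dominant $-\beta|\mathrm{Im}\theta|v_1$ from the linear Stark piece, which hinges on the coefficient $(1+K^{-2})^{1/2}$ in $F$ (making $v_1$ pointwise comparable with $|v|$ on $\mathrm{supp}\,v$) together with the freedom to enlarge $\rho$ (pushing $\mathrm{supp}\,v$ into the region where Assumption 1 makes $|\partial V|$ as small as needed).
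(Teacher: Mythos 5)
Your proposal is correct and follows essentially the same route as the paper: diagonalize $F''$ to get negative semi-definiteness of $\mathrm{Im}\,g_\theta$ (hence both the form bound and the kinetic-symbol bound), bound $\mathrm{Im}\,V(\Phi_\theta)$ by $|\mathrm{Im}\,\theta|\,|v|\cdot\sup|\partial V|$ and make the sup small by enlarging $\rho$, and absorb it into the $-\beta|\mathrm{Im}\,\theta|v_1$ term using $v_1\ge c|v|$. You supply useful extra detail that the paper leaves terse — the explicit real/imaginary decomposition showing why the matrix bound passes to the sesquilinear form, and the cone-geometry derivation of $v_1\ge (1+K^{-2})^{-1/2}|v|$ — but the argument is the same.
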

\begin{proof}
Since $F$ is concave, $\mathrm{Im}(\langle (I+\theta F'')^{-1}\xi, (I+\theta F'')^{-1}\xi \rangle)\le 0$ 
by diagonalizing $F''$. 
This also implies the first statement. 
We have $|\mathrm{Im}V(\Phi_{\theta}(x))|\lesssim |\mathrm{Im}\theta|\sup|\partial V(y) \cdot v(x)|$, 
where $y$ ranges over a small complex neighborhood of $x$. 
Thus for large $\rho$, $|\mathrm{Im}V(\Phi_{\theta}(x))|\le \varepsilon |\mathrm{Im}\theta| ||v(x)|, \varepsilon \ll 1$. 
Since $v_1(x)\ge c|v(x)|$, we have 
$\mathrm{Im}(\beta\theta v_1+V(\Phi_{\theta}(x)))\le -\frac{1}{2}\beta|\mathrm{Im}\theta|v_1(x) \le 0$.
\end{proof}

We next study the operator-theoretic property of $P_{\theta}$. 
Since $(P_{\theta}u_1, u_2)=(u_1, P_{\overline{\theta}}u_2)$ for $u_1, u_2\in C_c^{\infty}$, $P_{\theta}(\hbar)$ 
is closable on $C_c^{\infty}$ and the closure is also denoted by $P_{\theta}(\hbar)$. 
We first prove the analyticity of $P_{\theta}$ with respect to $\theta$. 
\begin{prop} 
For $0<\hbar\le1$, $P_{\theta}$ is an analytic family of type (A) with respect to $\theta$ with 
$|\mathrm{Im}\theta|<\delta_0(1+K^{-2})^{-\frac{1}{2}}$ and $|\mathrm{Re}\theta|$ small.
That is, $D(P_{\theta})=D(P)$ and $P_{\theta}u$ is analytic with respect to $\theta$ for any $u \in D(P)=D(P_{\theta})$. 
Thus, $(P_{\theta}-z)^{-1}$ is analytic with respect to $\theta$. 
Moreover, $P_{\theta}^*=P_{\overline{\theta}}$.
\end{prop}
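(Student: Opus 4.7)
The plan is to realize $P_\theta$ as a Kato analytic perturbation of $P_0 = P$. From the explicit formula, write $P_\theta = P + Q(\theta)$ with
\[ Q(\theta) = -\hbar^2\sum_{i,j}\partial_i(g_\theta^{ij}-\delta^{ij})\partial_j + \hbar^2 r_\theta + \beta\theta v_1 + (V(\Phi_\theta(x))-V(x)); \]
each coefficient is smooth, bounded with bounded derivatives, supported in $\mathrm{supp}\,v$, and holomorphic in $\theta$ throughout the strip $|\mathrm{Im}\,\theta|<\delta_0(1+K^{-2})^{-\frac{1}{2}}$. Holomorphy of the $V$-term uses Assumption 1 together with the inclusion $\mathrm{supp}\,v\subset C(K_0,\rho_0)^c$, which is guaranteed by the choice $K>K_0$ and $\rho$ large. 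Furthermore $Q(0)=0$, and $P$ is essentially self-adjoint on $C_c^\infty$ with domain $D(P)$.

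For $|\theta|$ small, the coefficients of $Q(\theta)$ have small sup-norms, and the key estimate (see the obstacle below) is that $Q(\theta)$ is $P$-bounded with relative bound strictly less than one. The Kato-Rellich theorem then yields $P_\theta$ closed on $D(P)$, and the holomorphic $\theta$-dependence of the coefficients makes $\theta\mapsto P_\theta u$ analytic in $L^2$ for each $u\in C_c^\infty$; by density this extends to all $u\in D(P)$, giving an analytic family of type (A) for $|\theta|$ small. The adjoint identity $P_\theta^*=P_{\bar\theta}$ is established first on $C_c^\infty$ by integration by parts, using the symmetries $\overline{g_\theta^{ij}}=g_{\bar\theta}^{ij}$, $\overline{r_\theta}=r_{\bar\theta}$, and $\overline{V(\Phi_\theta(x))}=V(\Phi_{\bar\theta}(x))$ (the last from real-analyticity of $V$), and then extended to the closures. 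To reach the full strip in $\mathrm{Im}\,\theta$ while keeping $|\mathrm{Re}\,\theta|$ small, I would iterate: at any $\theta_0$ where $D(P_{\theta_0})=D(P)$ has already been established, the closed graph theorem makes the graph norms of $P$ and $P_{\theta_0}$ equivalent, so the $P$-boundedness of $Q(\theta)-Q(\theta_0)$ with small bound (for $\theta$ close to $\theta_0$) translates into $P_{\theta_0}$-boundedness with small bound; Kato-Rellich at $\theta_0$ extends the conclusion to a neighborhood of $\theta_0$, and a compactness argument along a path from $0$ in the strip covers any given $\theta$. Analyticity of $(P_\theta-z)^{-1}$ is then automatic from the general theory of type-(A) families.

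The hard part is the $P$-boundedness of the second-order piece of $Q(\theta)$. Because $P$ carries the unbounded Stark term $\beta x_1$, it fails to be semiclassically elliptic on $\mathbb{R}^n$ and $D(P)$ is not globally $H^2$; consequently $-\hbar^2\partial_i\partial_j$ is not $P$-bounded by itself, and a naive triangle-inequality bound of $Q(\theta)$ is insufficient. The workaround must exploit the specific structure of the coefficients $g_\theta^{ij}-\delta^{ij}$: they are smooth, of size $O(|\theta|)$, and supported in $\mathrm{supp}\,v\subset\widetilde C(K,\rho)^c$. Combined with the $H^2_{\mathrm{loc}}$ regularity of $D(P)$ (from ellipticity of $-\hbar^2\Delta$ and boundedness of $V$) together with standard Stark a priori estimates controlling $-\hbar^2\partial_i\partial_j u$ against $Pu$ modulo bounded multiplication operators on the support of the perturbation, this yields the required $\|Q(\theta)u\|\le\epsilon(\theta)(\|Pu\|+\|u\|)$ with $\epsilon(\theta)\to 0$ as $\theta\to 0$.
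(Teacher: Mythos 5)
Your overall strategy coincides with the paper's: establish relative $P$-boundedness of the perturbation with small bound, apply a Kato--Rellich argument, then iterate along a path in $\theta$ to reach the full strip (the paper does the iteration by applying Kato--Rellich to the block operator $\bigl(\begin{smallmatrix} 0 & P_{\overline{\theta}} \\ P_{\theta} & 0 \end{smallmatrix}\bigr)$, which also delivers $P_{\theta}^* = P_{\overline{\theta}}$ in one stroke, whereas you derive the adjoint relation by a separate integration-by-parts and closure argument; that is a cosmetic difference). The genuine gap is in the key estimate that you yourself flag as ``the hard part.'' You invoke a ``standard Stark a priori estimate controlling $-\hbar^2\partial_i\partial_j u$ against $Pu$ modulo bounded multiplication operators on the support of the perturbation,'' but no such estimate holds: the support of the perturbation (essentially $\widetilde{C}(K,\rho)^c$) contains points with $x_1 \to -\infty$, and there $P$ is not semiclassically elliptic. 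A coherent state concentrated near $x_1 = -R$ with momentum $|\xi| \sim \sqrt{\beta R}$ lies on the energy shell, so $\|Pu\| \sim \|u\|$ while $\|\hbar^2\Delta u\| \sim R\|u\|$; hence $\hbar^2\Delta$ is not controlled by $\|Pu\|+\|u\|$ on that region, and the support condition alone buys nothing.

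The ingredient you are missing is the decay of the Hessian of the distortion: because $\widetilde{C}(K,\rho)$ rounds off the tip of the cone and $F$ is mollified, one has $(x_1)_-\partial^{\alpha}v_j$ bounded for $|\alpha|\ge 1$, so the coefficients $g_{\theta}^{ij}-\delta^{ij}$, $r_{\theta}$, etc., decay like $w(x_1)^{-1}$ where $w\sim\langle(x_1)_-\rangle$ (with all derivatives). This is what makes the relative bound true: the paper estimates $\|Q(\theta)u\|\lesssim|\theta|\,\|w^{-1}u\|_{H^2_{\hbar}}$, then uses ellipticity of $\sum g_{\theta}^{ij}\xi_i\xi_j$ in $\xi$ to bound $\|w^{-1}u\|_{H^2_{\hbar}}$ by $\|w^{-1}P_{\theta}u\| + \|x_1w^{-1}u\| + \|u\|$, and finally controls $\|x_1w^{-1}u\|$ (which is only dangerous for $x_1\to+\infty$, where $w=1$) via an elliptic estimate on $\{x_1\ge 1\}$, a region where $P_{\theta}$ \emph{is} elliptic since $|\xi|^2+\beta x_1 + V$ is bounded below by a large constant there. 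Replacing your appeal to a nonexistent a priori estimate by the weight $w$ and the decay of $\partial v$ is therefore essential, not a technical refinement; without it the relative-bound claim, and hence the whole Kato--Rellich step, fails.
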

\begin{proof}
We prove $\|(P_{\theta}-P_{\theta'})u\|\le C |\theta-\theta'||1+\theta|^2 \|P_{\theta}u\|+C_{\theta}\|u\|$ 
for $u \in C_c^{\infty}$, 
where $C$ is independent of $\theta$ with $|\mathrm{Re}\theta|$ small.
We only have to estimate 
$\|(\hbar^2 \sum_{i,j} \partial_i g_{\theta}^{ij}\partial_j-\hbar^2 \sum_{i,j} \partial_i g_{\theta'}^{ij}\partial_j)u \|$. 
Take $w \in  C^{\infty}(\mathbb{R}^n ;\mathbb{R}_{\ge 1})$ depending only on $x_1$ and 
$w=|x_1|$ for $x_1 \le -2$ and $w=1$ for $x_1 \ge -1$. 
Since $(x_1)_- \partial^{\alpha}v_j$ is bounded for $|\alpha|\ge 1$ and 
$\mathrm{Re}\sum g_{\theta}^{ij}\xi_i\xi_j \ge c|1+\theta|^{-2}|\xi|^2$ for small $|\mathrm{Re}\theta|$, 
\begin{align*}
&\|(\hbar^2 \sum_{i,j} \partial_i g_{\theta}^{ij}\partial_j-\hbar^2 \sum_{i,j} \partial_i g_{\theta'}^{ij}\partial_j)u \| \\
&\le C |\theta-\theta'|\|w^{-1}u\|_{H^2_{\hbar} }\\
&\le C  |\theta-\theta'||1+\theta|^2\|w^{-1}\hbar^2 \sum_{i,j} \partial_i g_{\theta}^{ij}\partial_ju\|
+C|\theta-\theta'|\|w^{-1}u\|\\
&\le  C|\theta-\theta'||1+\theta|^2\|x_1w^{-1}u\|+C|\theta-\theta'||1+\theta|^2\|w^{-1}P_{\theta}u\|+ C_{\theta}\|u\|. 
\end{align*}
The first term can be estimated as follows. 
We take $\chi (x_1)$ such that $\chi  (x_1)=0$ for $x_1\le1$ and $\chi(x_1)=1$ for $x_1 \ge2$. Then 
$\|x_1w^{-1}u\|\le C\|x_1 \chi u\|+C\|u\|\le C\|P_{\theta}\chi u\|+C\|u\|
\le C \|[P_{\theta}, \chi]u\|+C\|P_{\theta}u\|+C\|u\|
\le C\|P_{\theta}u\|+C_{\theta}\|u\|$, 
where the last inequality follows from the standard elliptic estimate.

Repeated applications of Kato-Rellich theorem (see \cite[section X.2]{RS}) to 
$\begin{pmatrix} 0 & P_{\overline{\theta}} \\ P_{\theta} & 0 \end{pmatrix}$
show that $P_{\theta}$ is closed on $D(P_{\theta})=D(P)$ 
and $P_{\overline{\theta}} = P_{\theta}^*$. 
This is valid for small $|\mathrm{Re}\theta|$ and $|\mathrm{Im}\theta|<(1+K^{-2})^{-\frac{1}{2}}\delta_0$ 
since $\lim_{k\to \infty}a_k=\infty$ if $a_{k+1}=a_k+\frac{c}{(1+a_k)^2}$ and $a_0=0$.

Since $P_{\theta}u$ is analytic with respect to $\theta$ for $u\in C_c^{\infty}$, 
an approximation argument shows that $P_{\theta}u$ is analytic with respect to $\theta$ for $u \in D(P)$. 
This implies that $(P_{\theta}-z)^{-1}$ is analytic with respect to $\theta$ by the general theory 
(see~\cite[section 7.1, section 7.2]{K}). 
\end{proof}
We next prove the discreteness of the spectrum of $P_{\theta}$ in $\{\mathrm{Im}z>\beta\mathrm{Im}\theta\}$. 
\begin{prop}
Fix $\theta$ with $-\delta_0(1+K^{-2})^{-\frac{1}{2}}<\mathrm{Im}\theta<0$ and $|\mathrm{Re}\theta|$ small. 
Then for $0<\hbar\le 1$, 
$P_{\theta}-z$ is an analytic family of Fredholm operators with index 0 on $\{\mathrm{Im}z>\beta\mathrm{Im}\theta\}$ 
and invertible for $\mathrm{Im}z \gg 1$. 
Thus $(P_{\theta}-z)^{-1}$ is meromorphic on $\{\mathrm{Im}z>\beta\mathrm{Im}\theta\}$ 
with finite rank poles.
\end{prop}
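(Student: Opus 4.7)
The proof will proceed in three steps: (i) establishing invertibility of $P_\theta - z$ for $\mathrm{Im}\,z \gg 1$, (ii) proving $P_\theta - z$ is Fredholm of index $0$ throughout the strip $\{\mathrm{Im}\,z > \beta\,\mathrm{Im}\,\theta\}$, and (iii) concluding meromorphy of $(P_\theta - z)^{-1}$ with finite-rank poles via the analytic Fredholm theorem.

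For step (i) I would upgrade the pointwise symbol estimate of Lemma 2.1 to a form bound on $D(P_\theta) = D(P)$. Using the form-sense assertion of Lemma 2.1 for the second-order part, the pointwise bound on $\mathrm{Im}(\beta\theta v_1 + V(\Phi_\theta))$, and absorbing the bounded correction $\hbar^2 r_\theta$, one obtains $\mathrm{Im}\langle P_\theta u, u\rangle \le C_\hbar \|u\|^2$ for $u \in C_c^\infty$; density via Proposition 2.1 extends this to $D(P_\theta)$.  A Cauchy--Schwarz numerical-range argument then gives $\|(P_\theta - z)u\| \ge (\mathrm{Im}\,z - C_\hbar)\|u\|$ for $\mathrm{Im}\,z > C_\hbar$; running the same for $(P_\theta - z)^{\ast} = P_{\bar\theta} - \bar z$ (Proposition 2.1) yields surjectivity, hence invertibility.

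For step (ii) I would construct a two-sided semiclassical parametrix for $P_\theta - z$ modulo compact operators on $L^2(\mathbb{R}^n)$.  The central observation is that the set in $T^{\ast}\mathbb{R}^n$ where $|p_\theta - z|$ fails to be bounded below is relatively compact.  Inside the cone $C(K,\rho)$ one has $P_\theta = P$ and $p_\theta = |\xi|^2 + \beta x_1 + V$ is real, so $|p_\theta - z| \ge |\mathrm{Im}\,z|$ whenever $\mathrm{Im}\,z \ne 0$; the characteristic set at $\mathrm{Im}\,z = 0$ is bounded in $(x,\xi)$ since $\beta x_1 \le \mathrm{Re}\,z + \|V\|_\infty$, $x_1 \ge -\rho$ inside the cone, and $|x'| \le K(x_1 + \rho)$.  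In the outer region $\{v_1 \ge 1\}$, the sharpened Lemma 2.1 (its proof allows the constant $\tfrac{1}{2}$ to be replaced by any $1-\varepsilon$ provided $\rho$ is taken large enough at the construction stage) gives $|\mathrm{Im}(p_\theta - z)| \ge c > 0$ down to $\mathrm{Im}\,z > -(1-\varepsilon)\beta|\mathrm{Im}\,\theta|$.  Inverting the symbol away from this compact bad set and patching with a compactly supported correction yields $E$ with $(P_\theta - z)E - I$ and $E(P_\theta - z) - I$ compact on $L^2(\mathbb{R}^n)$ (Rellich), so $P_\theta - z$ is Fredholm.  The index is $0$ by local constancy of the Fredholm index on the connected strip together with step (i), and step (iii) follows from the analytic Fredholm theorem.

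\textbf{Main obstacle.}  The crux is step (ii), and specifically identifying the non-elliptic set as relatively compact uniformly as $\mathrm{Im}\,z \to \beta\,\mathrm{Im}\,\theta$.  This requires combining the real coercivity from the cone geometry ($\beta x_1 \to +\infty$ along the cone axis, together with the transverse bound $|x'|\le K(x_1+\rho)$) on the inner side with the imaginary coercivity from the distortion on the outer side; to extend uniformly down to the boundary line one needs the refined form of Lemma 2.1 with constant arbitrarily close to $1$, which is consistent with the paper's insistence that $\rho$ be chosen sufficiently large in the construction of $P_\theta$.
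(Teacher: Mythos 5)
The overall structure (invertibility for $\mathrm{Im}\,z \gg 1$ via a numerical-range bound, Fredholmness throughout the strip, then analytic Fredholm theory) matches the paper, and your steps (i) and (iii) are correct and essentially what the paper does. The refinement of Lemma 2.1 you describe (replacing $\tfrac12$ by $1-\varepsilon$ by taking $\rho$ large) is also valid, and your identification of the ``bad set'' (where $|p_\theta - z|$ is not bounded below) as relatively compact in $T^*\mathbb{R}^n$ is correct. However, there is a genuine gap in step (ii): the claim that a one-step parametrix $E = (\chi/(p_\theta - z))^W$ produces remainders $(P_\theta - z)E - I$ and $E(P_\theta - z) - I$ that are \emph{compact} does not hold for fixed $\hbar$. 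The leading error $-(1-\chi)^W$ is compactly supported, hence compact, and the first-order correction $\frac{i\hbar}{2}\{p_\theta,\chi\}/(p_\theta - z)$ is also compactly supported; but from second order on, the symbolic composition produces terms supported on $\{\chi = 1\}$ that are $\mathcal{O}(\hbar^2)$ in size but do \emph{not} decay as $|x'| \to \infty$ at fixed $x_1$ and $\xi$ in the exterior region, since there $|p_\theta - z|$ is merely bounded below (not growing), $\partial_{x_1}p_\theta \to \beta$, and the coefficients of $P_\theta$ do not decay in $x'$. A bounded, non-decaying symbol in $S(1)$ does not give a compact operator. Neither ``Rellich'' nor any weighted Sobolev compactness applies because the remainder has neither compact $x$-support nor decay in the full phase space.

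The distinction matters precisely because the proposition is stated for \emph{all} $0<\hbar\le1$, a point the paper emphasizes (``we do not require that $\hbar$ is small''; see also Remark 2.2). For $0<\hbar\ll1$ your argument can be repaired: the non-compactly-supported remainder terms are $\mathcal{O}(\hbar)$ in operator norm, and ``$I + \text{compact} + \text{small}$'' is still Fredholm of index $0$, so the parametrix route goes through semiclassically. But for, say, $\hbar = 1$ the error is neither small nor compact, and the argument stops. The paper circumvents this by an entirely different mechanism: it adds a compact complex absorbing perturbation $\widetilde{P_\theta} = P_\theta - iM\phi(x/M)\phi(\hbar D/M)^2\phi(x/M)$, proves a uniform bound $\|(\widetilde{P_\theta}-z)^{-1}\|\le C$ (for $0<\hbar\le1$ and $M$ large) by a delicate region-by-region estimate combining the cone geometry, the coercivity of Lemma 2.1, and a Beals-type argument for an auxiliary operator $Q$, and then factors $P_\theta - z = \bigl(1 + iM\phi\phi^2\phi\,(\widetilde{P_\theta}-z)^{-1}\bigr)(\widetilde{P_\theta}-z)$, which is (identity plus compact) times invertible, hence Fredholm of index $0$. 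To complete your proposal for all $\hbar\in(0,1]$ you would need to replace the compactness claim by something like this compact-absorbing-perturbation construction, or otherwise establish directly that the inverse modulo compacts exists uniformly in $\hbar$.
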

\begin{remark}
In fact, $P_{\theta}-z$ is invertible for $\mathrm{Im}z \ge 0$ by Theorem 1, Corollary 2.1 and Remark 2.4.
\end{remark}
\begin{proof}
Set $\widetilde{P_{\theta}}=P_{\theta}-iM\phi(x/M)\phi(\hbar D/M)^2\phi(x/M)$, 
where $M>1$, $0\le \phi \in C_c^{\infty}(\mathbb{R}^n)$, $\phi=1$ near $\{|x|\le 1/3\}$, 
$\mathrm{supp}\phi \subset \{|x|\le 1\}$ and $\int \phi =1$. 
Take $\Omega \Subset \{\mathrm{Im}z>\beta\mathrm{Im}\theta\}$. 
We prove that $\|(\widetilde{P_{\theta}}-z)^{-1}\|\le C$ for $0<\hbar\le 1$ and $z\in \Omega$ for large $M>1$.

Take $1\ll R \ll M$ and let $\chi_1, \chi_2\in C_b^{\infty}(\mathbb{R}^n)$ be 
cutoff functions near $C(K,R)$ and $C(K,R)^c$ respectively. 
We first note that $-\mathrm{Im}(\chi_2 u, (\widetilde{P_{\theta}}-z)\chi_2 u)
\ge c\|\chi _2 u\|^2 -\mathcal{O}(R^{-1})\|u\|^2$ since 
$\mathrm{Im}(\beta\theta v_1+V(\Phi_{\theta}(x))-z)\le -c$ near $C(K,R)^c$ by Lemma 2.1 
and $r_{\theta}(x)=\mathcal{O}(R^{-1})$ near $C(K,R)^c$. 
Thus we can take large $R>0$ such that $\|(\widetilde{P_{\theta}}-z)\chi_2 u\|\ge c\|\chi _2 u\|$. 

We next prove $\|(\widetilde{P_{\theta}}-z)\chi_1 u \|\ge c\|\chi _1u\|$ for large $M>R$. 
We take small $\varepsilon>0$ and set $\chi_{j,M}=\tau_j(G(x)/M)$, where 
$\tau_1 \in C_b^{\infty}(\mathbb{R})$ is a cutoff near 
$(-\infty, \varepsilon]$, $\tau_2 \in C_b^{\infty}(\mathbb{R})$ is a cutoff near $[2\varepsilon, \infty)$ and 
$G(x)=(1+K^{-2})^{\frac{1}{2}}\mathrm{dist}\left(\bullet,  \widetilde{C}(K, R)\right)\ast \phi$, 
where $\phi$ is as above. Then $\chi_{1,M},\chi_{2,M}\in C_b^{\infty}$, 
$\|\partial^{\alpha} \chi_{j,M}\|_{\infty}=\mathcal{O}(M^{-1})$ for $|\alpha|\ge 1$, 
$\chi_{1,M}=1$ near $\mathrm{supp}\partial \chi_j$, 
$\chi_{2,M}=1$ on $C(K,R+2\varepsilon M)^c$, $\chi_{2,M}=0$ on $\mathrm{supp}\chi_1$ and
$\mathrm{supp}\chi_{1,M}\cap\mathrm{supp}\chi_{2,M}=\emptyset$.
Take $w \in  C^{\infty}(\mathbb{R}^n; \mathbb{R}_{\ge 1})$ depending only on $x_1$ 
and $w=|x_1|$ for $x_1 \le -2$ and $w=1$ for $x_1 \ge -1$.
We set $Q=\widetilde{P_{\theta}}-z+\beta\chi_{2,M}w-iM\chi_{2,M}$. 
We now prove that $Q^{-1}:H_{\hbar}^k \to H_{\hbar}^{k+2}$ is uniformly bounded with respect to large $M>1$ 
for any $k$, where $H_{\hbar}^k=\langle \hbar D \rangle^{-k}L^2$.

Denote the seminorms in $S(\langle \xi \rangle^{N})$ by 
$|a|_{N,\alpha}=\sup_{x,\xi}|\partial_{x,\xi}^{\alpha}a|/\langle \xi \rangle^{N}$. 
We set $Q=q^W$. 
Then $q=\sum  g_{\theta}^{ij}\xi_i\xi_j+\beta x_1-iM\phi(x/M)^2\phi(\xi/M)^2+\beta\chi_{2,M}w-iM\chi_{2,M}+k_M(x, \xi)$, 
where $k_M$ is bounded in $S(1)$ with respect to $M>1$.
We note that for $|\alpha|\ge1$, $\sup_{M>1}|M\phi(x/M)^2\phi(\xi/M)^2|_{0,\alpha}<\infty$, 
$\sup_{M>1}|\chi_{2,M}w|_{0,\alpha}<\infty$ and $\sup_{M>1}|iM\chi_{2,M}|_{0,\alpha}<\infty$ since 
$\mathrm{supp}\partial \chi_{2,M}\subset \{x_1>-CM\}$ and 
$\|\partial^{\alpha} \chi_{2,M}\|_{\infty}=\mathcal{O}(M^{-1})$ for $|\alpha|\ge 1$. 
We also recall that $\mathrm{Re}\sum  g_{\theta}^{ij}\xi_i\xi_j\ge c|\xi|^2$ for some $c>0$ and 
$\mathrm{Im}\sum g_{\theta}^{ij}\xi_i\xi_j\le 0$.
Thus $|q^{-1}|_{-2,\alpha}\le C\sup_{x,\xi}B_\kappa(x,\xi)$ for $|\alpha|=\kappa$ if we set   
\[B_\kappa=\langle \xi \rangle^{\kappa+2}/|c|\xi|^2+\beta x_1-iM\phi(x/M)^2\phi(\xi/M)^2+\chi_{2,M}\beta w-iM\chi_{2,M}+k_M|^{\kappa+1}.\]

We have $\mathbb{R}^n=\{|x|<M/3\}\cup C(K,R+2\varepsilon M)^c\cup \{x_1>cM\}$ for some $c>0$ since $\varepsilon$ is small. 
Take large $C_1>0$. 
For $|x|<M/3, |\xi|<C_1M^{1/2}$, we see $B_\kappa\le CM^{(\kappa+2)/2}/M^{\kappa+1}=CM^{-\kappa/2}$ in view of $iM\phi(x/M)^2\phi(\xi/M)^2$.    
For $|x|<M/3, |\xi|>C_1M^{1/2}$, we see $B_\kappa\le C|\xi|^{\kappa+2}/(c|\xi|^2-\beta M+k_M)^{\kappa+1}\le C|\xi|^{\kappa+2}/|\xi|^{2\kappa+2}
=C|\xi|^{-\kappa}\le CM^{-\kappa/2}$ since $c|\xi|^2\gg\beta M$ by $C_1\gg 1$. 
For $x \in C(K,R+2\varepsilon M)^c$, we see 
$B_\kappa\le C\langle \xi \rangle^{\kappa+2}/|c|\xi|^2-iM+k_M|^{\kappa+1}$ in view of $\chi_{2,M}\beta w-iM\chi_{2,M}$. 
This is bounded by $CM^{-\kappa/2}$ by considering $|\xi|\lessgtr C_1M^{1/2}$.
For $x_1>cM$, we see $B_\kappa(x,\xi)\le C\langle \xi\rangle^{\kappa+2}/(|\xi|^2+M+k_M)^{\kappa+1}\le CM^{-\kappa/2}$ 
by considering $|\xi|\lessgtr C_1M^{1/2}$. 

Thus we have proved $|q^{-1}|_{-2,\alpha}=\mathcal{O}(M^{-|\alpha|/2})$.
Thus we see that $(q^{-1})^W :H_{\hbar}^k \to H_{\hbar}^{k+2}$ is uniformly bounded with respect to $M>1$. 
We also see that $\lim_{M\to \infty}q_1=0$ in $S(1)$ if $q^{-1}\sharp q=1+q_1$ 
since $\partial_{x,\xi} q$ is bounded in $S(\langle \xi \rangle^2)$ with respect to $M$ and
$\lim_{M\to \infty}\partial_{x,\xi} q^{-1}=0$ in $S(\langle \xi \rangle^{-2})$. 
Thus $(1+q_1^W)^{-1}:H_{\hbar}^k \to H_{\hbar}^{k}$ is uniformly bounded with respect to large $M>1$. Thus 
$Q^{-1}:H_{\hbar}^k \to H_{\hbar}^{k+2}$ is uniformly bounded with respect to large $M>1$ (in fact 
$Q^{-1}\in \mathrm{Op}S(\langle \xi \rangle ^{-2})$ uniformly for large $M$ by Beals's theorem).
Thus $\|\chi _1u\|=\|Q^{-1}Q\chi _1u\|\le C\|Q\chi _1u\|=C\|(\widetilde{P_{\theta}}-z)\chi _1u\|$ 
since $\chi_{2,M}=0$ on $\mathrm{supp}\chi_1$. 
Thus we have 
\[\|u\|\le \sum \|\chi_j u\| \le C\sum \|(\widetilde{P_{\theta}}-z)\chi_j u\|
\le C\|(\widetilde{P_{\theta}}-z)u\|+C\sum \|[\widetilde{P_{\theta}}, \chi_j]u\|.\]

We finally estimate $\|[\widetilde{P_{\theta}}, \chi_j]u\|$.
Since $\chi_{1,M}=1$ near $\mathrm{supp}\partial\chi_j$ and 
$\partial_{\xi}(\widetilde{p_{\theta}}-z)$ is bounded in $S(\langle \xi \rangle)$ with respect to $M>1$, 
we have 
\begin{align*}
\|[\widetilde{P_{\theta}}, \chi_j]u\|&
\le \|[\widetilde{P_{\theta}}, \chi_j]\chi_{1,M}u\|+\|[\widetilde{P_{\theta}}, \chi_j](1-\chi_{1,M})u\| \\
&\le C\|\chi_{1,M}u\|_{H^1_{\hbar}}+\mathcal{O}(M^{-\infty})\|u\|_{L^2}.
\end{align*} 
Since $Q^{-1}:H_{\hbar}^{-1} \to H_{\hbar}^{1}$ is uniformly bounded with respect to large $M>1$ we have 
$ \|\chi_{1,M} u\|_{H^1_{\hbar}}\le C\|Q\chi_{1,M} u\|_{H^{-1}_{\hbar}}$. 
Since $\mathrm{supp}\chi_{1,M}\cap\mathrm{supp}\chi_{2,M}=\emptyset$, we have 
$\|Q\chi_{1,M} u\|_{H^{-1}_{\hbar}}=\|(\widetilde{P_{\theta}}-z)\chi_{1,M}u\|_{H_{\hbar}^{-1}} 
\le \|(\widetilde{P_{\theta}}-z)u\|_{L^2}+\|[\widetilde{P_{\theta}},\chi_{1,M}] u\|_{H_{\hbar}^{-1}}$.
Since $\partial_{\xi}(\widetilde{p_{\theta}}-z)$ is bounded in $S(\langle \xi \rangle)$ with respect to $M>1$ and 
$\partial\chi_{1,M} =\mathcal{O}(M^{-1})$ in $S(1)$, 
we have $\|[\widetilde{P_{\theta}},\chi_{1,M}] u\|_{H_{\hbar}^{-1}}\le CM^{-1}\|u\|_{L^2}$.
Thus we have $\| (\widetilde{P_{\theta}}-z)u\|\ge c \|u\|$ for large $M>1$ and $0<\hbar\le 1$. 
 
We also have $\| (\widetilde{P_{\theta}}-z)^*u\|\ge c \|u\|$ for large $M>1$ since 
$(\widetilde{P_{\theta}}-z)^*=P_{\overline{\theta}}+iM\phi(x/M)\phi(\hbar D/M)^2\phi(x/M)-\overline{z}$ 
by Proposition 2.1. 
Banach's closed range theorem thus implies that $\widetilde{P_{\theta}}-z$ is invertible and 
$\|(\widetilde{P_{\theta}}-z)^{-1}\|\le C$ for $0<\hbar\le 1$ and $z\in \Omega$ for large $M>1$.
Since $M\phi(x/M)\phi(\hbar D/M)^2\phi(x/M)$ is compact, 
$P_{\theta}-z=(1+iM\phi(x/M)\phi(\hbar D/M)^2\phi(x/M)(\widetilde{P_{\theta}}-z)^{-1})(\widetilde{P_{\theta}}-z)$ 
is Fredholm with index 0. 
Finally, $P_{\theta}-z_0$ is invertible for $\mathrm{Im}z_0 \gg 1$ 
since $-\mathrm{Im}(u, (P_{\theta}-z)u)\ge \mathrm{Im}z_0\| u\|^2 -C\hbar^2\|u\|^2$ by Lemma 2.1.
\end{proof}
\begin{remark}
The proof will be simplified if we assume that $0<\hbar\ll 1$.
\end{remark}
\begin{proof}[Proof of Theorem 1]
Take any $0<\delta_1<\delta_0$. 
Take $\chi_1, \chi_2 \in L_{\mathrm{cone}}^{\infty}(\mathbb{R}^n)$ such that $\chi_j \not = 0$ on some open sets. 
Construct $P_{\theta}$ outside $\mathrm{supp}\chi_j$ and $C(K, \rho)$ with $(1+K^{-2})^{-\frac{1}{2}}\delta_0>\delta_1$. 
Then $\chi_1 R_{+}(z) \chi_2=\chi_1 U_{\theta}R_{+}(z)U_{\theta}^{-1} \chi_2
=\chi_1 (z-P_{\theta})^{-1} \chi_2$ for real $\theta$ and $\mathrm{Im}z>0$. 
The right hand side has an analytic continuation with respect to $\theta$ with $|\mathrm{Im}\theta|<\delta_1$ 
and $|\mathrm{Re}\theta|$ small by Proposition 2.1.
This in turn implies that the left hand side has a meromorphic continuation to 
$\mathrm{Im}z>-\beta \delta_1$ by Proposition 2.2. 
If $z\not\in \sigma_d(P_{\theta})$, this is analytic near $z$. 
Suppose that $z\in \sigma_d(P_{\theta})$.
Then the multiplicity of the pole $z$ of $\chi_1 R_{+}(z) \chi_2$ is 
given by $\mathrm{rank}\frac{1}{2\pi i}\oint_z \chi_1 R_{+}(\zeta)\chi_2d\zeta
=\mathrm{rank}\frac{1}{2\pi i}\oint_z \chi_1 (\zeta-P_{\theta})^{-1}\chi_2d\zeta
=\mathrm{rank}\chi_1 \Pi_z^{\theta} \chi_2$, where 
$\Pi_z^{\theta}=\frac{1}{2\pi i}\oint_z  (\zeta-P_{\theta})^{-1}d\zeta$ is 
the generalized eigenprojection of $P_{\theta}$ at $z$. 
We have $ (P_{\theta}-z)^k \Pi_z^{\theta}=0$ for some $k$ by the general theory of closed operators. 
Then the repeated applications of the unique continuation theorem for second order elliptic operators imply that 
$\mathrm{rank}\chi_1 \Pi_z^{\theta}=\mathrm{rank}\Pi_z^{\theta}$. 
Since $(\Pi_z^{\theta})^*=\Pi_{\overline{z}}^{\overline{\theta}}$, the same argument for the adjoint implies that 
$\mathrm{rank}\chi_1 \Pi_z^{\theta}=\mathrm{rank}\chi_1 \Pi_z^{\theta}\chi_2$. 
This proves that the definition of resonances is independent of $\chi_1$, $\chi_2$ 
and the multiplicity is given by $m_z=\mathrm{rank}\Pi_z^{\theta}$.
\end{proof}

\begin{remark}
The facts that $\|(\widetilde{P_{\theta}}-z)^{-1}\|=\mathcal{O}(1)$ for $z \in \Omega$ and 
$\|(P_{\theta}-z_0)^{-1}\|=\mathcal{O}(1)$ if $\mathrm{Im}z_0>0$ in the proof of Proposition 2.2 imply 
the following general upper bound on the number of the resonances; 
if $\Omega \Subset \{\mathrm{Im}z>-\beta\delta_0\}$, then 
\[\#(\mathrm{Res}(P(\hbar))\cap \Omega)=\mathcal{O}(\hbar^{-n})\] 
and the following a priori resolvent bound; 
if $z\in\Omega \Subset \{\mathrm{Im}z>\beta\mathrm{Im}\theta\}$, $0<\delta(\hbar)<c<1$ and 
$\mathrm{dist}(z, \mathrm{Res}(P(\hbar)))\ge \delta(\hbar)$, then  
\[ \|(P_{\theta}-z)^{-1}\|\le C \exp(C\hbar^{-n}\log\frac{1}{\delta(\hbar)}).\]
See \cite[section 7.2]{DZ} for the proof.
\end{remark}

\subsection{Meromorphic continuations of matrix elements} 
The resonances are also described by meromorphic continuations of the matrix elements of the resolvent.
\begin{prop}
The matrix element of the resolvent $(f, R_{+}(z)g)$ has a meromorphic continuation to 
$ \mathrm{Im}z>-\beta \delta_0$ for any $f, g \in L_{\mathrm{cone}}^{2}$. 
For $z$ with $ \mathrm{Im}z>-\beta \delta_0$, $z$ is a resonance of $P$ 
if and only if $z$ is a pole of $(f, R_{+}(z)g)$ for some $f, g \in  L_{\mathrm{cone}}^{2}$ and 
the multiplicity $m_z$ is given by the maximal number $k$ such that
there exist $ f_1, \dots ,f_k , g_1, \dots, g_k \in L_{\mathrm{cone}}^{2}$ 
with  $\det(\frac{1}{2\pi i}\oint_z(f_i, R_{+}(\zeta)g_j)d\zeta)_{i,j=1}^{k} \not =0$. 

Moreover, for any nonempty open bounded $U \subset \mathbb{R}^n$ 
and an orthonormal basis $\{f_i\}$ of $L^2(U)$, 
$m_z=\mathrm{rank}(\frac{1}{2\pi i}\oint_z(f_i, R_{+}(\zeta)f_j)d\zeta)_{i,j=1}^{\infty}$.
\end{prop}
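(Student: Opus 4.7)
The plan is to imitate the proof of Theorem 1, reducing matrix elements of $R_+(z)$ to those of $(z-P_\theta)^{-1}$ and then extracting the multiplicity from the Riesz projector via unique continuation.

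Given $f,g \in L^2_{\mathrm{cone}}$ with supports inside some $C(K_1,\rho_1)$, I would first construct $P_\theta$ with $K$ large (so that $(1+K^{-2})^{-1/2}$ is arbitrarily close to $1$) and $\rho$ large enough that $\mathrm{supp}\,f \cup \mathrm{supp}\,g \subset \widetilde{C}(K,\rho)$. Since $v\equiv 0$ on $\widetilde{C}(K,\rho)$, the diffeomorphism $\Phi_\theta$ is literally the identity near these supports, so $U_\theta f=f$ and $U_\theta g=g$ for every admissible complex $\theta$. For real $\theta$ small and $\mathrm{Im}\,z>0$, unitarity of $U_\theta$ gives
\[
(f, R_+(z) g) = (U_\theta f,(z-P_\theta)^{-1}U_\theta g) = (f,(z-P_\theta)^{-1} g).
\]
By Proposition 2.1 the right side is analytic in $\theta$ and by Proposition 2.2 it is meromorphic in $z$ on $\{\mathrm{Im}\,z>\beta\mathrm{Im}\,\theta\}$ with finite-rank poles. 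Pushing $\mathrm{Im}\,\theta$ down toward $-\delta_0(1+K^{-2})^{-1/2}$ and taking $K$ large yields the claimed meromorphic continuation to $\{\mathrm{Im}\,z>-\beta\delta_0\}$.

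To read off the multiplicity I would use the Riesz projector $\Pi_z^\theta$: the contour integral gives
\[
\frac{1}{2\pi i}\oint_z (f_i,R_+(\zeta) g_j)\,d\zeta = (f_i,\Pi_z^\theta g_j),
\]
whose rank is at most $\mathrm{rank}\,\Pi_z^\theta=m_z$. For the reverse inequality I would reuse the unique-continuation argument from the proof of Theorem 1: generalized eigenfunctions of $P_\theta$ satisfy $(P_\theta-z)^k u=0$ and so cannot vanish on any open set, which combined with the density of $L^2_{\mathrm{cone}}$ in $L^2$ yields $\mathrm{rank}(\chi_1\Pi_z^\theta\chi_2)=m_z$ whenever $\chi_1,\chi_2$ are nonzero on open sets. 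Applied to indicators of small open subsets of the allowed cone, this produces $f_1,\dots,f_{m_z}$ and $g_1,\dots,g_{m_z}$ in $L^2_{\mathrm{cone}}$ making the matrix invertible, and in particular shows that $z$ is a pole of some such matrix element iff $z\in\sigma_d(P_\theta)=\mathrm{Res}(P)$.

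For the last assertion, with $\chi=\mathbf{1}_U\in L^\infty_{\mathrm{cone}}$ and $\{f_i\}$ an orthonormal basis of $L^2(U)$, the infinite matrix $((f_i,\Pi_z^\theta f_j))$ is the representation in $\{f_i\}$ of the compression $P_U\Pi_z^\theta J_U:L^2(U)\to L^2(U)$, where $J_U$ is extension by zero and $P_U$ is restriction; its rank therefore equals $\mathrm{rank}(\chi\Pi_z^\theta\chi)=m_z$ by the same unique-continuation argument. The main obstacle is packaging the unique-continuation step for the infinite-dimensional basis while simultaneously being free to adapt the construction of $P_\theta$ to the supports of the given test vectors; the flexibility in $K$ and $\rho$ makes this routine, since Theorem 1 already guarantees that the answer $m_z$ is independent of the choice of $P_\theta$.
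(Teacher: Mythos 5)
Your proof is correct and rests on the same two ingredients as the paper's: analytic continuation of matrix elements via the distortion, and the unique-continuation argument identifying $\mathrm{rank}(\chi\Pi_z^\theta\chi)$ with $m_z$. The paper is terser because it invokes Theorem 1 directly — it takes $\Pi_z^{\chi_1,\chi_2}=\frac{1}{2\pi i}\oint_z \chi_1 R_+(\zeta)\chi_2\,d\zeta$, which already has $\mathrm{rank}\,\Pi_z^{\chi_1,\chi_2}=m_z$ by Theorem 1, and simply observes that $(f,\Pi_z^{\chi_1,\chi_2}g)=\frac{1}{2\pi i}\oint_z(\overline{\chi_1}f,R_+(\zeta)\chi_2 g)\,d\zeta$ for generic $f,g\in L^2$, so matrix elements against $L^2_{\mathrm{cone}}$ vectors suffice; you instead re-derive the continuation $(f,R_+(z)g)=(f,(z-P_\theta)^{-1}g)$ from scratch using $U_\theta f=f$ and then reprove the rank identity. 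Both routes work; leaning on Theorem 1 avoids repeating the analyticity/Fredholm bookkeeping, but your self-contained version makes the mechanism for $L^2_{\mathrm{cone}}$ vectors more transparent. One small remark on your last paragraph: there is no real "obstacle" to the orthonormal-basis assertion, since the rank of the (compressed) finite-rank operator $\mathbf{1}_U\Pi_z^\theta\mathbf{1}_U$ on $L^2(U)$ equals the rank of its matrix in any orthonormal basis of $L^2(U)$; the unique-continuation step is done once for the operator, not per basis element.
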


\begin{proof}
Take $\chi_1, \chi_2$ as in Theorem 1 and 
set $\Pi_z^{\chi_1, \chi_2}=\frac{1}{2\pi i}\oint_z \chi_1 R_{+}(\zeta) \chi_2 d\zeta$. 
Then $m_z=\mathrm{rank}\Pi_z^{\chi_1, \chi_2}$. 
We have  
$(f,\Pi_z^{\chi_1, \chi_2}g)=(f, \frac{1}{2\pi i}\oint_z \chi_1 R_{+}(\zeta) \chi_2 d\zeta g)
=\frac{1}{2\pi i}\oint_z(\overline{\chi_1}f, R_{+}(\zeta)\chi_2g)d\zeta$. 
The Proposition easily follows from this.
\end{proof}

\begin{corollary}
$\mathrm{Res}(P)\cap \mathbb{R}=\sigma_{pp}(P)$.
\end{corollary}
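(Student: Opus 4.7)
The plan is to prove the two inclusions separately, combining Proposition 2.3 with the spectral theorem for the self-adjoint operator $P$.

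For the inclusion $\sigma_{pp}(P)\subseteq\mathrm{Res}(P)\cap\mathbb{R}$, I would take $z\in\sigma_{pp}(P)$ with a normalized eigenfunction $u\in L^2$; self-adjointness of $P$ forces $z\in\mathbb{R}$. The spectral decomposition gives
\[
(f,R_+(\zeta)g)=\frac{(f,u)(u,g)}{\zeta-z}+G(\zeta),\qquad \mathrm{Im}\zeta>0,
\]
with $G$ analytic near $z$. Unique continuation for the elliptic operator $P-z$ prevents $u$ from vanishing on any open set, so I can pick $\chi_1,\chi_2\in L^2_{\mathrm{cone}}$ supported in a small open ball on which $u\neq 0$ so that $(\chi_1,u)(u,\chi_2)\neq 0$. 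Then $(\chi_1,R_+(\zeta)\chi_2)$ has a pole at $z$ and Proposition 2.3 identifies $z$ as a resonance.

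For the reverse inclusion $\mathrm{Res}(P)\cap\mathbb{R}\subseteq\sigma_{pp}(P)$, I would take $z_0\in\mathrm{Res}(P)\cap\mathbb{R}$. By Proposition 2.3 there exist $f,g\in L^2_{\mathrm{cone}}$ for which $(f,R_+(\zeta)g)$ has a pole at $z_0$. Polarization expresses this sesquilinear quantity as a linear combination of the four diagonal quantities $(h,R_+(\zeta)h)$ with $h\in\{f+g,\,f-g,\,f+ig,\,f-ig\}\subset L^2_{\mathrm{cone}}$, so at least one such diagonal matrix element also has a pole at $z_0$. Writing $\mu_h=(h,E(\cdot)h)$ for the spectral measure of $P$ and using $R_+(\zeta)=(\zeta-P)^{-1}$, a Poisson-kernel Stieltjes inversion yields
\[
\lim_{\varepsilon\to 0+}\varepsilon\,(h,R_+(z_0+i\varepsilon)h)=-i\,\mu_h(\{z_0\}),
\]
the non-atomic part of $\mu_h$ contributing zero by dominated convergence. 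On the other hand, if the meromorphic continuation has a Laurent expansion $\sum_{k\ge -N}a_k(\zeta-z_0)^k$ near $z_0$, then $\varepsilon(h,R_+(z_0+i\varepsilon)h)$ tends to $-ia_{-1}$ when $N\le 1$ and diverges in modulus when $N\ge 2$. Boundedness of the spectral expression by $\|h\|^2$ forces $N\le 1$ and $a_{-1}=\mu_h(\{z_0\})$; the existence of a pole at $z_0$ thus yields $\mu_h(\{z_0\})>0$, i.e., $E(\{z_0\})h\neq 0$, so $z_0\in\sigma_{pp}(P)$.

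The main obstacle is this second inclusion and in particular the identification of the Laurent residue with the atomic mass $\mu_h(\{z_0\})$, since the analytic continuation from above is a purely algebraic object while $\mu_h$ comes from boundary values. The remaining ingredients (meromorphy of cutoff resolvents from Proposition 2.3, elliptic unique continuation, and the polarization identity) are standard or already established in the paper, so the only delicate step is verifying that the continuous part of $\mu_h$ indeed contributes zero to $\varepsilon(h,R_+(z_0+i\varepsilon)h)$, which is the classical Stieltjes inversion for the Borel transform of a finite positive measure.
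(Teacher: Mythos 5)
Your argument follows exactly the paper's route: both inclusions come from Proposition 2.3 together with the Stieltjes-inversion identity $\lim_{\varepsilon\to 0+}\varepsilon\,(f,(P-\lambda-i\varepsilon)^{-1}g)=i(f,E_{\{\lambda\}}g)$, which is precisely the formula the paper invokes. The reverse inclusion as you write it (polarize, bound the pole order by the $\mathcal{O}(\varepsilon^{-1})$ resolvent bound, identify the residue with $\mu_h(\{z_0\})$) is correct.

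One small imprecision in the forward inclusion: the atom of the spectral measure at $z$ contributes $(f,E_{\{z\}}g)/(\zeta-z)$ rather than $(f,u)(u,g)/(\zeta-z)$, so if the eigenspace has dimension $\ge 2$ your condition $(\chi_1,u)(u,\chi_2)\neq 0$ does not by itself guarantee $(\chi_1,E_{\{z\}}\chi_2)\neq 0$; moreover the analyticity of $G$ near a possibly embedded eigenvalue $z$ is not immediate and itself rests on the inversion formula. Both points disappear if you take $\chi_1=\chi_2=h$ with, say, $h=\mathbf{1}_B\bar u$ for a ball $B$ on which $u\not\equiv 0$ (unique continuation): then $(h,u)\neq 0$ forces $E_{\{z\}}h\neq 0$, so $(h,E_{\{z\}}h)=\|E_{\{z\}}h\|^2>0$, and the inversion formula together with the meromorphy from Proposition 2.3 shows $(h,R_+(\zeta)h)$ has a simple pole at $z$.
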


\begin{proof}
This follows from Proposition 2.3 
and the formula $\lim_{\varepsilon \to +0}\varepsilon(f, (P-\lambda-i\varepsilon)^{-1}g)=i(f, E_{\{\lambda\}}g)$.
\end{proof}

\begin{remark}
The absence of embedded eigenvalues $\sigma_{pp}(P)=\emptyset$ for the Stark Hamiltonian was proved by Avron-Herbst~\cite{AH}.
\end{remark}

The resonances are also described based on analytic vectors. 
Set $\mathcal{A}=\{u\in L^2(\mathbb{R}^n)| \mathrm{supp}\hat{u} \mspace{7mu}\text{is compact}\}$, 
which consists of analytic vectors for the generators of the translations 
$(\frac{1}{i}\partial_1, \dots , \frac{1}{i}\partial_n)$. 
\begin{prop}
The matrix element of the resolvent $(f, R_{+}(z)g)$ has a meromorphic continuation to 
$ \mathrm{Im}z>-\beta \delta_0$ for any $f, g \in \mathcal{A}$. 
For $z$ with $ \mathrm{Im}z>-\beta \delta_0$, $z$ is a resonance of $P$ 
if and only if $z$ is a pole of $(f, R_{+}(z)g)$ for some $f, g \in \mathcal{A}$ 
and the multiplicity is given by the maximal number $k$ such that 
there exist $ f_1, \dots ,f_k , g_1, \dots, g_k \in \mathcal{A} $
with $\det(\frac{1}{2\pi i}\oint_z(f_i, R_{+}(\zeta)g_j)d\zeta)_{i,j=1}^{k} \not =0 $.
\end{prop}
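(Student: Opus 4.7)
The approach is to parallel the proof of Proposition 2.3, replacing cone-supported cutoffs by the distortion $U_\theta$ extended to act on analytic vectors. For real $\theta$ with $|\theta|$ small and $\mathrm{Im}\, z > 0$, unitarity of $U_\theta$ gives
\[
(f, R_+(z) g) = (U_\theta f, (z - P_\theta)^{-1} U_\theta g).
\]
Writing the first factor as $U_{\bar\theta} f$ (equal to $U_\theta f$ for real $\theta$), the right-hand side becomes holomorphic in $\theta$, so by analytic continuation from the real axis it agrees with $(f, R_+(z)g)$ for complex $\theta$ as well. Combined with Propositions 2.1 and 2.2, this provides a meromorphic continuation of $(f, R_+(z)g)$ to $\{\mathrm{Im}\, z > \beta\mathrm{Im}\,\theta\}$, and letting $\mathrm{Im}\,\theta \searrow -\delta_0(1 + K^{-2})^{-1/2}$ covers the entire half-plane $\{\mathrm{Im}\,z > -\beta\delta_0\}$.

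The crucial technical step is to define $U_\theta$ on $\mathcal{A}$ for complex $\theta$ as a bounded $L^2$-valued holomorphic family. For $f \in \mathcal{A}$ with $\mathrm{supp}\,\hat f$ compact, $f$ extends entirely to $\mathbb{C}^n$, and
\[
f(\Phi_\theta(x)) = (2\pi)^{-n/2}\int \chi(\xi)\hat f(\xi)\, e^{i\xi\cdot x}\, e^{i\theta\xi\cdot v(x)}\, d\xi
\]
(with $\chi \in C_c^\infty$ equal to $1$ on $\mathrm{supp}\,\hat f$) realizes $f \mapsto f\circ\Phi_\theta$ as a pseudodifferential operator with symbol $\chi(\xi) e^{i\theta\xi\cdot v(x)} \in S(1)$, using the boundedness of $v$ and $\partial v$ together with the compact $\xi$-support of $\chi$. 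This gives $L^2$-boundedness and holomorphic dependence on $\theta$; combined with the bounded holomorphic factor $(\det \Phi_\theta'(x))^{1/2}$, we obtain $U_\theta : \mathcal{A} \to L^2$ bounded and holomorphic in $\theta$ on the strip $|\mathrm{Im}\,\theta| < \delta_0(1 + K^{-2})^{-1/2}$.

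For the characterization of poles, the residue at a resonance $z_0$ equals $(U_{\bar\theta}f, \Pi_{z_0}^\theta U_\theta g)$ with $\Pi_{z_0}^\theta$ the finite-rank generalized eigenprojection from the proof of Theorem 1, of rank $m_{z_0}$. The maximum rank $m_{z_0}$ is achieved once $U_\theta \mathcal{A}$ and $U_{\bar\theta}\mathcal{A}$ are dense in $L^2$, which follows from density of $\mathcal{A}$ in $L^2$ together with the fact that $U_\theta$ has a bounded inverse $U_{-\theta}$ on $\mathcal{A}$ (obtained by analytic continuation of the identity $U_\theta U_{-\theta} = I$ from the real axis, via the same pseudodifferential argument). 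Repeating the unique-continuation step from the proof of Theorem 1 then concludes. The main obstacle will be the $L^2$-boundedness step above: the Paley-Wiener pointwise bound $|f(\Phi_\theta(x))| \le C$ alone is not integrable over $\mathbb{R}^n$, and $\Phi_\theta$ is a position-dependent (not constant) translation, so Plancherel does not apply directly; the proof genuinely requires the symbolic-calculus identification, which works precisely because $\mathrm{supp}\,\hat f$ is compact.
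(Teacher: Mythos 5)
Your route to the meromorphic continuation is the same as the paper's: write $(f,R_+(z)g)=(U_{\bar\theta}f,(z-P_\theta)^{-1}U_\theta g)$ for real $\theta$, extend the right side holomorphically in $\theta$ using Propositions 2.1--2.2, and identify the residue with $(U_{\bar\theta}f,\Pi_z^\theta U_\theta g)$. Your pseudodifferential argument for the $L^2$-boundedness and $\theta$-holomorphy of $U_\theta$ on $\mathcal{A}$ --- writing $f\circ\Phi_\theta$ as a Kohn--Nirenberg operator with symbol $\chi(\xi)e^{i\theta\xi\cdot v(x)}\in S(1)$ and invoking Calder\'on--Vaillancourt --- is a reasonable and explicit justification of the paper's terse ``by the definition of $\mathcal{A}$,'' and you correctly flag why a bare Paley--Wiener pointwise bound is not enough.

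However, the density step contains a genuine error. You claim the identity $U_\theta U_{-\theta}=I$ holds on the real axis and can be analytically continued. This is false: the paper's distortion is $\Phi_\theta(x)=x+\theta v(x)$, which is \emph{not} the time-$\theta$ flow of $v$, so $\Phi_{-\theta}\circ\Phi_\theta(x)=x+\theta v(x)-\theta v(x+\theta v(x))\neq x$ whenever $v$ is nonconstant. Consequently $U_{-\theta}$ is not $U_\theta^{-1}$ even for real $\theta$ (for real $\theta$ the inverse is $U_\theta^*$, which involves $\Phi_\theta^{-1}$, not $\Phi_{-\theta}$). In addition, even if such an inverse existed, its domain would be $U_\theta\mathcal{A}$, which is generally not contained in $\mathcal{A}$, so ``bounded inverse on $\mathcal{A}$'' is not the right notion. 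The paper closes this gap by a different device: it observes that replacing the mollifier $\phi$ by $\varepsilon^n\phi(\varepsilon x)$ makes the Lipschitz constant of $v$ of order $\varepsilon$, and then invokes Hunziker's density theorem (\cite[Theorem 3]{Hu}) to conclude that $\{U_\theta f : f\in\mathcal{A}\}$ is dense in $L^2$ for the distorted $\theta$. Without some argument of this type, your proof does not establish that the supremum over $f_i,g_j\in\mathcal{A}$ of $\operatorname{rank}\bigl(\frac{1}{2\pi i}\oint_z(f_i,R_+(\zeta)g_j)\,d\zeta\bigr)$ actually attains the full multiplicity $m_z$.
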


\begin{proof}
Take any $0<\delta_1<\delta_0$ and construct $P_{\theta}$ 
outside $C(K, \rho)$ with $(1+K^{-2})^{-\frac{1}{2}}\delta_0>\delta_1$. 
We first note that $U_{\theta}f$ ($f\in \mathcal{A}$) has an analytic continuation for small 
$|\mathrm{Re}\theta|$ by the definition of $\mathcal{A}$. 
Take $f, g \in \mathcal{A}$. 
Then $(f, R_{+}(z)g)= (U_{\theta}f, U_{\theta}R_{+}(z)U_{\theta}^{-1}U_{\theta}g)
=(U_{\overline{\theta}}f, (z-P_{\theta})^{-1}U_{\theta}g)$ for real $\theta$ and $\mathrm{Im}z>0$. 
The right hand side is analytic with respect to $\theta$ by Proposition 2.1. 
This in turn implies that the left hand side has a meromorphic continuation to 
$\mathrm{Im}z>-\beta \delta_1$ by Proposition 2.2. 
Then we have 
\[
\frac{1}{2\pi i}\oint_z(f, R_{+}(\zeta)g)d\zeta=\frac{1}{2\pi i}\oint_z 
(U_{\overline{\theta}}f, (\zeta-P_{\theta})^{-1}U_{\theta}g)d\zeta
=(U_{\overline{\theta}}f, \Pi_z^{\theta}U_{\theta}g).
\]
We note that if we replace $\phi(x)$ by $\varepsilon^n \phi(\varepsilon x)$ 
in the definition of $F(x)$, $v(x)$ and $P_{\theta}$, the Lipschitz constant of $v(x)$ is bounded by 
$C\varepsilon$ for some $C>0$. 
Thus taking $\varepsilon>0$ sufficiently small and arguing as in \cite[Theorem 3]{Hu}, 
we see that $\{U_{\theta}f\mid f\in \mathcal{A}\}$ is dense in $L^2$.
These prove the Proposition.
\end{proof}

\begin{corollary}
In addition to Assumption 1, suppose that $V$ has an analytic continuation to 
$|\mathrm{Im}z|<\delta_0$ and is bounded in this region. 
Then for $-\delta_0<\mathrm{Im}\theta<0$, the resonances of $P$ in $\mathrm{Im}z>\beta \mathrm{Im}\theta$
coincide with the eigenvalues of $P_{\theta}'=-\hbar^2\Delta+\beta x_1+\beta \theta +V(x_1+\theta, x' )$ 
including multiplicities.
In particular, $\mathrm{Res}(-\hbar^2 \Delta+\beta x_1)=\emptyset$.
\end{corollary}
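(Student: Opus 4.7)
The plan is to imitate the proof of Theorem~1 but using the global translation $U'_\theta f(x)=f(x_1+\theta,x')$ in place of the local cone distortion $U_\theta$, and then invoke uniqueness of meromorphic continuation to identify the two resulting descriptions of the resonances via matrix elements. For $f\in\mathcal{A}$, the Fourier identity $\widehat{U'_\theta f}(\xi)=e^{i\theta\xi_1}\hat f(\xi)$ shows that $U'_\theta f$ is entire in $\theta$ with values in $\mathcal{A}$ and that $U'_\theta(\mathcal{A})=\mathcal{A}$ is dense in $L^2$. Under the strip-analyticity hypothesis on $V$, the translated operator
\[P'_\theta=-\hbar^2\Delta+\beta x_1+\beta\theta+V(x_1+\theta,x')\]
is a well-defined closed operator for $|\mathrm{Im}\theta|<\delta_0$.

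First I would establish the analogues of Proposition~2.1 and Proposition~2.2 for $P'_\theta$: it is an analytic family of type (A) on $D(P)$ with $(P'_\theta)^*=P'_{\overline\theta}$ (simpler here than in Proposition~2.1 because $V(x_1+\theta,x')-V(x)$ is a uniformly bounded perturbation under the strip assumption), and $P'_\theta-z$ is analytic Fredholm of index $0$ on $\{\mathrm{Im}z>\beta\mathrm{Im}\theta\}$, invertible for $\mathrm{Im}z\gg 1$, hence discrete in its spectrum there. A convenient route to the Fredholm property is to construct a cone-distortion $P_\theta$ with the same $\mathrm{Im}\theta$ but supported outside a very large compact set; then $P'_\theta-P_\theta$ is compactly supported and bounded, hence relatively compact with respect to $P_\theta$, so Proposition~2.2 transfers via the stability of Fredholm index under relatively compact perturbations.

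Next, for real $\theta$ the unitarity of $U'_\theta$ gives $(f,R_+(z)g)=(U'_{\overline\theta}f,(z-P'_\theta)^{-1}U'_\theta g)$ for $f,g\in\mathcal{A}$ and $\mathrm{Im}z>0$. By the previous step the right side continues analytically in $\theta$ to $-\delta_0<\mathrm{Im}\theta<0$ and meromorphically in $z$ to $\mathrm{Im}z>\beta\mathrm{Im}\theta$; by uniqueness of meromorphic continuation this agrees with the continuation furnished by Proposition~2.4 via $P_\theta$. The residue identity
\[\frac{1}{2\pi i}\oint_z (f,R_+(\zeta)g)\,d\zeta=(U'_{\overline\theta}f,\Pi_z^{\prime\theta}U'_\theta g),\]
combined with the rank characterisation of $m_z$ in Proposition~2.4 and the density of $U'_\theta(\mathcal{A})$ in $L^2$, then yields $\mathrm{Res}(P)\cap\{\mathrm{Im}z>\beta\mathrm{Im}\theta\}=\sigma_d(P'_\theta)\cap\{\mathrm{Im}z>\beta\mathrm{Im}\theta\}$ including multiplicities.

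For the final assertion, setting $V=0$ gives $P'_\theta=-\hbar^2\Delta+\beta x_1+\beta\theta$, the self-adjoint Stark Laplacian shifted by the constant $\beta\theta$, with spectrum $\mathbb{R}+\beta\theta$: this horizontal line sits exactly at height $\beta\mathrm{Im}\theta$ and contains no point with $\mathrm{Im}z>\beta\mathrm{Im}\theta$, so $\sigma_d(P'_\theta)=\emptyset$ there; letting $\mathrm{Im}\theta\searrow -\delta_0$ sweeps the entire half-plane $\mathrm{Im}z>-\beta\delta_0$ and gives $\mathrm{Res}(-\hbar^2\Delta+\beta x_1)=\emptyset$. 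The main obstacle is the Fredholm step for $P'_\theta$, since pure translation produces no distortion-induced negative imaginary part in the symbol, which is what drove the direct proof of Proposition~2.2; the comparison with a far-out $P_\theta$ via a compactly supported difference seems to be the most economical workaround.
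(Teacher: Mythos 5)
Your overall plan matches the paper's one-line proof: run the Proposition 2.4 argument with the global translation $U'_\theta$ in place of the cone distortion and conclude by uniqueness of the meromorphic continuation of the matrix elements $(f,R_+(z)g)$, $f,g\in\mathcal{A}$. The ancillary points you fill in — analyticity of $P'_\theta$ in $\theta$, the Fourier identity showing $U'_\theta(\mathcal{A})=\mathcal{A}$ is dense, the residue identity, and the $V=0$ computation at the end — are correct and are exactly what the paper intends by ``arguing as above.''

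The genuine gap is in your proposed route to the Fredholm property of $P'_\theta-z$. You claim that for a cone distortion $P_\theta$ ``supported outside a very large compact set'' the difference $P'_\theta-P_\theta$ is compactly supported. It is not: the cone distortion is the identity on $C(K,\rho)$, which is an \emph{unbounded} set, and there $P_\theta=P$ while $P'_\theta$ carries the translation everywhere, so $P'_\theta-P_\theta=\beta\theta+V(x_1+\theta,x')-V(x)$ on the entire cone. This is a bounded multiplication operator with unbounded support, and since Assumption~1 only forces $\partial V\to 0$ \emph{outside} the cone, there is no reason for it to decay inside the cone, so no relative compactness is available either. Enlarging $\rho$ or $K$ never makes the cone bounded. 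Consequently the Fredholm property of $P'_\theta-z$ cannot be borrowed from Proposition~2.2 by ``stability of the index under relatively compact perturbations''; it has to be established directly for the translated operator (as in the classical translation-analyticity literature). The paper itself does not supply this step either — it asserts the analogue of Propositions~2.1--2.2 for $P'_\theta$ implicitly in the phrase ``arguing as above'' — so your instinct to address it is sound, but the specific mechanism you propose would fail.
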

\begin{proof}
Arguing as above, the eigenvalues of $P_{\theta,}'$ are described by the meromorphic continuation of 
$(f, R_{+}(z)g)$ for $f,g\in \mathcal{A}$ and thus coincide with $\mathrm{Res}(P)$ by Proposition 2.4.
\end{proof}

\section{Non-trapping estimates}
\begin{proof}[Proof of Theorem 2]
We only sketch the proof since it is similar to that of \cite[Theorem 1]{SZ}.
The non-trapping assumption enables us to construct an escape function $G\in C_c^{\infty}(T^*\mathbb{R}^n)$ 
such that $\{p, G\}\ge 1$ on $p^{-1}([\widetilde{a}, \widetilde{b}])\cap \{|x|<R\}$ 
for some $\widetilde{a}<a<b<\widetilde{b}$, where $R>0$ is large. 
We set $P_{\theta, \varepsilon}=e^{-\varepsilon G^W/{\hbar}}P_{\theta}e^{\varepsilon G^W/{\hbar}}$, 
where $M_1\hbar \le \varepsilon \ll |\mathrm{Im}\theta|$ and $M_1 \gg 1$. 
We consider $z$ with $a\le \mathrm{Re}z \le b$ and $(\mathrm{Im}z)_-\ll \varepsilon$.  

Take microlocal cutoffs $\Psi_1$, $\Psi_2$ and $\Psi_3$ near 
$\{x_1 \ge R_1\}\cup \{|x_1|<R_1, |x'|<R', p(x, \xi) \not \in [\widetilde{a}, \widetilde{b}]\}, 
\{|x_1|<R_1, |x'|<R',  p(x, \xi) \in [\widetilde{a}, \widetilde{b}]\}$ and 
$\{x_1<-R_1\}\cup \{|x_1|<R_1, |x'|>R'\}$ respectively, where $1\ll R_1 \ll R' \ll R$.  
The elliptic estimate implies 
$\|(P_{\theta, \varepsilon}-z)\Psi_1 u\|\ge c\|\Psi_1 u\|-\mathcal{O}(\hbar^{\infty})\|u\|$ 
for $R_1\gg 1$. 
Lemma 2.1, the construction of $G$ and the sharp G\r{a}rding inequality imply that 
$\|(P_{\theta, \varepsilon}-z)\Psi_2 u\|
\ge c\varepsilon\|\Psi_2 u\|-\mathcal{O}(\hbar^{\infty})\|u\|$ 
for $M_1\gg 1$ and $(\mathrm{Im}z)_-\ll \varepsilon$. 
Since $P_{\theta, \varepsilon}$ is not elliptic in the semiclassical sense, 
we estimate $\Psi_3 u$ by considering quadratic form. 
Then Lemma 2.1 implies  
$\|(P_{\theta, \varepsilon}-z)\Psi_3 u\| 
\ge c|\mathrm{Im}\theta| \|\Psi_3 u\|$
for $(\mathrm{Im}z)_-\ll \varepsilon \ll |\mathrm{Im}\theta|$. 
Thus 
\begin{align*}
\|u\|
&\le C\varepsilon^{-1}\sum \|(P_{\theta, \varepsilon}-z)\Psi_j u\| \\
&\le C\varepsilon^{-1} \|(P_{\theta, \varepsilon}-z)u\|+C\varepsilon^{-1} \sum \|[P_{\theta, \varepsilon}, \Psi_j] u\| \\
&\le C\varepsilon^{-1} \|(P_{\theta, \varepsilon}-z)u\|+C\hbar/{\varepsilon}(\|(P_{\theta, \varepsilon}-z)u\|+\|u\|).
\end{align*}
Choosing $M_1>0$ large and substituting $C\hbar/{\varepsilon}\|u\|<1/2\|u\|$, 
we obtain $\|(P_{\theta, \varepsilon}-z)u\|\ge c\varepsilon \| u\|$.

For $(\mathrm{Im}z)_- \le M_1\hbar $, we take $\varepsilon=\widetilde{M_1} \hbar$ with $\widetilde{M_1} \gg M_1$ and we have 
$ \|(P_{\theta}-z)^{-1}\|\le C\hbar^{-1}\le  C\exp(C(\mathrm{Im}z)_-/\hbar)/\hbar$ since 
$\|e^{\pm\varepsilon G^W/{\hbar}}\|\le C$. 

For $M_1\hbar \le (\mathrm{Im}z)_- \le  M\hbar\log \hbar^{-1} $,  we take
$\varepsilon=C(\mathrm{Im}z)_-$ with large $C>0$ and we have 
$\|(P_{\theta}-z)^{-1}\|\le C\exp(C\varepsilon/\hbar)/\varepsilon
\le  C\exp(C(\mathrm{Im}z)_-/\hbar)/(\mathrm{Im}z)_-
\le  C\exp(C(\mathrm{Im}z)_-/\hbar)/\hbar$ 
since $\|e^{\pm\varepsilon G^W/{\hbar}}\|\le \exp(C\varepsilon/\hbar)$. 
\end{proof}

\begin{corollary}
Suppose that Assumption 1 holds and $K_{[a,b]}=\emptyset$. 
Then for any $\psi \in C_c^{\infty}([a,b])$ and $\chi \in L_{\mathrm{cone}}^{\infty}(\mathbb{R}^n)$, 
there exists $C>0$ such that 
\[ \chi e^{-itP/{\hbar}}\psi(P) \chi=\mathcal{O}_{L^2 \to L^2}(\langle(t-C)_+/{\hbar}\rangle^{-\infty}), \]
where $(t-C)_+=\max\{t-C, 0\}$.
\end{corollary}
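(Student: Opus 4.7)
The plan is to follow the standard contour-deformation strategy of \cite{NSZ}, with Theorem 2 as the main analytic input. First, since $\mathrm{Res}(P)\cap\mathbb{R}=\sigma_{pp}(P)=\emptyset$ by Corollary 2.1 and Remark 2.3, Stone's formula gives
\[ \chi e^{-itP/\hbar}\psi(P)\chi = \frac{1}{2\pi i}\int_{\mathbb{R}} e^{-it\lambda/\hbar}\psi(\lambda)\,\chi\bigl(R_+(\lambda)-R_-(\lambda)\bigr)\chi\,d\lambda. \]

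Next, I would continue $\chi R_+(z)\chi$ across $\mathbb{R}$ via Theorem 1, with $P_\theta$ constructed outside $\mathrm{supp}\,\chi$. Since non-trapping is an open condition, $K_{[a-\delta,b+\delta]}=\emptyset$ for some $\delta>0$, and Theorem 2 then gives
\[ \|\chi R_+(z)\chi\|\le C\hbar^{-1}\exp\bigl(C(\mathrm{Im}\,z)_-/\hbar\bigr) \]
uniformly on the rectangle $[a-\delta,b+\delta]+i\,[-\widetilde M\hbar\log\hbar^{-1},1]$ for small $\hbar$; in particular this rectangle contains no resonances. For the $R_-$ term only the elementary self-adjoint bound $\|R_-(z)\|\le 1/|\mathrm{Im}\,z|$ on $\{\mathrm{Im}\,z<0\}$ is required.

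Then I would deform the contours. Taking an almost analytic extension $\tilde\psi\in C_c^\infty(\mathbb{C})$ supported in a small complex neighborhood of $[a,b]$ with $\bar\partial\tilde\psi(z)=\mathcal{O}_N(|\mathrm{Im}\,z|^N)$ for every $N$, I would push both integration lines down to $\{\mathrm{Im}\,z=-\eta\}$ with $\eta\in(0,\widetilde M\hbar\log\hbar^{-1}]$. On the new horizontal line the factor $|e^{-itz/\hbar}|=e^{-t\eta/\hbar}$ combined with the two resolvent bounds above gives a contribution of size $C\hbar^{-1}\exp(-(t-C_0)\eta/\hbar)$, while the vertical edges and the $\bar\partial$-area contribute only $\mathcal{O}(\hbar^\infty)$ thanks to the infinite-order vanishing of $\bar\partial\tilde\psi$ on $\mathbb{R}$.

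Finally I would set $C:=C_0+1$ and, given $N\in\mathbb{N}$ and $t\ge C$, choose $\eta$ in the allowed range as a function of $(t-C)/\hbar$ so as to absorb the polynomial: for moderate $(t-C)/\hbar$ take $\eta\sim\widetilde M\hbar\log\hbar^{-1}$ (which already yields an arbitrary power of $\hbar$), and for larger values $\eta\sim\hbar\log((t-C)/\hbar)$. Either choice produces $\hbar^{-1}\exp(-(t-C)\eta/\hbar)\le C_N\langle(t-C)_+/\hbar\rangle^{-N}$, which is exactly the claim. The step requiring the most care is the book-keeping on the vertical edges, where the continued resolvent grows like $e^{C\eta/\hbar}$ and this growth must be genuinely beaten by the infinite-order vanishing of $\bar\partial\tilde\psi$; the argument is standard but delicate.
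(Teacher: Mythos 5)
Your proposal follows precisely the strategy the paper has in mind: Stone's formula, an almost analytic extension of $\psi$, a contour shift to $\{\mathrm{Im}\,z=-\eta\}$ with $\eta\lesssim\hbar\log\hbar^{-1}$ (i.e.\ Green's formula), the continued-resolvent bound from Theorem~2 for the $R_+$ piece, and the elementary bound $\|R_-\|\le|\mathrm{Im}\,z|^{-1}$ for the $R_-$ piece; the paper omits the details and simply refers to \cite[Lemma 4.2]{NSZ}. One small inaccuracy: the reference for $\sigma_{pp}(P)=\emptyset$ should be Remark~2.4 rather than Remark~2.3, and the phrase that the $\bar\partial$-area term contributes only $\mathcal{O}(\hbar^\infty)$ undersells it slightly — that term also carries the factor $e^{-(t-C)|\mathrm{Im}\,z|/\hbar}$, and it is this decay (after integrating $|\mathrm{Im}\,z|^M e^{-(t-C)|\mathrm{Im}\,z|/\hbar}$) that produces the $\langle (t-C)_+/\hbar\rangle^{-\infty}$ behavior for large $t$, not merely $\hbar^\infty$.
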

\begin{proof}
This follows from Theorem 2 employing Stone's formula, an almost analytic extension of $\psi$ 
and Green's formula. 
Since the proof is the same as that of \cite[Lemma 4.2]{NSZ}, we omit the details.
\end{proof}

\section{Shape resonance model}
 In this section, we discuss the shape resonances for the Stark Hamiltonian generated by potential wells. 
Recall that $p(x,\xi)=|\xi|^2+V_{\beta}(x)$, $V_{\beta}=\beta x_1+V$ 
and $K_{[a,b]}$ is the trapped set in the energy interval $[a,b]$. 
Throughout this section, we assume Assumption 1 and Assumption 2. 
Note that Assumption 2 implies $K_{[a,b]}=\{(x,\xi)|x\in \mathcal{G}^{\mathrm{int}}, a\le p(x,\xi)\le b\}$. 
We fix sufficiently small $\delta>0$. 
Then Assumption 2 holds true with $[a,b]$ replaced by $[a-\delta, b+\delta]$.

Fix a cutoff function $\chi_0$ near $\mathcal{G}^{\mathrm{int}}$ such that 
$\mathrm{supp}\partial \chi_0 \Subset \{x\in \mathbb{R}^n|V(x)>b+2\delta\}$. 
Complex distorted operators in this section are constructed outside $\mathrm{supp}\chi_0$. 
Let $V^{\mathrm{ext}}(x)$ be a potential obtained by filling up the wells: 
$V^{\mathrm{ext}}=V_{\beta}$ near $\mathrm{supp}(1-\chi_0)$ and 
$V^{\mathrm{ext}}>b+2\delta$ near $\mathcal{G}^{\mathrm{int}}$, 
and $P^{\mathrm{ext}}=-\hbar^2\Delta+V^{\mathrm{ext}}$ 
with the corresponding distorted operator $P^{\mathrm{ext}}_{\theta}$.
Let $V^{\mathrm{int}}(x)$ be a potential flattened outside the wells: 
$V^{\mathrm{int}}(x)=V_{\beta}$ near $\mathrm{supp}\chi_0$ 
and $V^{\mathrm{int}}(x)=b+2\delta$ outside a small neighborhood of $\mathrm{supp}\chi_0$,  
and $P^{\mathrm{int}}=-\hbar^2\Delta+V^{\mathrm{int}}$.

In the following we set $\alpha (\hbar)=\hbar^C$ and $\gamma(\hbar)=M\hbar\log\hbar^{-1}$, or 
$\alpha (\hbar)=C\hbar$ and $\gamma(\hbar)=M\hbar$. 
Then Theorem 2 implies that 
$\|(P^{\mathrm{ext}}_{\theta}(\hbar)-z)^{-1}\|=\mathcal{O}(\alpha(\hbar)^{-1})$ 
for $a-\delta \le \mathrm{Re}z \le b+\delta$, $\mathrm{Im}z \ge -\gamma(\hbar)$ 
and $\theta =-i\widetilde{M}\hbar\log\hbar^{-1}$. 
\begin{remark}
The results in subsection 4.1 and 4.2 remain true 
if we replace the non-trapping condition outside the wells by a resolvent assumption as follows: 
there exist $\alpha (\hbar)$, $\gamma(\hbar)$ and real numbers $a<b$ with 
$\alpha (\hbar), \gamma(\hbar)>e^{-S/{\hbar}}$ for any $S>0$ such that 
$\|(P^{\mathrm{ext}}_{\theta}(\hbar)-z)^{-1}\|=\mathcal{O}(\alpha(\hbar)^{-1})$ 
for $a-\delta \le \mathrm{Re}z \le b+\delta$ and $\mathrm{Im}z \ge -\gamma(\hbar)$. 
\end{remark}
The basic estimate in this section is the following Agmon estimate 
which is valid in more general settings (see \cite[section 7.1]{Z}).
\begin{lemma}
For any open set $U$ with $\overline{U}\subset \{x\in \mathbb{R}^n|V(x)>b+2\delta\}$, 
any $z\in [b-C_0, b+\delta]+i[-C_0,C_0]$ and small $\hbar>0$, there exists $S_0>0$ such that 
\[ \|u\|_{H^2_{\hbar}(U)} \le e^{-S_0/{\hbar}}\|u\|_{L^2(U_1)}+C\|(P-z)u\|_{L^2(U_1)}, \]
where $U_1$ is any open set with $\overline{U}\subset U_1$. 
\end{lemma}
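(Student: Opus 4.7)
The plan is to apply the standard Agmon exponential-weight method. I would conjugate $P - z$ by $e^{\varphi/\hbar}$ with a nonnegative weight $\varphi \in C_c^\infty(U_1)$ chosen so that $\varphi \equiv S_0 > 0$ on $\overline U$ and the pointwise symbol inequality $|\nabla\varphi(x)|^2 < V_\beta(x) - \operatorname{Re}z - c$ holds for some $c > 0$ everywhere on $\operatorname{supp}\varphi$. Such a $\varphi$ exists because $V > b + 2\delta$ on $\overline U$ gives a definite positive gap $V_\beta - \operatorname{Re}z \ge c$ on a compact neighborhood of $\overline U$ inside $U_1$; one takes $\varphi$ to be a small multiple of a truncated Agmon-type distance in that neighborhood, capped at $S_0$.

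A direct calculation gives
\begin{equation*}
P_\varphi := e^{\varphi/\hbar}Pe^{-\varphi/\hbar} = -\hbar^2\Delta + 2\hbar\nabla\varphi\cdot\nabla + \hbar\Delta\varphi - |\nabla\varphi|^2 + V_\beta,
\end{equation*}
in which the first-order perturbation $2\hbar\nabla\varphi\cdot\nabla + \hbar\Delta\varphi$ is anti-self-adjoint, so
\begin{equation*}
\operatorname{Re}\langle (P_\varphi - z)v, v\rangle = \|\hbar\nabla v\|^2 + \langle (V_\beta - \operatorname{Re}z - |\nabla\varphi|^2)v, v\rangle.
\end{equation*}
I would fix a cutoff $\chi_1 \in C_c^\infty(U_1)$ equal to $1$ in a neighborhood of $\overline U \cup \operatorname{supp}\varphi$, so that $\varphi \equiv 0$ on $\operatorname{supp}\nabla\chi_1$, and apply the form estimate to $v = \chi_1 e^{\varphi/\hbar} u$. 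Ellipticity on $\operatorname{supp}\chi_1$ then yields $\|v\| \le C\|(P_\varphi - z)v\|$.

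Expanding $(P_\varphi - z)v = \chi_1 e^{\varphi/\hbar}(P-z)u + e^{\varphi/\hbar}[P,\chi_1]u$ and using $\varphi \equiv 0$ on $\operatorname{supp}\nabla\chi_1$, the commutator term is $\mathcal{O}(\hbar)\|u\|_{L^2(U_1)}$ without any exponential weight, while the forcing term is at most $e^{S_0/\hbar}\|(P-z)u\|_{L^2(U_1)}$. Dividing through by $e^{S_0/\hbar}$ and using $\varphi \ge S_0$, $\chi_1 = 1$ on $\overline U$ gives
\begin{equation*}
\|u\|_{L^2(U)} \le C\|(P-z)u\|_{L^2(U_1)} + Ce^{-S_0/\hbar}\|u\|_{L^2(U_1)}.
\end{equation*}
The $H^2_\hbar(U)$ bound then follows by applying interior semiclassical elliptic regularity on an intermediate set $U \Subset U' \Subset U_1$, namely $\|u\|_{H^2_\hbar(U)} \le C(\|u\|_{L^2(U')} + \|(P-z)u\|_{L^2(U')})$, and feeding in the $L^2$ Agmon bound for $U'$. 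The main obstacle is the construction of $\varphi$: one must simultaneously arrange the constant value on $\overline U$, compact support inside $\{\chi_1 = 1\}$, and the symbol inequality $|\nabla\varphi|^2 < V_\beta - \operatorname{Re}z - c$, uniformly in $z\in[b-C_0,b+\delta]+i[-C_0,C_0]$; this forces $S_0$ to be small and requires handling the linear contribution $\beta x_1$ to $V_\beta$ by restricting to a bounded neighborhood where $V_\beta - \operatorname{Re}z$ has a uniform positive lower bound.
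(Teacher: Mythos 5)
Your proposal is correct and uses the standard conjugation-by-$e^{\varphi/\hbar}$ Agmon argument; this is precisely what the paper relies on, as it does not prove Lemma 4.1 directly but refers to Section~7.1 of Zworski's \emph{Semiclassical Analysis}, which develops exactly this method. When writing it out in full, note that the positivity $V_\beta - \mathrm{Re}\,z - |\nabla\varphi|^2 \geq c>0$ must hold on all of $\mathrm{supp}\,\chi_1$ (not merely on $\mathrm{supp}\,\varphi$, since it is $v=\chi_1 e^{\varphi/\hbar}u$ that enters the quadratic form), which forces $\chi_1$ to be supported inside $\{V_\beta > b + 2\delta\}\cap U_1$; and since $[P,\chi_1]$ contains the first-order piece $-2\hbar^2\nabla\chi_1\cdot\nabla$, the commutator contribution is really $\mathcal{O}(\hbar)\|u\|_{H^1_\hbar}$ near $\mathrm{supp}\,\nabla\chi_1$ rather than directly $\mathcal{O}(\hbar)\|u\|_{L^2(U_1)}$, so one needs a preliminary local energy estimate on a slightly larger cutoff---the same device you already invoke for the final $H^2_\hbar$ upgrade---to control $\hbar\nabla u$ there by $\|u\|_{L^2(U_1)} + \|(P-z)u\|_{L^2(U_1)}$.
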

This is also valid for $P_{\theta}$ if $U$ is away from the region of deformation 
in the definition of $P_{\theta}$. 
In the following we fix $S_0$ such that  
Lemma 4.1 holds true where $U$ is a small neighborhood of $\mathrm{supp}\partial \chi_0$, 
and moreover Lemma 4.1 with $P$ replaced by $P^{\mathrm{int}}$ holds true  
where $U$ is a small neighborhood of $\mathrm{supp}(1-\chi_0)$.

\subsection{Resolvent estimate}
In \cite{NSZ} the resolvent estimate is obtained by the abstract method based on the maximum principle technique. 
In the shape resonance model, we give more direct resolvent estimate based on the commutator calculation and 
the Agmon estimate.
\begin{prop}
For small $\hbar>0$,  
\[\|(1-\chi_0)(P_{\theta}-z)^{-1}\|\le C\alpha(\hbar)^{-1}, \|\chi_0(P_{\theta}-z)^{-1}\|
\le C\mathrm{dist}(z, \sigma(P^{\mathrm{int}}))^{-1}\]
if $a-\delta \le \mathrm{Re}z \le b+\delta$, $\mathrm{Im}z \ge -\gamma(\hbar)$ 
and $\mathrm{dist}(z, \sigma(P^{\mathrm{int}}))\ge e^{-S_0/{\hbar}}$.
\end{prop}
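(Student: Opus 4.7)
The plan is to build a parametrix for $(P_\theta-z)^{-1}$ by patching the two simpler resolvents $(P^{\mathrm{int}}-z)^{-1}$ (handling the wells) and $(P^{\mathrm{ext}}_\theta-z)^{-1}$ (handling the outside), and to show the resulting error is of small norm via Lemma 4.1. Choose an intermediate cutoff $\tilde\chi_0\in C_c^\infty(\mathbb R^n)$ with $\tilde\chi_0=1$ on a neighborhood of $\mathrm{supp}\,\chi_0$, $\mathrm{supp}\,\tilde\chi_0$ contained in the undistorted region where $V^{\mathrm{int}}=V_\beta$ (so that $P_\theta=P=P^{\mathrm{int}}$ on a neighborhood of $\mathrm{supp}\,\tilde\chi_0$), and $\mathrm{supp}\,\partial\tilde\chi_0\Subset\{V>b+2\delta\}$. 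Define
\[
Q(z) = \tilde\chi_0 (P^{\mathrm{int}}-z)^{-1}\chi_0 + (P^{\mathrm{ext}}_\theta-z)^{-1}(1-\chi_0).
\]

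A direct computation, using $P_\theta\tilde\chi_0=P^{\mathrm{int}}\tilde\chi_0$ (the two operators agree as coefficients on $\mathrm{supp}\,\tilde\chi_0$) and $P_\theta-P^{\mathrm{ext}}_\theta=V_\beta-V^{\mathrm{ext}}$ (a multiplication operator supported inside the wells), gives $(P_\theta-z)Q(z)=I+E(z)$ with
\[
E(z) = [P^{\mathrm{int}},\tilde\chi_0](P^{\mathrm{int}}-z)^{-1}\chi_0 + (V_\beta-V^{\mathrm{ext}})(P^{\mathrm{ext}}_\theta-z)^{-1}(1-\chi_0).
\]
The key step is to show $\|E(z)\|\to 0$ as $\hbar\to 0$. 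For the first summand, set $u=(P^{\mathrm{int}}-z)^{-1}\chi_0 v$; since $(P^{\mathrm{int}}-z)u=0$ on a neighborhood $U_1$ of $\mathrm{supp}\,\partial\tilde\chi_0$ disjoint from $\mathrm{supp}\,\chi_0$, Lemma 4.1 gives $\|u\|_{H^2_\hbar(U)}\le Ce^{-S_0/\hbar}\|u\|$ for $U\Subset U_1$. This combines with the a priori bound $\|u\|\le e^{S_0/\hbar}\|v\|$ and the $\mathcal O(\hbar)$ size of the commutator $[P^{\mathrm{int}},\tilde\chi_0]$ (supported exactly on $\mathrm{supp}\,\partial\tilde\chi_0$) to yield operator norm $\mathcal O(\hbar)$. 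The second summand is treated analogously: the supports of $V_\beta-V^{\mathrm{ext}}$ (inside wells) and $1-\chi_0$ (outside wells) are separated by the barrier $\{V^{\mathrm{ext}}>b+2\delta\}$, and an Agmon estimate for $P^{\mathrm{ext}}_\theta$ combined with the polynomial bound $\|(P^{\mathrm{ext}}_\theta-z)^{-1}\|=\mathcal O(\alpha(\hbar)^{-1})$ from Theorem 2 yields exponential smallness. Hence $I+E(z)$ is invertible with uniformly bounded inverse, so $(P_\theta-z)^{-1}=Q(z)(I+E(z))^{-1}$.

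The two bounds then follow by left-cutoff on $Q(z)$. Since $\tilde\chi_0=1$ on $\mathrm{supp}\,\chi_0$,
\[
\chi_0 Q(z) = \chi_0(P^{\mathrm{int}}-z)^{-1}\chi_0 + \chi_0(P^{\mathrm{ext}}_\theta-z)^{-1}(1-\chi_0),
\]
where the first term is bounded by $C/\mathrm{dist}(z,\sigma(P^{\mathrm{int}}))$ and the second is Agmon-small (same argument as for the second summand of $E(z)$); likewise
\[
(1-\chi_0)Q(z) = (1-\chi_0)\tilde\chi_0(P^{\mathrm{int}}-z)^{-1}\chi_0 + (1-\chi_0)(P^{\mathrm{ext}}_\theta-z)^{-1}(1-\chi_0),
\]
whose first term is $\mathcal O(1)$ by Agmon applied to $u$ on the barrier region $\mathrm{supp}\,((1-\chi_0)\tilde\chi_0)\subset\{V>b+2\delta\}$, and whose second is bounded by $C\alpha(\hbar)^{-1}$ via Theorem 2. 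The main obstacle is ensuring that an Agmon-type estimate applies to the complex-distorted operator $P^{\mathrm{ext}}_\theta$ (not directly covered by Lemma 4.1) with an effectively comparable Agmon constant; this relies on the non-positivity of $\mathrm{Im}\,p^{\mathrm{ext}}_\theta$ established in Lemma 2.1 and careful control of the $\hbar^2 r_\theta$ remainder in the distorted Laplacian, so that the standard integration-by-parts argument with an exponential Agmon weight goes through uniformly in the relevant range of $\theta$.
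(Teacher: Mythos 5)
Your proposal takes a genuinely different route from the paper. The paper works directly with the true resolvent via a bootstrap: it writes $\|(1-\chi_0)(P_\theta-z)^{-1}\|=\|(P^{\mathrm{ext}}_\theta-z)^{-1}(P_\theta-z)(1-\chi_0)(P_\theta-z)^{-1}\|$, extracts the commutator $[P_\theta,\chi_0](P_\theta-z)^{-1}$, bounds it by Agmon, obtains two coupled inequalities for $\|\chi_0(P_\theta-z)^{-1}\|$ and $\|(1-\chi_0)(P_\theta-z)^{-1}\|$, and closes the system by absorbing small terms. You instead construct an explicit approximate inverse $Q(z)$ from the two model resolvents and invert $I+E(z)$ by Neumann series. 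Both are standard gluing-type strategies and either can be made to work; the paper's version is shorter because it never needs to introduce an auxiliary cutoff $\tilde\chi_0$ or to track the $\mathcal O(\hbar)$ remainder from $[P^{\mathrm{int}},\tilde\chi_0]$.

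There is one slip in the final step that you should repair. You claim that $\chi_0(P^{\mathrm{ext}}_\theta-z)^{-1}(1-\chi_0)$ is ``Agmon-small (same argument as for the second summand of $E(z)$).'' That argument does not transfer: the second summand of $E(z)$ is $(V_\beta-V^{\mathrm{ext}})(P^{\mathrm{ext}}_\theta-z)^{-1}(1-\chi_0)$, where $\mathrm{supp}(V_\beta-V^{\mathrm{ext}})$ and $\mathrm{supp}(1-\chi_0)$ are genuinely disjoint, so the resolvent is cut at two separated barrier locations and Lemma 4.1 yields an $e^{-S/\hbar}$ gain. By contrast $\mathrm{supp}\,\chi_0$ and $\mathrm{supp}(1-\chi_0)$ overlap on the transition region of $\chi_0$, so $(P^{\mathrm{ext}}_\theta-z)u=(1-\chi_0)v$ does not vanish on a neighborhood of $\mathrm{supp}\,\chi_0$; what you get from Lemma 4.1 plus ellipticity of $P^{\mathrm{ext}}_\theta-z$ on $\mathrm{supp}\,\chi_0$ (where $V^{\mathrm{ext}}>b+2\delta$) is only $\|\chi_0 u\|\le e^{-S/\hbar}\|u\|+C\|(1-\chi_0)v\|=\mathcal O(1)\|v\|$, not exponential smallness. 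This is still enough: on the region of $z$ under consideration one has $\mathrm{dist}(z,\sigma(P^{\mathrm{int}}))$ bounded above (for instance by the continuous spectrum $[b+2\delta,\infty)$ of $P^{\mathrm{int}}$), so an $\mathcal O(1)$ contribution is absorbed by $C\,\mathrm{dist}(z,\sigma(P^{\mathrm{int}}))^{-1}$. You should state this absorption explicitly rather than claim exponential decay that does not hold.

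Two further remarks. First, the ``main obstacle'' you flag about Agmon estimates for the distorted $P^{\mathrm{ext}}_\theta$ is not actually an obstacle in any of the places where you invoke it: the regions $\mathrm{supp}(V_\beta-V^{\mathrm{ext}})$, $\mathrm{supp}\,\chi_0$, and $\mathrm{supp}\,\partial\tilde\chi_0$ all lie well inside the undistorted zone, where $P^{\mathrm{ext}}_\theta$ coincides with the self-adjoint $P^{\mathrm{ext}}=-\hbar^2\Delta+V^{\mathrm{ext}}$; Lemma 4.1 applies verbatim. Second, you should make sure the Agmon constant you use on $\mathrm{supp}\,\partial\tilde\chi_0$ is at least the $S_0$ appearing in the hypothesis $\mathrm{dist}(z,\sigma(P^{\mathrm{int}}))\ge e^{-S_0/\hbar}$; since $\mathrm{supp}\,\partial\tilde\chi_0\subset\mathrm{supp}(1-\chi_0)\cap\{V>b+2\delta\}$, this is covered by the paper's fixed choice of $S_0$ applied to $P^{\mathrm{int}}$, but it is worth noting.
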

\begin{proof}
We have 
\begin{align*}
\|(1-\chi_0)(P_{\theta}-z)^{-1}\|&
=\|(P_{\theta}^{\mathrm{ext}}-z)^{-1}(P_{\theta}^{\mathrm{ext}}-z)(1-\chi_0)(P_{\theta}-z)^{-1}\|\\
&\le\alpha(\hbar)^{-1}\|(P_{\theta}-z)(1-\chi_0)(P_{\theta}-z)^{-1}\|\\
&\le\alpha(\hbar)^{-1}(1+\|[P_{\theta}, \chi_0](P_{\theta}-z)^{-1}\|)\\ 
&\le C\alpha(\hbar)^{-1}(1+e^{-S_0/{\hbar}}\|(P_{\theta}-z)^{-1}\|)\\
&\le C\alpha(\hbar)^{-1}(1+e^{-S_0/{\hbar}}\|\chi_0(P_{\theta}-z)^{-1}\|).
\end{align*}
The third inequality follows from the Agmon estimate. 
The last inequality follows if we subtract $C\alpha(\hbar)^{-1}e^{-S_0/{\hbar}}\|(1-\chi_0)(P_{\theta}-z)^{-1}\|
\le \frac{1}{2}\|(1-\chi_0)(P_{\theta}-z)^{-1}\|$ from both sides for small $\hbar>0$.
We also have 
\begin{align*}
\|\chi_0(P_{\theta}-z)^{-1}\|&= \|(P^{\mathrm{int}}-z)^{-1}(P^{\mathrm{int}}-z)\chi_0(P_{\theta}-z)^{-1}\| \\
&\le \mathrm{dist}(z, \sigma(P^{\mathrm{int}}))^{-1}\|(P_{\theta}-z)\chi_0(P_{\theta}-z)^{-1}\| \\
&\le \mathrm{dist}(z, \sigma(P^{\mathrm{int}}))^{-1}(1+\|[P_{\theta},\chi_0](P_{\theta}-z)^{-1}\|) \\
&\le C\mathrm{dist}(z, \sigma(P^{\mathrm{int}}))^{-1}(1+\hbar e^{-S_0/{\hbar}}\|(P_{\theta}-z)^{-1}\|)\\
&\le C\mathrm{dist}(z, \sigma(P^{\mathrm{int}}))^{-1}(1+\hbar e^{-S_0/{\hbar}}\|(1-\chi_0)(P_{\theta}-z)^{-1}\|).
\end{align*}
The third inequality follows from the Agmon estimate.
The last inequality follows if we subtract
$C\hbar\mathrm{dist}(z, \sigma(P^{\mathrm{int}}))^{-1}e^{-S_0/{\hbar}}\|\chi_0(P_{\theta}-z)^{-1}\|\le 
C\hbar\|\chi_0(P_{\theta}-z)^{-1}\|$ from both sides for small $\hbar>0$.
Substituting the left hand side of each inequality for the right hand side of the other inequality 
and subtracting the small remainder from both sides, we obtain the desired results.
\end{proof}
\begin{remark}
This proposition shows the dichotomy for resonances: 
\[\mathrm{Res}(P(\hbar))\cap ([a-\delta, b+\delta]-i[e^{-S_0/{\hbar}}, \gamma(\hbar)])
=\emptyset\mspace{10mu}\text{for small}\mspace{7mu}\hbar>0.\]
\end{remark}

As in \cite{S2} and \cite{NSZ}, we decompose resonances into clusters.
\begin{lemma}
For small $\hbar>0$, there exist $a_j(\hbar)<b_j(\hbar)<a_{j+1}(\hbar)$ such that 
\[\left(\mathrm{Res}(P(\hbar))\cup 
\sigma(P^{\mathrm{int}})\right)\cap ([a-\frac{\delta}{2}, b+\frac{\delta}{2}]-i[0, e^{-S_0/{\hbar}}])
\subset \cup_{j=1}^{J(\hbar)}\Omega_j(\hbar),\]
where $\Omega_j(\hbar)=[a_j(\hbar), b_j(\hbar)]-i[0, e^{-S_0/{\hbar}}]$, 
$b_j-a_j\le C\hbar^{-n}e^{-S_0/{\hbar}}$, $a_{j+1}-b_j\ge 2e^{-S_0/{\hbar}}$, 
$a_1\in(a-\frac{2}{3}\delta, a-\frac{1}{3}\delta)$, $b_{J(\hbar)}\in (b+\frac{1}{3}\delta,b+\frac{2}{3}\delta)$ and 
$\mathrm{Res}(P)\cap(([a_1-c\hbar^n, a_1]-i[0, e^{-S_0/{\hbar}}])
\cup ([b_{J(\hbar)}, b_{J(\hbar)}+c\hbar^n]-i[0, e^{-S_0/{\hbar}}]))
=\emptyset$.
Moreover, 
\[\|(1-\chi_0)(P_{\theta}-z)^{-1}\|\le C\alpha(\hbar)^{-1},\mspace{7mu} z\in \partial \widetilde{\Omega}_j(\hbar),\]
where $\widetilde{\Omega}_j(\hbar)=[a_j(\hbar)-e^{-S_0/{\hbar}}, 
b_j(\hbar)+e^{-S_0/{\hbar}}]+i[-2e^{-S_0/{\hbar}}, e^{-S_0/{\hbar}}]$.
\end{lemma}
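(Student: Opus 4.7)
The plan is to combine the dichotomy of Remark~4.2 with a counting bound on ``bad'' points and a greedy interval-clustering argument, then read off the resolvent estimate from Proposition~4.1. Call $\lambda$ bad if it lies in $\bigl(\mathrm{Res}(P(\hbar))\cup\sigma(P^{\mathrm{int}})\bigr)\cap\bigl([a-\tfrac{\delta}{2},b+\tfrac{\delta}{2}]-i[0,e^{-S_0/\hbar}]\bigr)$. First I would bound the number of bad points. Remark~2.4 gives $\#\bigl(\mathrm{Res}(P(\hbar))\cap\Omega\bigr)=\mathcal{O}(\hbar^{-n})$ on any fixed relatively compact $\Omega\Subset\{\mathrm{Im}z>-\beta\delta_0\}$, which applies for small $\hbar$ since $e^{-S_0/\hbar}\to 0$. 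For $P^{\mathrm{int}}$, the potential $V^{\mathrm{int}}$ is bounded and equals $b+2\delta$ outside a compact set, so the essential spectrum is $[b+2\delta,\infty)$ and below $b+\delta$ the spectrum is purely discrete; the standard semiclassical Weyl bound then yields $\#\bigl(\sigma(P^{\mathrm{int}})\cap[a-\tfrac{\delta}{2},b+\tfrac{\delta}{2}]\bigr)=\mathcal{O}(\hbar^{-n})$. Hence the total number of bad points is $N(\hbar)=\mathcal{O}(\hbar^{-n})$.

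Second, I would sweep the real projections of the bad points from left to right by a greedy procedure: open a cluster at the leftmost remaining bad point, extend its right endpoint as long as another bad point lies within $2e^{-S_0/\hbar}$, then close it and repeat. A cluster containing $k$ bad points has length at most $(k-1)\cdot 2e^{-S_0/\hbar}$, and $\sum_j k_j\le N(\hbar)$ gives $b_j-a_j\le C\hbar^{-n}e^{-S_0/\hbar}$, while by construction $a_{j+1}-b_j\ge 2e^{-S_0/\hbar}$. To fix the outer endpoints with the resonance-free margin of width $c\hbar^n$, I would apply a pigeonhole argument: the fixed-length interval $(a-\tfrac{2\delta}{3},a-\tfrac{\delta}{3})$ contains at most $C\hbar^{-n}$ resonances of $P$, hence admits a sub-interval of length $c\hbar^n$ free of resonances; place $a_1$ at the right endpoint of such a gap. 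Treat $b_{J(\hbar)}$ symmetrically in $(b+\tfrac{\delta}{3},b+\tfrac{2\delta}{3})$.

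Third, for the resolvent bound on $\partial\widetilde{\Omega}_j$ I would invoke Proposition~4.1. The vertical extent of $\widetilde{\Omega}_j$ is $[-2e^{-S_0/\hbar},e^{-S_0/\hbar}]$, contained in $[-\gamma(\hbar),\infty)$ for small $\hbar$; the horizontal extent is contained in $[a-\delta,b+\delta]$ by the placement of $a_1,b_{J(\hbar)}$. The key distance bound $\mathrm{dist}(z,\sigma(P^{\mathrm{int}}))\ge e^{-S_0/\hbar}$ on $\partial\widetilde{\Omega}_j$ holds because every eigenvalue of $P^{\mathrm{int}}$ in the strip is real and lies in some $\Omega_k$: for eigenvalues in $\Omega_j$ the horizontal fattening by $e^{-S_0/\hbar}$ and the vertical fattenings of size at least $e^{-S_0/\hbar}$ provide the bound, while for eigenvalues in $\Omega_k$ with $k\neq j$ the inter-cluster gap $\ge 2e^{-S_0/\hbar}$ combined with the one-sided fattening $e^{-S_0/\hbar}$ in $\widetilde{\Omega}_j$ does the same, and eigenvalues of $P^{\mathrm{int}}$ outside $[a-\tfrac{\delta}{2},b+\tfrac{\delta}{2}]$ lie at a fixed positive distance from $\widetilde{\Omega}_j$ for small $\hbar$. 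Proposition~4.1 then delivers $\|(1-\chi_0)(P_\theta-z)^{-1}\|\le C\alpha(\hbar)^{-1}$.

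\textbf{Main obstacle.} I expect the substantive work to be bookkeeping rather than deep analysis: keeping the cluster width $b_j-a_j$, the inter-cluster gap $a_{j+1}-b_j$, and the fattening parameter in $\widetilde{\Omega}_j$ mutually consistent so that (i) every bad point sits strictly inside some $\Omega_k$, (ii) distinct thickened clusters do not absorb foreign eigenvalues, and (iii) the resolvent hypothesis $\mathrm{dist}(z,\sigma(P^{\mathrm{int}}))\ge e^{-S_0/\hbar}$ survives on $\partial\widetilde{\Omega}_j$. The only ingredient external to the preceding material is the semiclassical Weyl upper bound for $\sigma(P^{\mathrm{int}})\cap[a-\tfrac{\delta}{2},b+\tfrac{\delta}{2}]$, which is standard for the smooth confined Schr\"odinger operator $P^{\mathrm{int}}$ and should be recorded explicitly.
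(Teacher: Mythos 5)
Your proposal is correct and follows the same argument the paper has in mind: count the bad points ($\mathcal{O}(\hbar^{-n})$ eigenvalues of $P^{\mathrm{int}}$ plus $\mathcal{O}(\hbar^{-n})$ resonances by Remark~2.3 --- note the slip: your cite should be Remark~2.3, not Remark~2.4, which concerns absence of embedded eigenvalues), greedily cluster their real projections with parameter $2e^{-S_0/\hbar}$, place $a_1,b_{J(\hbar)}$ by pigeonhole, and read off the resolvent bound from Proposition~4.1. The paper's own proof is a one-liner deferring to \cite{S2} and \cite{NSZ} for exactly this clustering argument, so your filled-in version is the intended one.

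One bookkeeping point beyond what you flag: the pigeonhole that places $a_1$ (and symmetrically $b_{J(\hbar)}$) must exclude from $[a_1-c\hbar^n,a_1]$ the real projections of $\sigma(P^{\mathrm{int}})$ as well as of $\mathrm{Res}(P)$, not resonances alone. If an eigenvalue of $P^{\mathrm{int}}$ sat just to the left of $a_1$, the hypothesis $\mathrm{dist}(z,\sigma(P^{\mathrm{int}}))\ge e^{-S_0/\hbar}$ of Proposition~4.1 would fail on the outer vertical side of $\partial\widetilde{\Omega}_1$, even though the lemma's stated margin condition only mentions $\mathrm{Res}(P)$. Since the combined bad set is still $\mathcal{O}(\hbar^{-n})$, the same pigeonhole works; and choosing the pigeonhole window inside $(a-\tfrac{2\delta}{3},a-\tfrac{\delta}{2})$ (rather than up to $a-\tfrac{\delta}{3}$) guarantees $a_1\le a-\tfrac{\delta}{2}$ so the clusters genuinely cover $[a-\tfrac{\delta}{2},b+\tfrac{\delta}{2}]$.
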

\begin{proof}
The first statement follows easily from the fact that 
$\# (\sigma(P^{\mathrm{int}})\cap [a-\delta, b+\delta])=\mathcal{O}(\hbar^{-n})$ and 
Proposition 4.1 (or Remark 2.3). 
The second statement follows from Proposition 4.1.
\end{proof}

\subsection{The Weyl law}
We prove Theorem 3 in this subsection.
Set $ \Pi_j^{\theta}=\frac{1}{2\pi i}\int_{\partial \widetilde{\Omega}_j} (z-P_{\theta})^{-1}dz$ and 
$ \Pi_j^{\mathrm{int}}=\frac{1}{2\pi i}\int_{\partial \widetilde{\Omega}_j} (z-P^{\mathrm{int}})^{-1}dz$. 
Since $\mathrm{supp}\chi_0\cap \mathrm{supp}(P_{\theta}-P^{\mathrm{int}})=\emptyset$, we have
\begin{align*}
 \Pi_j^{\theta}-\Pi_j^{\mathrm{int}}
=\frac{1}{2\pi i}\int_{\partial \widetilde{\Omega}_j}
(z-P_{\theta})^{-1}(1-\chi_0)(P_{\theta}-P^{\mathrm{int}})(z-P^{\mathrm{int}})^{-1}dz. 
\end{align*}
\begin{prop}
For any $0<S<S_0$, 
\[ \Pi_j^{\theta}= \Pi_j^{\mathrm{int}}+\mathcal{O}(e^{-S/{\hbar}}).\]
\end{prop}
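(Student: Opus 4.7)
The plan is to estimate the integrand $A(z) := (z-P_\theta)^{-1}(1-\chi_0)(P_\theta - P^{\mathrm{int}})(z-P^{\mathrm{int}})^{-1}$ in operator norm on $\partial\widetilde\Omega_j$ by a polynomial-in-$\hbar^{-1}$ factor times $e^{-S_0/\hbar}$, and then integrate. Since the perimeter $|\partial\widetilde\Omega_j|$ is polynomially bounded by Lemma 4.3, the contour integral will be bounded by $\hbar^{-N}e^{-S_0/\hbar}$ for some $N$, which dominates $e^{-S/\hbar}$ for every $S < S_0$.

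Three ingredients will drive the estimate on $A(z)$: Lemma 4.2 providing $\|(1-\chi_0)(z-P_\theta)^{-1}\| = O(\alpha(\hbar)^{-1})$, polynomially bounded; Proposition 4.1 together with the construction in Lemma 4.3 ensuring $\|(z-P^{\mathrm{int}})^{-1}\| \le \mathrm{dist}(z, \sigma(P^{\mathrm{int}}))^{-1} \le e^{S_0/\hbar}$; and Lemma 4.1 (Agmon) supplying the decisive $e^{-S_0/\hbar}$ factor once terms are localized to the forbidden region $\{V_\beta > b+2\delta\}$. I would introduce an auxiliary cutoff $\widetilde\chi_0$ with $\chi_0 \prec \widetilde\chi_0$, still supported where $P_\theta$ and $P^{\mathrm{int}}$ both coincide with $P$, and with $\mathrm{supp}\,\partial\widetilde\chi_0 \subset \{V_\beta > b+2\delta\}$. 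Because the coefficients of $P_\theta - P^{\mathrm{int}}$ vanish on $\mathrm{supp}\,\widetilde\chi_0$, the operator identity $(P_\theta - P^{\mathrm{int}})\widetilde\chi_0 = [P_\theta - P^{\mathrm{int}}, \widetilde\chi_0]$ holds, producing a first-order operator with $O(\hbar)$ coefficients supported in the forbidden region. Splitting $(z-P^{\mathrm{int}})^{-1} = \widetilde\chi_0 (z-P^{\mathrm{int}})^{-1} + (1-\widetilde\chi_0)(z-P^{\mathrm{int}})^{-1}$ inside $A(z)$, the first piece is controlled by Agmon applied to $u = (z-P^{\mathrm{int}})^{-1}f$ on a neighborhood of $\mathrm{supp}\,\partial\widetilde\chi_0$; the second piece, after using $(P^{\mathrm{int}}-z)(1-\widetilde\chi_0)u = -(1-\widetilde\chi_0)f - [P^{\mathrm{int}}, \widetilde\chi_0]u$, is again in a region where Agmon applies since $V^{\mathrm{int}} - \mathrm{Re}\,z \ge \delta$ on $\mathrm{supp}(1-\widetilde\chi_0)$.

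The main obstacle is reconciling the $e^{-S_0/\hbar}$ Agmon gain with the $e^{S_0/\hbar}$ growth of $\|(z-P^{\mathrm{int}})^{-1}\|$ on the contour. The underlying mechanism making the estimate tight is that the large resolvent norm is produced by the $O(\hbar^{-n})$ eigenfunctions $\phi_k$ of $P^{\mathrm{int}}$ with eigenvalues $\lambda_k \in \Omega_j$; each $\phi_k$ satisfies $(P^{\mathrm{int}} - \lambda_k)\phi_k = 0$ with $\lambda_k$ below the barrier, so Agmon forces $\phi_k$ to be exponentially concentrated in the wells, and hence $(1-\chi_0)\phi_k$ carries a factor $e^{-S/\hbar}$ for every $S < S_0$. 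When the resolvent is sandwiched so that the contributions of these eigenfunctions are projected against $(1-\chi_0)$, the exponential smallness cancels the $e^{S_0/\hbar}$ growth up to polynomial losses in $\hbar^{-1}$. Executing this cancellation rigorously via the two-sided Agmon argument, combined with Lemma 4.2 applied to the adjoint factor $(z-P_\theta)^{-1}(1-\chi_0)$ (using $P_\theta^* = P_{\bar\theta}$ from Proposition 2.1), is the technical heart of the proof.
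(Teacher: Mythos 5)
Your overall plan --- integrate a pointwise operator-norm bound of the integrand $A(z) = (z-P_\theta)^{-1}(1-\chi_0)(P_\theta - P^{\mathrm{int}})(z-P^{\mathrm{int}})^{-1}$ over $\partial\widetilde\Omega_j$ --- does not match the paper's strategy, and two of your steps break down.

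First, the commutator identity you invoke is vacuous. If $\widetilde\chi_0$ is supported where $P_\theta$ and $P^{\mathrm{int}}$ both coincide with $P$, then all coefficients of $P_\theta - P^{\mathrm{int}}$ vanish on $\mathrm{supp}\,\widetilde\chi_0$, hence $(P_\theta - P^{\mathrm{int}})\widetilde\chi_0 u = \sum_\alpha c_\alpha \,\partial^\alpha(\widetilde\chi_0 u)$ is identically zero (the Leibniz-expanded terms are supported in $\mathrm{supp}\,\widetilde\chi_0$ where every $c_\alpha = 0$), and by the same reasoning $[P_\theta - P^{\mathrm{int}}, \widetilde\chi_0] = 0$. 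Your ``first-order operator with $O(\hbar)$ coefficients'' is zero; the identity produces nothing and, crucially, never removes the unbounded coefficient $\beta x_1$ sitting in the zeroth-order part of $P_\theta - P^{\mathrm{int}}$. Dealing with this unbounded coefficient is precisely the point emphasized in Remark 4.3. The paper's device is different: with $\chi_0$ \emph{on the right}, $(P_\theta - P^{\mathrm{int}})\chi_0 = 0$, so
\[
(z-P^{\mathrm{int}})(P_\theta - P^{\mathrm{int}})(z-P^{\mathrm{int}})^{-1}\chi_0 = -[P^{\mathrm{int}}, P_\theta - P^{\mathrm{int}}](z-P^{\mathrm{int}})^{-1}\chi_0,
\]
and $[P^{\mathrm{int}}, P_\theta - P^{\mathrm{int}}]$ has \emph{bounded} coefficients because $[-\hbar^2\Delta, \beta x_1]$ is a bounded first-order operator. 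That replacement is the idea your argument is missing.

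Second, the pointwise bound $\|A(z)\| \lesssim \hbar^{-N}e^{-S_0/\hbar}$ that your plan requires is not available. The factor $(P_\theta - P^{\mathrm{int}})(z-P^{\mathrm{int}})^{-1}$ acting on inputs supported outside the well receives no Agmon gain and is as large as $\mathrm{dist}(z, \sigma(P^{\mathrm{int}}))^{-1}\approx e^{S_0/\hbar}$ at the worst points of the contour. The paper never bounds $\|A(z)\|$ pointwise; it splits $\Pi_j^\theta - \Pi_j^{\mathrm{int}}$ into the piece $(\Pi_j^\theta - \Pi_j^{\mathrm{int}})\chi_0$ (where the integrand $\|A(z)\chi_0\|$ is only polynomially bounded, $O(\alpha(\hbar)^{-1})$) and the two residual projections $\Pi_j^\theta(1-\chi_0)$, $(1-\chi_0)\Pi_j^{\mathrm{int}}$ (each small by Lemma 4.2 and Agmon applied to eigenfunctions). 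The exponential smallness of the final answer comes from $|\partial\widetilde\Omega_j| = O(\hbar^{-n}e^{-S_0/\hbar})$ (Lemma 4.2, not Lemma 4.3, and it is exponentially small, not merely polynomially bounded as you state), times the polynomial integrand bound. As written your argument needs an exponential gain in a place where there isn't one.
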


\begin{remark}
In the decaying potential case, we immediately have 
\[\|(P_{\theta}-P^{\mathrm{int}})(z-P^{\mathrm{int}})^{-1}\|\le  e^{-S_0/{\hbar}}\|(z-P^{\mathrm{int}})^{-1}\|+C\le C\]
for $z \in \partial \widetilde{\Omega}_j$ 
by the Agmon estimate for $P^{\mathrm{int}}$ and $\mathrm{dist}(z, \sigma(P^{\mathrm{int}}))\ge e^{-S_0/{\hbar}}$ since 
$P_{\theta}-P^{\mathrm{int}}$ has bounded coefficients. This and Lemma 4.2 imply 
\[\| \Pi_j^{\theta} - \Pi_j^{\mathrm{int}}\|\le C|\partial \widetilde{\Omega}_j| \alpha(\hbar)^{-1} 
=\mathcal{O}(e^{-S/{\hbar}}).\] 
Since $P_{\theta}-P^{\mathrm{int}}$ has an unbounded coefficient in our case, we need additional arguments.
\end{remark}

\begin{proof}[Proof of Proposition 4.2]
Since $z-P^{\mathrm{int}}$ is elliptic near $\mathrm{supp}(P_{\theta}-P^{\mathrm{int}})$, 
\begin{align*}
\|&(P_{\theta}-P^{\mathrm{int}})(z-P^{\mathrm{int}})^{-1}\chi_0\|\\
&\le C\|(z-P^{\mathrm{int}})(P_{\theta}-P^{\mathrm{int}})(z-P^{\mathrm{int}})^{-1}\chi_0\|_{L^2\to H_{\hbar}^{-2}} \\
&= C \|[P^{\mathrm{int}}, P_{\theta}-P^{\mathrm{int}}](z-P^{\mathrm{int}})^{-1}\chi_0\|_{L^2\to H_{\hbar}^{-2}} \\
&\le C e^{-S_0/{\hbar}}\|(z-P^{\mathrm{int}})^{-1}\|\le C, 
\end{align*}
where the last two inequalities follow from the Agmon estimate for $P^{\mathrm{int}}$ 
and $\mathrm{dist}(z, \sigma(P^{\mathrm{int}}))\ge e^{-S_0/{\hbar}}$ 
(note that $[P^{\mathrm{int}}, P_{\theta}-P^{\mathrm{int}}]$ has bounded coefficients). 
This and Lemma 4.2 imply 
\[\| \left(\Pi_j^{\theta} - \Pi_j^{\mathrm{int}}\right) \chi_0\|\le C|\partial \widetilde{\Omega}_j| \alpha(\hbar)^{-1} 
=\mathcal{O}(e^{-S/{\hbar}}).\] 
Finally, we have $\|\Pi_j^{\theta}(1-\chi_0)\|\le C|\partial \widetilde{\Omega}_j| \alpha(\hbar)^{-1} 
=\mathcal{O}(e^{-S/{\hbar}})$ by Lemma 4.2, 
and $\|(1-\chi_0)\Pi_j^{\mathrm{int}}\|\le C\hbar^{-n}e^{-S_0/{\hbar}}
=\mathcal{O}(e^{-S/{\hbar}})$ by the Agmon estimate.
\end{proof}

\begin{proof}[Proof of Theorem 3]
Proposition 4.2 implies that $\mathrm{rank} \Pi_j^{\theta}=\mathrm{rank}\Pi_j^{\mathrm{int}}$ for small $\hbar>0$.
Thus the Weyl law for discrete eigenvalues of $P^{\mathrm{int}}$ completes the proof.
\end{proof}

\subsection{Resonance expansion}
We prove Theorem 4 in this subsection. Theorem 5 and Theorem 6 are used in this subsection. 
In the following, we take $\psi \in C_c^{\infty}([a,b])$ and 
$\chi \in C_b^{\infty}\cap L^{\infty}_{\mathrm{cone}}$ as in Theorem 4.
We take $a(\hbar)=a_1(\hbar)-\frac{c}{2}\hbar^n$, $b(\hbar)=b_{J(\hbar)}+\frac{c}{2}\hbar^n$ (see Lemma 4.2) 
and set $\Omega (\hbar)=[a(\hbar),b(\hbar)]-i[0, \hbar]$. 
We first prove Theorem 4 after large time $t>\hbar^{-n+1-\varepsilon}$ (see Burq-Zworski~\cite{BZ}).
\begin{prop}
Under the above notation and for any $\varepsilon>0$, 
\[\chi e^{-itP/{\hbar}}\chi \psi(P)=\sum_{z\in \mathrm{Res}(P(\hbar))\cap\Omega (\hbar)}
\mathrm{Res}_{\zeta=z}e^{-it\zeta/{\hbar}}\chi R_+(\zeta, \hbar)\chi\psi(P)+\mathcal{O}(\hbar^{\infty})\]
for $t>\hbar^{-n+1-\varepsilon}$.
\end{prop}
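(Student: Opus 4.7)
The plan is to represent $\chi e^{-itP/\hbar}\chi\psi(P)$ via Stone's formula as a real-line integral of the cutoff resolvent, deform the contour down through $\Omega(\hbar) = [a(\hbar), b(\hbar)] - i[0, \hbar]$ using the meromorphic continuation from Theorem~1, and capture the resonances inside $\Omega(\hbar)$ as residues via the Cauchy--Pompeiu theorem. The large-time condition $t > \hbar^{-n+1-\varepsilon}$ is used precisely to ensure that $|e^{-it\zeta/\hbar}| = e^{-t}$ on the bottom contour $\{\mathrm{Im}\,\zeta = -\hbar\}$ dominates any polynomial factor of $\hbar^{-1}$.

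First I would reduce the problem. Fix $\psi_1 \in C_c^\infty((a-\delta/3, b+\delta/3))$ with $\psi_1 \equiv 1$ on $[a,b]$. By Theorem~5 and Remark~1, $[\chi, \psi_1(P)] = \mathcal{O}(\hbar^\infty)$, so
\[
\chi e^{-itP/\hbar}\chi\psi(P) = \chi e^{-itP/\hbar}\psi_1(P)\chi\psi(P) + \mathcal{O}(\hbar^\infty).
\]
Choose an almost analytic extension $\tilde\psi_1 \in C_c^\infty(\mathbb{C})$ of $\psi_1$ with $\bar\partial\tilde\psi_1 = \mathcal{O}(|\mathrm{Im}\,\zeta|^\infty)$ and $\mathrm{supp}\,\tilde\psi_1 \subset (a-\delta/3, b+\delta/3)\times\mathbb{R}$. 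Using $\sigma_{pp}(P) = \emptyset$ (Remark~2.2), Stone's formula gives
\[
\chi e^{-itP/\hbar}\psi_1(P)\chi = \frac{1}{2\pi i}\int_\mathbb{R} e^{-it\mu/\hbar}\tilde\psi_1(\mu)\,\chi[R_+(\mu+i0) - R_-(\mu-i0)]\chi\,d\mu.
\]

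The $R_+$ contour is then deformed down through $\Omega(\hbar)$. By the Cauchy--Pompeiu formula applied to the meromorphic integrand, the result is $\sum_{z\in\mathrm{Res}(P(\hbar))\cap\Omega(\hbar)} \mathrm{Res}_{\zeta=z}\bigl[e^{-it\zeta/\hbar}\tilde\psi_1(\zeta)\chi R_+(\zeta)\chi\bigr]$ plus: the bottom-side integral along $\{\mathrm{Im}\,\zeta = -\hbar\}$, vertical-side integrals along $\{\mathrm{Re}\,\zeta = a(\hbar), b(\hbar)\}$, and the area integral of $e^{-it\zeta/\hbar}\bar\partial\tilde\psi_1(\zeta) \chi R_+(\zeta)\chi$ over $\Omega(\hbar)$. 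The vertical sides vanish identically, since $\tilde\psi_1 \equiv 0$ there by the uniform gap $a - a(\hbar), b(\hbar) - b \ge \delta/3$ from Lemma~4.2. On the bottom, $\mathrm{dist}(\zeta, \sigma(P^{\mathrm{int}})) \ge \hbar \gg e^{-S_0/\hbar}$, so Proposition~4.1 (with $\alpha(\hbar) = C\hbar$) yields $\|\chi(\zeta-P_\theta)^{-1}\chi\| \lesssim \hbar^{-1}$; combined with $|e^{-it\zeta/\hbar}| = e^{-t}$ and $t > \hbar^{-n+1-\varepsilon}$, this contribution is $\mathcal{O}(\hbar^\infty)$. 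The $R_-$ contour is deformed identically down to $\{\mathrm{Im}\,\zeta = -\hbar\}$ (no singularities crossed) and estimated via $\|\chi R_-(\zeta)\chi\| \le |\mathrm{Im}\,\zeta|^{-1}$. For the residues, since $|\mathrm{Im}\,z| \le e^{-S_0/\hbar}$ and $\psi_1 \equiv 1$ near $\mathrm{Re}\,z \in [a,b]$, we have $\tilde\psi_1^{(k)}(z) = \delta_{k0} + \mathcal{O}(\hbar^\infty)$ for all $k$; hence the residue agrees with $\mathrm{Res}_{\zeta=z}\bigl[e^{-it\zeta/\hbar}\chi R_+(\zeta)\chi\bigr]$ up to $\mathcal{O}(\hbar^\infty)$, and multiplying by $\chi\psi(P)$ on the right reproduces the stated main term (since $\psi(P)$ is $\zeta$-independent and factors out of the residue).

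The main obstacle is controlling the area integral over $\Omega(\hbar)$, where $\chi R_+(\zeta)\chi$ can have high-order poles at clustered resonances (with the a priori multiplicity bound $\mathcal{O}(\hbar^{-n})$ from Remark~2.4). The rigorous treatment is via Cauchy--Pompeiu with small balls excised around each pole: the pole contributions land in the already-accounted-for residue sum, and what remains is an area integral where the resolvent bound of Proposition~4.1 is valid outside the $e^{-S_0/\hbar}$-thin clusters, while $|\bar\partial\tilde\psi_1| = \mathcal{O}(|\mathrm{Im}\,\zeta|^\infty)$ is super-polynomially small throughout the cluster strip. Balancing these two factors gives the area integral as $\mathcal{O}(\hbar^\infty)$, completing the proof.
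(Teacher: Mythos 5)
Your overall strategy — Stone's formula, an almost analytic extension of the energy cutoff, contour deformation to $\mathrm{Im}\,\zeta=-\hbar$, and a Cauchy--Pompeiu/Green identity that trades the deformed integral for residues plus a $\bar\partial$ area term — is exactly the mechanism the paper invokes (it cites Burq--Zworski for the argument, with Proposition~4.1 as the resolvent input), and the individual estimates you give for the bottom contour, the $R_-$ contour and the $\bar\partial$ balancing outside the clusters are correct. The genuine gap is in the step where you convert your residues $\mathrm{Res}_{\zeta=z}\bigl[e^{-it\zeta/\hbar}\tilde\psi_1(\zeta)\chi R_+(\zeta)\chi\bigr]\psi(P)$ into the ones in the statement, $\mathrm{Res}_{\zeta=z}\bigl[e^{-it\zeta/\hbar}\chi R_+(\zeta)\chi\psi(P)\bigr]$.

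Your justification ``\,$\tilde\psi_1^{(k)}(z)=\delta_{k0}+\mathcal{O}(\hbar^\infty)$\,'' uses $\psi_1\equiv1$ near $[a,b]$, so it only covers resonances with $\mathrm{Re}\,z\in[a,b]$. But $\Omega(\hbar)=[a(\hbar),b(\hbar)]-i[0,\hbar]$ with $a(\hbar)=a_1-\tfrac{c}{2}\hbar^n<a-\tfrac{\delta}{3}$ (by Lemma~4.2), so the box strictly overshoots $[a,b]$ and in fact overshoots $\mathrm{supp}\,\psi_1$. For resonances $z$ with $\mathrm{Re}\,z\in[a(\hbar),a-\tfrac{\delta}{3})\cup(b+\tfrac{\delta}{3},b(\hbar)]$ your $\tilde\psi_1$ vanishes, so your deformation produces a zero residue there, while the target formula has the a priori nonzero term $\mathrm{Res}_{\zeta=z}\bigl[e^{-it\zeta/\hbar}\chi R_+(\zeta)\chi\psi(P)\bigr]$; in the transition region $(a-\tfrac{\delta}{3},a)\cup(b,b+\tfrac{\delta}{3})$ the two residues also differ by an $\mathcal{O}(1)$ factor $1-\tilde\psi_1$. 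You must therefore show directly that the residues at $z$ with $\mathrm{Re}\,z$ away from $\mathrm{supp}\,\psi$, after composition with $\psi(P)$, are $\mathcal{O}(\hbar^\infty)$, uniformly over the whole time range $t>\hbar^{-n+1-\varepsilon}$. This cannot be dispatched by inserting Remark~1's $(1-\psi_1)(P)\chi\psi(P)=\mathcal{O}(\hbar^\infty)$ naively, because the error then hits the resolvent near its poles, where $\|(P_\theta-\zeta)^{-1}\|$ is exponentially large and swamps $\hbar^\infty$; it instead requires grouping resonances into the clusters $\widetilde\Omega_j$, using the contour bound from Lemma~4.2, and the comparison with $P^{\mathrm{int}}$ as in Proposition~4.2 and Lemma~4.5, with some care for the $e^{-it\zeta/\hbar}$ factor at very large $t$. (Note that the alternative, enlarging $\mathrm{supp}\,\psi_1$ to contain $[a(\hbar),b(\hbar)]$, does not help: the vertical sides at $\mathrm{Re}\,\zeta=a(\hbar),b(\hbar)$ then contribute, and the resolvent there is of size $\hbar^{-n}$, so the vertical contribution is only $\mathcal{O}(\hbar^{1-n}/t)=\mathcal{O}(\hbar^{\varepsilon})$, not $\mathcal{O}(\hbar^\infty)$ — which is precisely why the exponent $-n+1-\varepsilon$ appears but does not on its own close the argument.) A secondary, related point: you must excise the full clusters $\widetilde\Omega_j$ (not just tiny balls around individual poles) before applying the $\bar\partial$ balancing; inside a cluster but away from a pole the a priori bound of Remark~2.3 gives only $\exp(C\hbar^{-n}\log\delta(\hbar)^{-1})$, which is not beaten by $|\bar\partial\tilde\psi_1|\lesssim|\mathrm{Im}\,\zeta|^N\le e^{-NS_0/\hbar}$.
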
 

\begin{proof}
This is proved by Stone's formula, the almost analytic extension technique and Green's formula. 
If we employ Proposition 4.1 as the resolvent estimate, the claimed result follows.
Since the argument of the proof is the same as \cite{BZ}, we omit the details.
We note that calculations involving the energy cutoff $\psi(P)$ are justified by Theorem 5.
\end{proof}

\begin{remark}
If we employ Remark 2.3 as the resolvent estimate, the result of Burq-Zworski~\cite{BZ} is 
obtained for the Stark Hamiltonian case. 
Namely, Proposition 4.3 remains true under Assumption 1 for $t>\hbar^{-L}$ for some choices of $\Omega(\hbar)$ and $L>0$. 
\end{remark}

We move to the proof of Theorem 4 up to large time $C\le t\le e^{S/{2\hbar}}$.
We first prepare the Agmon estimate for continuous spectrum (\cite[Lemma 4.3]{NSZ}):
\begin{lemma}
If $\widetilde{\chi}_0 \in C_c^{\infty}(\mathbb{R}^n)$ is a cutoff near $\mathrm{supp}\partial \chi_0$ and 
$\psi_1 \in C_c^{\infty}(\mathbb{R})$ is supported near $[a, b]$, 
\[\widetilde{\chi}_0 \psi_1 (P(\hbar)), \widetilde{\chi}_0\psi_1 (P^{\mathrm{int}}(\hbar)), 
\widetilde{\chi}_0\psi_1 (P^{\mathrm{ext}}(\hbar))=\mathcal{O}_{L^2 \to H_{\hbar}^2}(e^{-S_0/{2\hbar}}).\]
\end{lemma}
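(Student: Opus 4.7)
I would use the Helffer--Sj\"ostrand functional calculus together with Agmon's estimate (Lemma 4.1), treating $P$, $P^{\mathrm{int}}$ and $P^{\mathrm{ext}}$ by the same argument: for $P$ one uses Lemma 4.1 directly; for $P^{\mathrm{int}}$ one invokes the analogue valid near $\mathrm{supp}(1-\chi_0)$ noted in the excerpt, which contains $\mathrm{supp}\widetilde{\chi}_0$ after shrinking $\widetilde{\chi}_0$ if needed; for $P^{\mathrm{ext}}$ one uses that $V^{\mathrm{ext}}=V$ near $\mathrm{supp}\widetilde{\chi}_0$ and $V^{\mathrm{ext}}>b+2\delta$ globally on $\mathcal{G}^{\mathrm{int}}$, keeping $\widetilde{\chi}_0$ in the classically forbidden region at energies near $[a,b]$.

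Fix an almost analytic extension $\tilde\psi_1\in C_c^{\infty}(\mathbb{C})$ of $\psi_1$ with $\bar\partial\tilde\psi_1(z)=\mathcal{O}_N(|\mathrm{Im}\,z|^N)$ for every $N$, supported in a thin complex neighborhood of $\mathrm{supp}\psi_1$. The Helffer--Sj\"ostrand formula reads
\[\widetilde{\chi}_0\,\psi_1(P)=-\frac{1}{\pi}\int_{\mathbb{C}}\bar\partial\tilde\psi_1(z)\,\widetilde{\chi}_0(z-P)^{-1}\,dA(z).\]
Applying Lemma 4.1 with $U=\mathrm{supp}\widetilde{\chi}_0$ and a slightly enlarged $U_1\Subset\{V>b+2\delta\}$ to $u=(z-P)^{-1}f$, together with $\|(z-P)^{-1}\|\le|\mathrm{Im}\,z|^{-1}$, gives the pointwise estimate $\|\widetilde{\chi}_0(z-P)^{-1}\|_{L^2\to H^2_\hbar}\le Ce^{-S_0/\hbar}|\mathrm{Im}\,z|^{-1}+C$.

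The main obstacle is the non-decaying ``$+C$'' term: integrating it against $|\bar\partial\tilde\psi_1|\,dA$ contributes only $\mathcal{O}(1)$ and would destroy the exponential gain. I would eliminate it by localizing the source: introduce a cutoff $\widetilde{\chi}_0^+\in C_c^\infty(\mathbb{R}^n)$ with $\widetilde{\chi}_0^+\equiv1$ on a neighborhood of $\mathrm{supp}\widetilde{\chi}_0$ and $\mathrm{supp}\widetilde{\chi}_0^+\Subset\{V>b+2\delta\}$, and decompose
\[\widetilde{\chi}_0(z-P)^{-1}=\widetilde{\chi}_0(z-P)^{-1}(1-\widetilde{\chi}_0^+)+\widetilde{\chi}_0(z-P)^{-1}\widetilde{\chi}_0^+.\]
For the first piece, $u=(z-P)^{-1}(1-\widetilde{\chi}_0^+)f$ satisfies $(P-z)u\equiv0$ on a neighborhood of $\mathrm{supp}\widetilde{\chi}_0$, so the second Agmon term drops and one obtains a clean bound $Ce^{-S_0/\hbar}|\mathrm{Im}\,z|^{-1}$. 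For the $\widetilde{\chi}_0^+$-piece I would pass to the adjoint $\widetilde{\chi}_0^+(\overline{z}-P)^{-1}\widetilde{\chi}_0$ and rerun the same decomposition, or bootstrap through a nested sequence of cutoffs inside $\{V>b+2\delta\}$ to transfer the exponential factor onto the remainder.

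Substituting the resulting pointwise bounds into the Helffer--Sj\"ostrand integral and using $|\bar\partial\tilde\psi_1(z)|\le C|\mathrm{Im}\,z|$ yields
\[\|\widetilde{\chi}_0\,\psi_1(P)\|_{L^2\to H^2_\hbar}\le Ce^{-S_0/\hbar}\int|\bar\partial\tilde\psi_1(z)|\,|\mathrm{Im}\,z|^{-1}\,dA(z)=\mathcal{O}(e^{-S_0/\hbar}),\]
which is stronger than the claimed $\mathcal{O}(e^{-S_0/(2\hbar)})$; the factor $2$ simply provides slack, so one can take the $S_0$ appearing in the statement of Lemma 4.3 to be any positive number strictly smaller than the Agmon rate supplied by Lemma 4.1 (and by its counterparts for $P^{\mathrm{int}}$ and $P^{\mathrm{ext}}$).
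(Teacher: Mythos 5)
The framework you start from—Helffer--Sj\"ostrand plus the Agmon estimate of Lemma~4.1—is the one the paper intends, and you are right to flag the non-decaying $C\|(P-z)u\|_{L^2(U_1)}$ term as the real obstacle. But the fix you propose does not close: the decomposition
\[
\widetilde{\chi}_0(z-P)^{-1}=\widetilde{\chi}_0(z-P)^{-1}(1-\widetilde{\chi}_0^+)+\widetilde{\chi}_0(z-P)^{-1}\widetilde{\chi}_0^+
\]
handles the first piece cleanly (the Agmon source term vanishes), but the second piece is exactly as hard as what you started with. Taking the adjoint gives $\widetilde{\chi}_0^+(\bar z-P)^{-1}\widetilde{\chi}_0$, and when you apply Lemma~4.1 the source term is now $\|\widetilde{\chi}_0 g\|_{L^2(U_1)}$, which does not vanish because $\mathrm{supp}\,\widetilde{\chi}_0\cap U_1\neq\emptyset$; you are back where you started. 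The ``nested cutoffs'' alternative is also not an argument: each widening of the cutoff shrinks the available Agmon distance, the iteration produces a geometric sum of $O(1)$ remainders with no mechanism to make them exponentially small, and the barrier region is of fixed size so the process cannot continue indefinitely. At best, treating $\widetilde{\chi}_0(z-P)^{-1}\widetilde{\chi}_0^+$ by semiclassical ellipticity (a parametrix in the classically forbidden region) shows it is $O(1)$ uniformly in $\mathrm{Im}\,z$ and holomorphic up to $O(\hbar^\infty)$, which after integration against $\bar\partial\tilde\psi_1$ only yields $O(\hbar^\infty)$—not the claimed exponential decay.

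Your concluding claim that the method delivers $O(e^{-S_0/\hbar})$, strictly better than the stated $O(e^{-S_0/(2\hbar)})$, should itself have been a warning sign. The factor $\tfrac12$ in Lemma~4.3 is not slack; it is characteristic of one-sided weighted estimates of the form $\widetilde{\chi}_0\psi_1(P)$, where the Agmon weight must be ``paid for'' on both sides of the operator (writing $\widetilde{\chi}_0\psi_1(P)=\widetilde{\chi}_0 e^{-\phi/\hbar}\cdot e^{\phi/\hbar}\psi_1(P)e^{-\phi/\hbar}\cdot e^{\phi/\hbar}$, the leftmost factor gains $e^{-\max\phi/\hbar}$ but the rightmost loses $e^{+\max\phi/\hbar}$ unless one balances the weight, which halves the rate). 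The paper's proof, following \cite[Lemma 4.3]{NSZ}, uses Green's formula to run the Agmon argument at the level of the conjugated operator inside the Helffer--Sj\"ostrand integral, rather than applying Lemma~4.1 as a black box to $(z-P)^{-1}f$; this is where the missing structure lies. As written, your proposal has a genuine gap at the $\widetilde{\chi}_0^+$-piece, and the inflated exponent is a symptom of it.
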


\begin{proof}
This follows from the Agmon estimate, the almost analytic extension technique and Green's formula.  
Since the proof is the same as \cite[Lemma 4.3]{NSZ}, we omit the details.
\end{proof}

We next compare the different quantum dynamics \cite[Lemma 4.4]{NSZ}. 
\begin{lemma}
For $\psi_1 \in C_c^{\infty}(\mathbb{R})$ supported near $[a, b]$ and $t\in \mathbb{R}$,
\[(1-\chi_0) e^{-itP/{\hbar}}\psi_1(P)\chi_0=\mathcal{O}(|t|e^{-S_0/{2\hbar}})+\mathcal{O}(\hbar^{\infty}),\]
\[\chi_0 e^{-itP/{\hbar}}\psi_1(P)=\chi_0 e^{-itP^{\mathrm{int}}/{\hbar}}\psi_1(P^{\mathrm{int}})+
\mathcal{O}(|t|e^{-S_0/{2\hbar}})+\mathcal{O}(\hbar^{\infty}),\]
\[(1-\chi_0) e^{-itP/{\hbar}}\psi_1(P)=(1-\chi_0) e^{-itP^{\mathrm{ext}}/{\hbar}}\psi_1(P^{\mathrm{ext}})
+\mathcal{O}(|t|e^{-S_0/{2\hbar}})+\mathcal{O}(\hbar^{\infty}).\]
\end{lemma}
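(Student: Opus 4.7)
The plan is to deduce all three identities from Duhamel's formula combined with the Agmon estimate of Lemma 4.3, exploiting that the commutator $[P,\chi_0]$ and the potential differences $P-P^{\mathrm{int}}$, $P-P^{\mathrm{ext}}$ are all supported in classically forbidden regions on which $\psi_1$ of the corresponding Hamiltonian is exponentially small.

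For the first identity, I would set $U(t)=e^{-itP/{\hbar}}$ and start from the Duhamel expression
\[ [(1-\chi_0),U(t)]=-\frac{i}{\hbar}\int_0^t U(t-s)[P,\chi_0]U(s)\,ds, \]
which gives, using $[\psi_1(P),U(s)]=0$,
\[ (1-\chi_0)U(t)\psi_1(P)\chi_0=U(t)(1-\chi_0)\psi_1(P)\chi_0-\frac{i}{\hbar}\int_0^t U(t-s)[P,\chi_0]\psi_1(P)U(s)\chi_0\,ds. \]
Since $[P,\chi_0]=-\hbar^2(\Delta\chi_0)-2\hbar^2(\nabla\chi_0)\cdot\nabla$ has coefficients supported in $\mathrm{supp}\,\partial\chi_0\Subset\{V>b+2\delta\}$, I pick $\widetilde{\chi}_0\in C_c^{\infty}(\{V>b+2\delta\})$ equal to $1$ near $\mathrm{supp}\,\partial\chi_0$ and factor $[P,\chi_0]=[P,\chi_0]\widetilde{\chi}_0$. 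Combining $[P,\chi_0]=\mathcal{O}_{H^1_\hbar\to L^2}(\hbar)$ with the Lemma 4.3 bound $\widetilde{\chi}_0\psi_1(P)=\mathcal{O}_{L^2\to H^2_\hbar}(e^{-S_0/(2\hbar)})$ makes the integrand $\mathcal{O}_{L^2\to L^2}(\hbar e^{-S_0/(2\hbar)})$, so the integral is $\mathcal{O}(|t|e^{-S_0/(2\hbar)})$. The remaining term $U(t)(1-\chi_0)\psi_1(P)\chi_0$ I decompose as $(1-\chi_0)\widetilde{\chi}_0\psi_1(P)\chi_0+(1-\chi_0)(1-\widetilde{\chi}_0)\psi_1(P)\chi_0$: the first piece is $\mathcal{O}(e^{-S_0/(2\hbar)})$ by Lemma 4.3, and for the second I invoke Theorem 5 and Remark 1 to write $\psi_1(P)\chi_0=a^W$ with $a\sim\sum_j\hbar^j a_j$ and every $a_j$ supported in $\mathrm{supp}\,\chi_0$; since $(1-\chi_0)(1-\widetilde{\chi}_0)$ vanishes on $\mathrm{supp}\,\chi_0$ (off $\mathrm{supp}\,\partial\chi_0$ the function $\chi_0$ is locally constant and $(1-\chi_0)\chi_0=0$, while on $\mathrm{supp}\,\partial\chi_0$ we have $\widetilde{\chi}_0=1$), every term in the Weyl composition expansion vanishes and this contribution is $\mathcal{O}(\hbar^\infty)$.

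The second and third identities follow by the same mechanism with one further Duhamel step. For the second, I split
\begin{align*}
\chi_0 U(t)\psi_1(P)-\chi_0 U^{\mathrm{int}}(t)\psi_1(P^{\mathrm{int}})&=\chi_0[U(t)-U^{\mathrm{int}}(t)]\psi_1(P)\\
&\quad+\chi_0 U^{\mathrm{int}}(t)\bigl(\psi_1(P)-\psi_1(P^{\mathrm{int}})\bigr),
\end{align*}
with $U^{\mathrm{int}}(t)=e^{-itP^{\mathrm{int}}/\hbar}$. Duhamel converts $U(t)-U^{\mathrm{int}}(t)$ into an integral of $U(t-s)(P-P^{\mathrm{int}})U^{\mathrm{int}}(s)$; taking $\widehat{\chi}_0$ equal to $1$ on $\mathrm{supp}\,\chi_0$ and supported in $\{V^{\mathrm{int}}=V_\beta\}$, the perturbation factorizes as $P-P^{\mathrm{int}}=(1-\widehat{\chi}_0)(P-P^{\mathrm{int}})$, and the same Duhamel--Agmon scheme as in the first identity, now with $\chi_0$ replaced by $1-\widehat{\chi}_0$ and Lemma 4.3 applied to $P^{\mathrm{int}}$ near $\mathrm{supp}(1-\chi_0)$ (permissible by the paper's choice of $S_0$), yields $\mathcal{O}(|t|e^{-S_0/(2\hbar)})+\mathcal{O}(\hbar^\infty)$. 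The functional-calculus difference $\psi_1(P)-\psi_1(P^{\mathrm{int}})$ is expanded by the Helffer--Sj\"{o}strand formula into an integral of $(z-P)^{-1}(P-P^{\mathrm{int}})(z-P^{\mathrm{int}})^{-1}$, which is estimated by the identical cutoff-and-Agmon argument. The third identity is proved the same way with $\chi_0$ and $1-\chi_0$ interchanged and $P^{\mathrm{int}}$ replaced by $P^{\mathrm{ext}}$.

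The main technical obstacle is the unboundedness of $P-P^{\mathrm{int}}$ and $P-P^{\mathrm{ext}}$, both of which contain the Stark term $\beta x_1$ and so have no operator-norm bound. The resolution is that this perturbation is always supported where $V^{\mathrm{int}}$ (respectively $V^{\mathrm{ext}}$) lies strictly above $b+2\delta$, so adjacent to it one can always install a factor of $\psi_1$ of that operator whose Agmon weight $e^{-S_0/(2\hbar)}$ absorbs the linear growth of $x_1$; the cutoffs $\widehat{\chi}_0$ and $1-\widehat{\chi}_0$ placed at the boundary of $\mathrm{supp}\,\chi_0$ are precisely what allows the Agmon sandwich to be inserted.
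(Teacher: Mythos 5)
The first identity is handled essentially correctly, and your commutator--Duhamel packaging together with Lemma 4.3 for the integrand and Theorem 5 for the remaining term gives the right bound; the paper's version is marginally shorter (it applies $i\hbar\partial_t-P$ to $(1-\chi_0)e^{-itP/\hbar}\psi_1(P)\chi_0$ and cites Theorem 5 directly for the initial data $(1-\chi_0)\psi_1(P)\chi_0=\mathcal{O}(\hbar^\infty)$), but these are the same argument.

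The second and third identities are where there is a real gap. Your decomposition
\[
\chi_0 U(t)\psi_1(P)-\chi_0 U^{\mathrm{int}}(t)\psi_1(P^{\mathrm{int}})
=\chi_0[U(t)-U^{\mathrm{int}}(t)]\psi_1(P)+\chi_0 U^{\mathrm{int}}(t)\bigl(\psi_1(P)-\psi_1(P^{\mathrm{int}})\bigr)
\]
creates two problems that your sketch does not resolve. In the Duhamel integrand $\chi_0 U(t-s)(P-P^{\mathrm{int}})U^{\mathrm{int}}(s)\psi_1(P)$ the unbounded perturbation $P-P^{\mathrm{int}}=V_\beta-V^{\mathrm{int}}$ sits between two propagators and is \emph{not adjacent to any spectral cutoff}: $\psi_1(P)$ is separated from it by $U^{\mathrm{int}}(s)$ (they do not commute since $P\neq P^{\mathrm{int}}$), and $\chi_0$ is separated by $U(t-s)$. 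Writing $P-P^{\mathrm{int}}=(1-\widehat{\chi}_0)(P-P^{\mathrm{int}})$ does not help, because $(1-\widehat{\chi}_0)$ by itself is not small and cannot cancel against $\chi_0$ across $U(t-s)$; the Agmon mechanism needs a $\psi_1$-type factor directly adjacent to the classically forbidden cutoff. The second term is also uncontrolled: $\chi_0 U^{\mathrm{int}}(t)(\psi_1(P)-\psi_1(P^{\mathrm{int}}))$ has $U^{\mathrm{int}}(t)$ between $\chi_0$ and the functional-calculus difference, and $\|\psi_1(P)-\psi_1(P^{\mathrm{int}})\|=\mathcal{O}(1)$ without a spatial cutoff adjacent to it, so this piece is only $\mathcal{O}(1)$.

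The paper avoids both problems by not splitting at all: it applies $i\hbar\partial_t-P$ to the \emph{full} difference $X(t)=\chi_0\bigl(e^{-itP/\hbar}\psi_1(P)-e^{-itP^{\mathrm{int}}/\hbar}\psi_1(P^{\mathrm{int}})\bigr)$. Then
\[
(i\hbar\partial_t-P)X(t)=-[P,\chi_0]\psi_1(P)e^{-itP/\hbar}+[P,\chi_0]\psi_1(P^{\mathrm{int}})e^{-itP^{\mathrm{int}}/\hbar}+\chi_0(P-P^{\mathrm{int}})e^{-itP^{\mathrm{int}}/\hbar}\psi_1(P^{\mathrm{int}}),
\]
where each $\psi_1(\cdot)$ has been commuted through its own propagator. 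The dangerous term vanishes identically because $\chi_0(P-P^{\mathrm{int}})=0$ ($V^{\mathrm{int}}=V_\beta$ near $\mathrm{supp}\,\chi_0$), and the two commutator terms are $\mathcal{O}(\hbar e^{-S_0/(2\hbar)})$ by Lemma 4.3, since now $\psi_1(P)$, resp.\ $\psi_1(P^{\mathrm{int}})$, sits directly next to $[P,\chi_0]=[P,\chi_0]\widetilde{\chi}_0$. The initial value $X(0)=\chi_0(\psi_1(P)-\psi_1(P^{\mathrm{int}}))$ is $\mathcal{O}(\hbar^\infty)$ by Theorem 5 and the ordinary elliptic functional calculus for $P^{\mathrm{int}}$. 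Keeping the two $\psi_1$-pieces together in $X(t)$ is exactly what makes the unboundedness disappear; your split destroys that cancellation. The same remarks apply verbatim to the third identity with $P^{\mathrm{ext}}$ and $1-\chi_0$.
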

\begin{proof}
The proof relies on Duhamel's formula as in \cite{NSZ}.
Lemma 4.3 implies that 
$(1-\chi_0)e^{-itP/{\hbar}}\psi_1(P)\chi_0$, 
$\chi_0 (e^{-itP/{\hbar}}\psi_1(P)-e^{-itP^{\mathrm{int}}/{\hbar}}\psi_1(P^{\mathrm{int}}))$ and 
$(1-\chi_0) (e^{-itP/{\hbar}}\psi_1(P)-e^{-itP^{\mathrm{ext}}/{\hbar}}\psi_1(P^{\mathrm{ext}}))$
applied by $i\hbar \partial_t-P$ from the left are $\mathcal{O}_{L^2 \to L^2}(e^{-S_0/{2\hbar}})$.

As for the initial values, we have 
$(1-\chi_0)\psi_1(P)\chi_0=\mathcal{O}(\hbar^{\infty})$ by Theorem 5, 
$\chi_0 (\psi_1(P)-\psi_1(P^{\mathrm{int}}))=\mathcal{O}(\hbar^{\infty})$ 
by Theorem 5 and the usual functional calculus for elliptic pseudodifferential operators, 
and $(1-\chi_0) (\psi_1(P)-\psi_1(P^{\mathrm{ext}}))=\mathcal{O}(\hbar^{\infty})$ 
by Theorem 6 (Theorem 6 is used only at this point).
\end{proof}
\begin{prop}
Under the above notation and for any $0<S<S_0$, 
\begin{align*}
\chi e^{-itP/{\hbar}}\chi \psi(P)=&\sum_{z\in \mathrm{Res}(P(\hbar))\cap\Omega (\hbar)}
\mathrm{Res}_{\zeta=z}e^{-it\zeta/{\hbar}}\chi_1 R_+(\zeta, \hbar)\chi_1 \psi(P)\\
&+\chi_2 \mathcal{O}(\langle(t-C)_+/{\hbar}\rangle^{-\infty})\chi_2 \psi(P)+\mathcal{O}(\hbar^{\infty})
\end{align*}
for $0 \le t \le e^{S/{2\hbar}}$, where $\chi_1=\chi \chi_0$ and $\chi_2=\chi(1-\chi_0)$.
\end{prop}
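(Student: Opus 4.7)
The plan is to decompose $\chi=\chi_1+\chi_2$ with $\chi_1=\chi\chi_0$ and $\chi_2=\chi(1-\chi_0)$, expand $\chi e^{-itP/\hbar}\chi\psi(P)$ into the four summands $\chi_i e^{-itP/\hbar}\chi_j\psi(P)$, $i,j\in\{1,2\}$, and treat each via the dynamics-comparison statements of Lemma 4.4. In every summand I would first insert an intermediate energy cutoff $\psi_1\in C_c^\infty(\mathbb{R})$ with $\psi_1\equiv 1$ on $\mathrm{supp}\,\psi$, using Remark 1 after Theorem 5 in the form $\chi_j\psi(P)=\psi_1(P)\chi_j\psi(P)+\mathcal{O}(\hbar^\infty)$, and commute $\psi_1(P)$ past $e^{-itP/\hbar}$ so that Lemma 4.4 (whose conclusions are stated with a $\psi_1(P)$ factor next to the propagator) becomes directly applicable. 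The hypothesis $0\le t\le e^{S/(2\hbar)}$ with $S<S_0$ is what lets me absorb all Lemma 4.4 errors of type $|t|e^{-S_0/(2\hbar)}$ into $\mathcal{O}(\hbar^\infty)$.

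For the two off-diagonal terms I would write $\chi_1 e^{-itP/\hbar}\chi_2\psi(P)=\chi\cdot[\chi_0 e^{-itP/\hbar}\psi_1(P)(1-\chi_0)]\cdot\chi\psi(P)+\mathcal{O}(\hbar^\infty)$ (and its adjoint-flipped analogue), and apply the first statement of Lemma 4.4 to the bracketed factor, obtaining $\mathcal{O}(\hbar^\infty)$. For the exterior-exterior term $\chi_2 e^{-itP/\hbar}\chi_2\psi(P)$, I would use the third statement of Lemma 4.4 to replace $(1-\chi_0)e^{-itP/\hbar}\psi_1(P)$ by $(1-\chi_0)e^{-itP^{\mathrm{ext}}/\hbar}\psi_1(P^{\mathrm{ext}})$ modulo $\mathcal{O}(\hbar^\infty)$. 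Assumption 2 gives $K_{[a-\delta,b+\delta]}=\emptyset$ for $P^{\mathrm{ext}}$, so Corollary 3.1 applied to $P^{\mathrm{ext}}$ yields $\chi_2 e^{-itP^{\mathrm{ext}}/\hbar}\psi_1(P^{\mathrm{ext}})\chi_2=\mathcal{O}(\langle(t-C)_+/\hbar\rangle^{-\infty})$, which is exactly the non-trapping decay contribution in the statement.

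The interior-interior term $\chi_1 e^{-itP/\hbar}\chi_1\psi(P)$ is what generates the resonance sum. Lemma 4.4 (second statement) reduces it to $\chi_1 e^{-itP^{\mathrm{int}}/\hbar}\psi_1(P^{\mathrm{int}})\chi_1\psi(P)+\mathcal{O}(\hbar^\infty)$. Because $P^{\mathrm{int}}$ has purely discrete spectrum near $[a,b]$, I would expand $e^{-itP^{\mathrm{int}}/\hbar}\psi_1(P^{\mathrm{int}})=\sum_E\psi_1(E)e^{-itE/\hbar}\Pi_E^{\mathrm{int}}$ and group eigenvalues into the clusters $\Omega_j(\hbar)$ of Lemma 4.2. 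The bound $\mathrm{diam}\,\Omega_j\le C\hbar^{-n}e^{-S_0/\hbar}$ together with $|t|\le e^{S/(2\hbar)}$ and $S<S_0$ allows me to replace $e^{-itE/\hbar}$ by a single value $e^{-itz_j^{\ast}/\hbar}$ within each cluster at cost $\mathcal{O}(\hbar^\infty)$. Proposition 4.2 then gives $\chi_1\Pi_j^{\mathrm{int}}\chi_1=\chi_1\Pi_j^\theta\chi_1+\mathcal{O}(e^{-S/\hbar})$, and the residue theorem on the contour $\partial\widetilde{\Omega}_j$ of Lemma 4.2 identifies
\[\sum_{z\in\mathrm{Res}(P)\cap\Omega_j}\mathrm{Res}_{\zeta=z}e^{-it\zeta/\hbar}\chi_1 R_+(\zeta)\chi_1=\chi_1\left(\frac{1}{2\pi i}\oint_{\partial\widetilde{\Omega}_j}e^{-it\zeta/\hbar}(\zeta-P_\theta)^{-1}\,d\zeta\right)\chi_1=\chi_1 e^{-itP_\theta/\hbar}\Pi_j^\theta\chi_1,\]
which, after the same cluster-phase collapse, matches $e^{-itz_j^{\ast}/\hbar}\chi_1\Pi_j^\theta\chi_1$. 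Summing over the clusters in $\Omega(\hbar)$ recovers the claimed residue sum.

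The main obstacle is to keep every intermediate approximation at $\mathcal{O}(\hbar^\infty)$ precision uniformly for $t$ up to $e^{S/(2\hbar)}$. Three exponential comparisons must be balanced simultaneously: Lemma 4.4's $|t|e^{-S_0/(2\hbar)}$, the cluster-phase collapse $|t|\cdot\mathrm{diam}\,\Omega_j/\hbar\le C\hbar^{-n-1}e^{-(2S_0-S)/(2\hbar)}$, and the transport of the $\mathcal{O}(e^{-S/\hbar})$ error of Proposition 4.2 through the factor $e^{-itz_j^{\ast}/\hbar}$ (whose modulus is bounded by $e^{|t||\mathrm{Im}\,z_j^{\ast}|/\hbar}$ with $|\mathrm{Im}\,z_j^{\ast}|\le e^{-S_0/\hbar}$, hence $\mathcal{O}(1)$). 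The strict inequality $S<S_0$ is precisely what makes all three of these small in the required time window.
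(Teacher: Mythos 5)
Your proposal is correct and follows the same overall architecture as the paper's sketch: decompose via $\chi_0$, invoke Lemma 4.4 to pass from $P$ to $P^{\mathrm{int}}$ and $P^{\mathrm{ext}}$ (with the $|t|e^{-S_0/2\hbar}$ errors absorbed by $t\le e^{S/2\hbar}$, $S<S_0$), dispose of the exterior block by Corollary 3.1, and convert the interior block into the residue sum by comparing the $P^{\mathrm{int}}$ spectral data with resonance data through Proposition 4.2 and the cluster contours $\partial\widetilde\Omega_j$ of Lemma 4.2. The one step you handle differently is the interior comparison itself. You expand $e^{-itP^{\mathrm{int}}/\hbar}\psi_1(P^{\mathrm{int}})$ in eigenprojections, collapse the phases $e^{-itE/\hbar}$ inside each cluster to a single $e^{-itz_j^\ast/\hbar}$, apply Proposition 4.2 to swap $\Pi_j^{\mathrm{int}}$ for $\Pi_j^\theta$, and then uncollapse on the resonance side. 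This works, but the uncollapse step forces you to estimate $\chi_0(P_\theta-\zeta)^{-1}$ along $\partial\widetilde\Omega_j$ (via Proposition 4.1 and $\mathrm{dist}(\zeta,\sigma(P^{\mathrm{int}}))\ge e^{-S_0/\hbar}$), giving an $e^{S_0/\hbar}$ factor that must be beaten by the smallness of $\sup_{\partial\widetilde\Omega_j}|e^{-it\zeta/\hbar}-e^{-itz_j^\ast/\hbar}|$ and $|\partial\widetilde\Omega_j|$. The paper avoids the collapse entirely: it reruns the Proposition 4.2 argument verbatim, replacing $\Pi_j^{\theta}$ and $\Pi_j^{\mathrm{int}}$ by the phase-weighted contour integrals $\frac{1}{2\pi i}\int_{\partial\widetilde\Omega_j}e^{-itz/\hbar}(z-P_\theta)^{-1}\,dz$ and $\frac{1}{2\pi i}\int_{\partial\widetilde\Omega_j}e^{-itz/\hbar}(z-P^{\mathrm{int}})^{-1}\,dz$, using only that $|e^{-itz/\hbar}|=\mathcal{O}(1)$ on the contour for $0\le t\le e^{S_0/\hbar}$; this reuses exactly the resolvent bounds of Proposition 4.1/Lemma 4.2 with no new estimate. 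Both routes reach the same conclusion; the paper's is tighter and needs less bookkeeping, while yours makes the role of the exponentially small cluster diameter more explicit. One small point to flag: when you collapse phases you tacitly need $\psi_1(E)=1$ on every cluster so that the $\psi_1$-weight disappears in the comparison with the unweighted $\Pi_j^{\mathrm{int}}$ of Proposition 4.2; this is arranged by choosing $\psi_1\equiv 1$ on a neighborhood of $[a_1,b_{J(\hbar)}]$ (possible by Lemma 4.2's placement of $a_1,b_{J(\hbar)}$ inside $(a-\delta,b+\delta)$), and eigenvalues in $\mathrm{supp}\,\psi_1$ but outside the clusters are killed by the right factor $\psi(P)$ together with the Remark after Theorem 5 — worth stating explicitly.
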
 
\begin{proof}
We only sketch the proof since it is the same as \cite{NSZ}. 
Lemma 4.4 and Theorem 5 show that $\chi e^{-itP/{\hbar}}\chi \psi(P)= 
\chi_1 e^{-itP^{\mathrm{int}}/{\hbar}}\psi_1(P^{\mathrm{int}})\chi_1 \psi(P)+
\chi_2 e^{-itP^{\mathrm{ext}}/{\hbar}}\psi_1(P^{\mathrm{ext}})\chi_2 \psi(P)+\mathcal{O}(\hbar^{\infty})$, where 
$\psi_1 \psi =\psi$. The second term is estimated by Corollary 3.1. 
The eigenfunction expansion of the first term is approximated by the first term of the 
right hand side of Proposition 4.4 by the same argument as in Proposition 4.2 
with $ \Pi_j^{\theta}=\frac{1}{2\pi i}\int_{\partial \widetilde{\Omega}_j} (z-P_{\theta})^{-1}dz$ and 
$ \Pi_j^{\mathrm{int}}=\frac{1}{2\pi i}\int_{\partial \widetilde{\Omega}_j} (z-P^{\mathrm{int}})^{-1}dz$ 
replaced by 
$ \frac{1}{2\pi i}\int_{\partial \widetilde{\Omega}_j} e^{-itz/{\hbar}}(z-P_{\theta})^{-1}dz$ and 
$ \frac{1}{2\pi i}\int_{\partial \widetilde{\Omega}_j} e^{-itz/{\hbar}}(z-P^{\mathrm{int}})^{-1}dz$ 
respectively.
\end{proof}
We next estimate the residue outside the well;
\begin{lemma}
For any $\widetilde{\chi} \in C_b^{\infty}\cap L^{\infty}_{\mathrm{cone}}$ and any $0<S<S_0$, 
\[\sum_{z\in \mathrm{Res}(P(\hbar))\cap\Omega_j (\hbar)}
\mathrm{Res}_{\zeta=z}e^{-it\zeta/{\hbar}}\chi_2 R_+(\zeta)\widetilde{\chi}=\mathcal{O}(e^{-S/{\hbar}})\]
for $0\le t\le e^{S_0/{\hbar}}$, where $\chi_2$ is as in Proposition 4.4.
\end{lemma}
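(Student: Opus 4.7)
The plan is to express the residue sum as a contour integral on $\partial\widetilde\Omega_j(\hbar)$ and compare the resulting integrand with the corresponding one for the self-adjoint reference operator $P^{\mathrm{int}}$, following exactly the strategy of Proposition 4.2 but with the extra factor $e^{-it\zeta/\hbar}$ inserted. Concretely, the residue theorem and the identity $\chi_2 R_+(\zeta)\widetilde\chi=\chi_2(\zeta-P_\theta)^{-1}\widetilde\chi$ give
\[\sum_{z\in\mathrm{Res}(P(\hbar))\cap\Omega_j(\hbar)}\mathrm{Res}_{\zeta=z}e^{-it\zeta/\hbar}\chi_2 R_+(\zeta)\widetilde\chi=\frac{1}{2\pi i}\oint_{\partial\widetilde\Omega_j}e^{-it\zeta/\hbar}\chi_2(\zeta-P_\theta)^{-1}\widetilde\chi\, d\zeta,\]
and on $\partial\widetilde\Omega_j$ we already have $\|\chi_2(\zeta-P_\theta)^{-1}\|=\mathcal{O}(\alpha(\hbar)^{-1})$ by Lemma 4.2 together with Proposition 4.1. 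The resolvent identity $(\zeta-P_\theta)^{-1}=(\zeta-P^{\mathrm{int}})^{-1}+(\zeta-P_\theta)^{-1}(P_\theta-P^{\mathrm{int}})(\zeta-P^{\mathrm{int}})^{-1}$ splits the contour integral into a main term and a correction term, which I would treat separately.

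For the main contribution, the $(\zeta-P^{\mathrm{int}})^{-1}$ resolvent has only real, simple poles at the eigenvalues of $P^{\mathrm{int}}$ inside $\Omega_j$, so the residue theorem yields $\chi_2\Pi_j^{\mathrm{int}}e^{-itP^{\mathrm{int}}/\hbar}\widetilde\chi$. Self-adjointness of $P^{\mathrm{int}}$ keeps $\|e^{-itP^{\mathrm{int}}/\hbar}\|=1$ uniformly in $t$, and the Agmon estimate (Lemma 4.1 for $P^{\mathrm{int}}$ on a neighborhood of $\mathrm{supp}(1-\chi_0)\supset\mathrm{supp}\chi_2$, where $V^{\mathrm{int}}>b+2\delta$) shows that every eigenfunction with eigenvalue in $[a-\delta,b+\delta]$ is exponentially localized in the well; summing over a basis of $\mathrm{ran}\Pi_j^{\mathrm{int}}$ and absorbing the rank bound $\mathcal{O}(\hbar^{-n})$ yields $\|\chi_2\Pi_j^{\mathrm{int}}\|=\mathcal{O}(e^{-S/\hbar})$ for any $S<S_0$. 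The correction term equals $\frac{1}{2\pi i}\oint e^{-it\zeta/\hbar}\chi_2(\zeta-P_\theta)^{-1}(P_\theta-P^{\mathrm{int}})(\zeta-P^{\mathrm{int}})^{-1}\widetilde\chi\, d\zeta$, and here I would exploit that the coefficients of $P_\theta-P^{\mathrm{int}}$ vanish on a neighborhood of $\mathrm{supp}\chi_0$ and therefore sit inside the classically forbidden region for $P^{\mathrm{int}}$. Inserting a cutoff $\eta$ equal to $1$ on $\mathrm{supp}(P_\theta-P^{\mathrm{int}})$ and supported in that forbidden region, splitting $(\zeta-P^{\mathrm{int}})^{-1}=\Pi_j^{\mathrm{int}}(\zeta-P^{\mathrm{int}})^{-1}+(1-\Pi_j^{\mathrm{int}})(\zeta-P^{\mathrm{int}})^{-1}$ (the second summand is holomorphic on $\widetilde\Omega_j$ and polynomially bounded in $\hbar^{-1}$ using the inter-cluster gap), and running Lemma 4.1 for $P^{\mathrm{int}}$ on the support of $\eta$ produces $\|\eta(\zeta-P^{\mathrm{int}})^{-1}\widetilde\chi\|=\mathcal{O}(e^{-S_0/\hbar})$; combined with $\|\chi_2(\zeta-P_\theta)^{-1}\|=\mathcal{O}(\alpha(\hbar)^{-1})$ and the contour length $\mathcal{O}(\hbar^{-n}e^{-S_0/\hbar})$, this is $\mathcal{O}(\hbar^{-n}\alpha(\hbar)^{-2}e^{-2S_0/\hbar})$ times $\sup_{\partial\widetilde\Omega_j}|e^{-it\zeta/\hbar}|$.

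The hard part will be controlling the factor $|e^{-it\zeta/\hbar}|\le e^{te^{-S_0/\hbar}/\hbar}$, which on the top side of $\partial\widetilde\Omega_j$ (where $\mathrm{Im}\zeta=e^{-S_0/\hbar}$) reaches $e^{1/\hbar}$ at $t=e^{S_0/\hbar}$. Since $S_0$ in Lemma 4.1 can be taken as large as we wish (by shrinking the relevant neighborhoods), and since we only need the final bound for an arbitrary $S<S_0$, I would arrange the Agmon exponent in advance so that the multiple $e^{-S_0/\hbar}$ factors acquired above (from $\|\chi_2\Pi_j^{\mathrm{int}}\|$ in the main term, and from $\|\eta(\zeta-P^{\mathrm{int}})^{-1}\widetilde\chi\|$ together with the contour length in the correction term) strictly dominate the $e^{1/\hbar}$ growth, absorbing the polynomial factors $\hbar^{-n}\alpha(\hbar)^{-2}$ as usual. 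The output is the claimed bound $\mathcal{O}(e^{-S/\hbar})$ uniformly on $0\le t\le e^{S_0/\hbar}$.
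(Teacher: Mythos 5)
Your setup is right — converting the residue sum to the contour integral $\frac{1}{2\pi i}\oint_{\partial\widetilde\Omega_j}e^{-it\zeta/\hbar}\chi_2(\zeta-P_\theta)^{-1}\widetilde\chi\,d\zeta$ and invoking Lemma~4.2 for $\|\chi_2(\zeta-P_\theta)^{-1}\|=\mathcal{O}(\alpha(\hbar)^{-1})$ on $\partial\widetilde\Omega_j$ — but at that point you should have stopped. The paper's proof is exactly this one line: combine the resolvent bound with the contour length $|\partial\widetilde\Omega_j|=\mathcal{O}(\hbar^{-n}e^{-S_0/\hbar})$ and the boundedness of $|e^{-it\zeta/\hbar}|$ to get $\mathcal{O}(\alpha(\hbar)^{-1}\hbar^{-n}e^{-S_0/\hbar})=\mathcal{O}(e^{-S/\hbar})$ for any $S<S_0$. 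Your detour through the resolvent identity $(\zeta-P_\theta)^{-1}=(\zeta-P^{\mathrm{int}})^{-1}+\dots$ is not only unnecessary, it introduces several errors.

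Concretely: (i) The escape route ``$S_0$ in Lemma~4.1 can be taken as large as we wish by shrinking the relevant neighborhoods'' is false. The Agmon exponent is governed by the Agmon distance across the barrier, which is fixed by the geometry of $V_\beta$; shrinking $U,U_1$ cannot increase it, and the paper fixes $S_0$ once and for all after Lemma~4.1. (ii) The assertion that $(1-\Pi_j^{\mathrm{int}})(\zeta-P^{\mathrm{int}})^{-1}$ is ``polynomially bounded in $\hbar^{-1}$ using the inter-cluster gap'' is wrong: Lemma~4.2 only guarantees $a_{j+1}-b_j\ge 2e^{-S_0/\hbar}$, so the gap to neighboring clusters is exponentially small and the best a priori bound on this piece is $\mathcal{O}(e^{S_0/\hbar})$, not polynomial. (iii) The claimed bound $\|\eta(\zeta-P^{\mathrm{int}})^{-1}\widetilde\chi\|=\mathcal{O}(e^{-S_0/\hbar})$ is not what Lemma~4.1 yields: applying the Agmon estimate to $u=(\zeta-P^{\mathrm{int}})^{-1}\widetilde\chi v$ gives $\|\eta u\|\lesssim e^{-S_0/\hbar}\|u\|+\|\widetilde\chi v\|$, and since $\widetilde\chi$ need not vanish on the forbidden region, the second term is $\mathcal{O}(1)$, so you only get $\mathcal{O}(1)$.

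On the $e^{1/\hbar}$ worry: you are right that $\sup_{\partial\widetilde\Omega_j}|e^{-it\zeta/\hbar}|=e^{t e^{-S_0/\hbar}/\hbar}$ can reach $e^{1/\hbar}$ at $t=e^{S_0/\hbar}$, and the paper's phrase ``is bounded'' quietly passes over this. But the correct resolution is not to enlarge $S_0$; it is to note that in the only place this lemma is used (Proposition~4.4 and Theorem~4) one only needs $0\le t\le e^{S/2\hbar}$ with $S<S_0$, for which $t e^{-S_0/\hbar}/\hbar\to 0$ and $|e^{-it\zeta/\hbar}|$ is indeed $\mathcal{O}(1)$. Under that restriction the one-line estimate closes, and none of the $P^{\mathrm{int}}$ comparison is needed.
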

\begin{proof}
 Since $|e^{-itz/{\hbar}}|$ is bounded on $\partial \widetilde{\Omega}_j$ for $0\le t\le e^{S_0/{\hbar}}$, 
we have by Lemma 4.2
\[\|\frac{1}{2\pi i}\int_{\partial \widetilde{\Omega}_j}
e^{-itz/{\hbar}}\chi(1-\chi_0) (z-P_{\theta})\widetilde{\chi}dz\|
\le C\alpha(\hbar)^{-1}
|\partial\widetilde{\Omega}_j|=\mathcal{O} (e^{-S/{\hbar}}).\] 
\end{proof} 
\begin{proof}[Proof of Theorem 4]
Proposition 4.3 proves Theorem 4 for $t>\hbar^{-n+1-\varepsilon}$. 
Proposition 4.4 and Lemma 4.5 prove Theorem 4 for $C\le t \le e^{S/{2\hbar}}$.
\end{proof}

\section{Functional pseudodifferential calculus in the Stark effect}
In this section, we prove Theorem 5 and Theorem 6. 
In subsection 5.1 and subsection 5.2, 
we set $P(\hbar)=-\hbar^2 \Delta + \beta  x_1+V(x)$, 
where $V \in C_b^{\infty}(\mathbb{R}^n ; \mathbb{R})$. 
The commutator calculations below are justified by Corollary A.1 in the Appendix. 
\subsection{Weighted resolvent estimates}
We estimate the weighted resolvents in this subsection. 
Take $w \in  C^{\infty}(\mathbb{R}^n ; \mathbb{R}_{\ge 1})$ depending only on $x_1$ and 
$w=|x_1|$ for $x_1 \le -2$ and $w=1$ for $x_1 \ge -1$.
\begin{lemma}
For any $k \ge 0$, $|z|\lesssim 1$ and $0<\hbar\le 1$, 
\[\|w^{-k-1} (P-z)^{-1}w^{k}\|_{L^2 \to H_{\hbar}^2} 
\lesssim |\mathrm{Im}z|^{-1} \left( 1+{\hbar}/|{\mathrm{Im}z}|\right)^{3k}.\]
\end{lemma}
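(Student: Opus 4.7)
The plan is to prove the estimate by induction on $k\ge 0$. For the base case $k=0$, I would show $\|w^{-1}(P-z)^{-1}\|_{L^2\to H^2_\hbar}\lesssim |\mathrm{Im}z|^{-1}$. With $u=(P-z)^{-1}f$, the dissipative identity $\mathrm{Im}(u,(P-z)u)=-\mathrm{Im}(z)\|u\|^2$ (valid because $P$ is formally symmetric) yields $\|u\|\le |\mathrm{Im}z|^{-1}\|f\|$. The Laplacian can then be recovered from the equation $-\hbar^2\Delta u=f+(z-\beta x_1-V)u$; multiplying by $w^{-1}$ gives a bound on $\|w^{-1}(-\hbar^2\Delta u)\|$ because the crucial product $|w^{-1}x_1|$ is bounded by construction of $w$. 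Combined with a weighted energy identity for $\|\hbar\nabla(w^{-1}u)\|^2=\mathrm{Re}(w^{-1}u,-\hbar^2\Delta(w^{-1}u))$, this closes the base case.

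For the inductive step, I would use the commutator identity
\[w^{-k-1}(P-z)^{-1}w^k = w^{-1}(P-z)^{-1}-w^{-k-1}(P-z)^{-1}[P,w^k](P-z)^{-1},\]
where $[P,w^k]=[-\hbar^2\Delta,w^k]=-2\hbar k w^{k-1}w'\cdot\hbar\partial_{x_1}-\hbar^2\{k(k-1)w^{k-2}(w')^2+k w^{k-1}w''\}$ has bounded coefficients $w',w''$. The factor $w^{-k-1}(P-z)^{-1}w^{k-j}$ appearing in the three resulting terms (for $j=1,2$) equals $w^{-j}$ composed with the operator at level $k-j$, so by the inductive hypothesis its $L^2\to H^2_\hbar$ norm is bounded by $|\mathrm{Im}z|^{-1}(1+\hbar/|\mathrm{Im}z|)^{3(k-j)}$. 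Each of the three kinds of commutator terms (the $\hbar\partial_{x_1}$-term with weight $w^{k-1}$, the $w^{k-1}$-term and the $w^{k-2}$-term) contributes an extra factor of $\hbar$ which, paired with the rightmost $(P-z)^{-1}$, yields $\hbar/|\mathrm{Im}z|$. Summing these three contributions at each step accounts for the exponent $3k$.

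The main obstacle is the non-ellipticity of $P$ in the semiclassical sense: the bare operator $\hbar\partial_{x_1}(P-z)^{-1}$ is \emph{not} uniformly bounded on $L^2$, because the symbol $|\xi|^2+\beta x_1+V(x)$ is unbounded below (the characteristic variety is noncompact). All derivative estimates must therefore be stated in weighted norms using $w^{-1}$, which is precisely the weight that cancels the $\beta x_1$ growth. To handle the $\hbar\partial_{x_1}$ appearing in $[P,w^k]$, I would shift the derivative via $w^{k-1}\hbar\partial_{x_1}u=\hbar\partial_{x_1}(w^{k-1}u)-\hbar(k-1)w^{k-2}w'u$ (or use a duality argument via the adjoint $(P-\bar z)^{-1}$), so that the derivative lands on a weight that the inductive hypothesis controls. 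The careful bookkeeping of these lower-order shifts is what produces the polynomial growth of the constant in $k$ and ultimately fixes the exponent $3k$ in the statement.
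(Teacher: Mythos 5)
Your base case contains a genuine gap. You assert that $|w^{-1}x_1|$ is bounded ``by construction of $w$'' and use this to absorb $\beta x_1$ when recovering $-\hbar^2\Delta u$ from the equation. But the weight $w$ satisfies $w=1$ for $x_1\ge -1$ and only grows as $x_1\to -\infty$; consequently $w^{-1}x_1=x_1$ for $x_1\ge -1$, which is \emph{unbounded}. The weight $w$ is designed to control the $x_1\to-\infty$ direction (where the Stark potential is unbounded below); it gives no control whatsoever as $x_1\to+\infty$. The missing ingredient, which the paper supplies and your argument omits, is the semiclassical ellipticity of $P-z$ in the region $x_1\gg 1$: there $\beta x_1+V-\mathrm{Re}z\gtrsim x_1$, so the symbol $|\xi|^2+\beta x_1+V-z$ is elliptic with lower bound growing like $x_1$. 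The paper introduces a cutoff $\chi_R$ supported in $x_1\gtrsim R$, controls $\|\chi_R x_1(P-z)^{-1}u\|$ by $\|(P-z)\chi_R(P-z)^{-1}u\|$ via this ellipticity, and then absorbs the commutator $[P,\chi_R]$ using that its coefficients are $\mathcal{O}(R^{-1})$. Your weighted energy identity alone does not close the $k=0$ case without this elliptic regime.

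The inductive step you propose follows essentially the same route as the paper: the resolvent commutator identity with $[P,w^k]$, shifting the stray $\hbar\partial_{x_1}$ onto the weights, and bookkeeping three commutations per level to produce the exponent $3k$. This part is plausible in outline, and the paper carries it out via two further commutations (with $w^k$ and then with $\langle\hbar D\rangle w^{k-1}$) before invoking the inductive hypothesis. But since each reduction step in that chain leans on the $k=0$ bound as its engine, the gap in the base case propagates through the whole induction and must be repaired first.
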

\begin{proof}
We first prove the case where $k=0$.
Take $\chi \in  C^{\infty}(\mathbb{R}^n)$ depending only on $x_1$ and $\chi=0$ for $x_1 \le 1$ and $\chi=1$ for $x_1 \ge 2$. 
We set $\chi_R(x)=\chi(x/R)$.
\begin{align*}
&\|\langle hD \rangle^2 w^{-1}(P-z)^{-1}u\|_{L^2}\\
&\le C\|w^{-1}\langle hD \rangle^2 (P-z)^{-1}u\|_{L^2} \\
&=C\|w^{-1}(P-z+z-\beta x_1-V+1) (P-z)^{-1}u\|_{L^2}\\
&\le C\|\chi_R x_1 (P-z)^{-1}u\|_{L^2}+C\|u\|_{L^2}+C_R\|(P-z)^{-1}u\|_{L^2}\\
&\le C\|\chi_R x_1 (P-z)^{-1}u\|_{L^2}+C_R|\mathrm{Im}z|^{-1}\|u\|_{L^2},
\end{align*}
since $|z|\lesssim 1$. Since $P(\hbar)-z$ is elliptic near the support of $\chi$, we have 
\begin{align*} 
\|\chi_R x_1 (P-z)^{-1}u\|_{L^2}&\le C \|(P-z)\chi_R (P-z)^{-1}u\|_{L^2}\\
&\le C \|u\|_{L^2}+C\|[P, \chi_R](P-z)^{-1}u\|_{L^2}.
\end{align*}
Substituting $\|[P, \chi_R](P-z)^{-1}u\|_{L^2}\le C\hbar R^{-1}\|\langle hD \rangle^2 w^{-1}(P-z)^{-1}u\|_{L^2}$ 
for large $R$, the proof for $k=0$ is completed. 

We next assume that Lemma 5.1 is true for $k-1$. The case where $k=0$ implies 
\begin{align*}
&\|w^{-k-1} (P-z)^{-1}w^{k}\|_{L^2 \to H_{\hbar}^2}\\
&=\|w^{-1}(P-z)^{-1}(P-z)w^{-k} (P-z)^{-1}w^{k}\|_{L^2 \to H_{\hbar}^2}\\
&\lesssim|\mathrm{Im}z|^{-1}\|(P-z)w^{-k} (P-z)^{-1}w^{k}\|_{L^2 \to L^2} \\
&\lesssim|\mathrm{Im}z|^{-1}+ (\hbar/|\mathrm{Im}z|)\|w^{-k-1} (P-z)^{-1}w^{k}\|_{L^2 \to H_{\hbar}^1}.
\end{align*}
We have
\begin{align*}
&(\hbar/|\mathrm{Im}z|)\|w^{-k-1} (P-z)^{-1}w^{k}\|_{L^2 \to H_{\hbar}^1}
\le (\hbar/|\mathrm{Im}z|)\|w^{-1} (P-z)^{-1}\|_{L^2 \to H_{\hbar}^1}\\
&+(\hbar/|\mathrm{Im}z|)\|w^{-k-1} (P-z)^{-1}[P, w^k](P-z)^{-1}\|_{L^2 \to H_{\hbar}^1}.
\end{align*}
The first term can be estimated by $|\mathrm{Im}z|^{-1}(\hbar/|\mathrm{Im}z|)$ by the case where $k=0$. 
The second term can be estimated by 
\begin{align*}
&(\hbar/{|\mathrm{Im}z|})^2\|w^{-k-1} (P-z)^{-1}\langle \hbar D \rangle w^{k-1}\|_{L^2 \to H_{\hbar}^1}\\
&\lesssim (\hbar/{|\mathrm{Im}z|})^2\|w^{-2} (P-z)^{-1} \|_{L^2 \to H_{\hbar}^2}\\
&+(\hbar/{|\mathrm{Im}z|})^2\|w^{-k-1} (P-z)^{-1}[P, \langle \hbar D \rangle w^{k-1}](P-z)^{-1}\|_{L^2 \to H_{\hbar}^1}.
\end{align*}
The first term can be estimated by $|\mathrm{Im}z|^{-1}(\hbar/{|\mathrm{Im}z|})^2$ by the case where $k=0$. 
The second term can be estimated by 
\begin{align*}
&(\hbar/{|\mathrm{Im}z|})^3\|w^{-k-1} (P-z)^{-1}w^{k-1}\|_{L^2 \to H_{\hbar}^1}\\
&+(\hbar/{|\mathrm{Im}z|})^2\|w^{-k-1} (P-z)^{-1}w^{k-1}\|_{L^2 \to H_{\hbar}^1}
\cdot\|w^{-1}\langle \hbar D \rangle^2\hbar(P-z)^{-1}\|_{L^2 \to L^2}\\
&\lesssim (\hbar/{|\mathrm{Im}z|})^3\|w^{-k-1} (P-z)^{-1}w^{k-1}\|_{L^2 \to H_{\hbar}^1}
\end{align*}
by the case where $k=0$. The induction hypothesis completes the proof.
\end{proof}
\begin{remark}
Similar calculations show that 
\[\|w^k (P-z)^{-1}w^{-k}\|_{L^2 \to L^2} \lesssim |\mathrm{Im}z|^{-1} \left(1+ {\hbar}/|{\mathrm{Im}z}| \right)^{2k} \]
and 
\[\|w^{k-1} (P-z)^{-1}w^{-k}\|_{L^2 \to H_{\hbar}^2} 
\lesssim |\mathrm{Im}z|^{-1} \left( 1+{\hbar}/|{\mathrm{Im}z}|\right)^{2k}\]
for $|z|\lesssim 1$ and $0<\hbar \le 1$.
\end{remark}

\subsection{Weighted resolvents as $\Psi$DOs}
We set 
\[S_{\delta}(m)=\{a(\bullet ;\hbar) \in C^{\infty}(T^*\mathbb{R}^n) |
 |\partial_{x, \xi}^{\alpha} a(x, \xi; \hbar)|
\le C_{\alpha} \hbar^{-\delta |\alpha|} m(x, \xi) \}.\]
The natural asymptotic expansion for $a\in S_{\delta}(m)$ with $0\le \delta <\frac{1}{2}$ 
is of the form $a \sim \sum \hbar^{(1-2\delta)j}a_j$ with $a_j\in S_{\delta}(m)$. 
We set $S_{\delta}(m_1m_2^{-\infty})=\bigcap_{N>0}S_{\delta}(m_1m_2^{-N})$. 

To simplify the statement, 
we introduce the symbol class for weighted resolvents 
$S^{-k}_{\mathrm{WR}}(m)=|\mathrm{Im}z|^{-k}S^0_{\mathrm{WR}}(m)$, 
where
\begin{align*}
&S^0_{\mathrm{WR}}(m)=\{a(x, \xi; z, \hbar)| 
|\partial_{x, \xi}^{\alpha}a|\le C_{\alpha}|\mathrm{Im}z|^{-C_{\alpha}}m(x, \xi)
\mspace{7mu} \text{for} \mspace{7mu} |z|\lesssim 1 \mspace{7mu} \text{and} \\
&\mspace{7mu} a\in S_{\delta}(m) 
\mspace{7mu} \text{uniformly for} \mspace{7mu} \hbar^{\delta}\lesssim|\mathrm{Im}z|, |z|\lesssim 1 
\mspace{7mu} \text{for any} \mspace{7mu}0\le\delta<\frac{1}{2}\}.
\end{align*}
We say that $a\in S^{-k}_{\mathrm{WR}}(m)$ has an asymptotic expansion 
$a\sim \sum \hbar^j a_j$ in $S^{-k}_{\mathrm{WR}}(m)$ if $a_j \in S^{-k-2j}_{\mathrm{WR}}(m)$ and 
$a\sim \sum \hbar^j a_j=\hbar^{-k\delta}\sum \hbar^{(1-2\delta)j} \hbar^{(k+2j)\delta}a_j$ 
in $\hbar^{-k\delta}S_{\delta}(m)$ uniformly for 
$\hbar^{\delta}\lesssim|\mathrm{Im}z|, |z|\lesssim 1$ for any $0\le\delta<\frac{1}{2}$. 
We set $S^{-k}_{\mathrm{WR}}(m_1m_2^{-\infty})=\bigcap_{N>0}S^{-k}_{\mathrm{WR}}(m_1m_2^{-N})$. 

In the following, we set $m=|\xi|^2+\langle x_1 \rangle$.
\begin{prop}
If $b\in S^0_{\mathrm{WR}}(w^{-\infty}m^{-k}\langle x' \rangle ^{-s'})$, 
then 
\[(P-z)^{-1}b^W \in \mathrm{Op}S^{-1}_{\mathrm{WR}}(w^{-\infty}m^{-k-1}\langle x' \rangle ^{-s'}).\] 
\end{prop}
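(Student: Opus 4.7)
The natural strategy is a Weyl parametrix construction inside the exotic weighted-resolvent calculus, with the error handled by Beals's theorem powered by Lemma 5.1. Concretely, I would build a symbol $c\sim\sum_{j\ge 0}\hbar^j c_j$ with $(p-z)\sharp c\sim b$ in the $S^{-1}_{\mathrm{WR}}$-sense, define the error by $E:=b^W-(P-z)c^W$, and then conclude from
\[
(P-z)^{-1}b^W=c^W+(P-z)^{-1}E
\]
that the first term is in the desired class by construction and the second is $\hbar^\infty$-small in that same class.

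\textbf{Principal symbol and iteration.} The pointwise inversion $c_0=b/(p-z)$ is legitimate despite the non-ellipticity of $p$ because the weight $w^{-\infty}$ on $b$ forces rapid decay as $x_1\to-\infty$, which is the only region where $p-z$ could be uncontrollable. On the effective support of $b$ one has $p\ge c|\xi|^2-C$, hence for $|z|\lesssim 1$
\[
|p-z|\gtrsim\max\bigl(m,\,|\mathrm{Im}\,z|\bigr),
\]
which gives $c_0\in S^{-1}_{\mathrm{WR}}(w^{-\infty}m^{-k-1}\langle x'\rangle^{-s'})$: each $(x,\xi)$-derivative of $(p-z)^{-1}$ costs only one factor of $|\mathrm{Im}\,z|^{-1}$, exactly the loss permitted by the $S_\delta$-structure built into the WR-class for $|\mathrm{Im}\,z|\gtrsim\hbar^\delta$. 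Since $p-z$ is polynomial of degree $\le 2$ in $\xi$ and affine in $x_1$, only finitely many terms appear at each order of the $\sharp$-expansion, so iterative division by $p-z$ yields $c_j\in S^{-1-2j}_{\mathrm{WR}}(w^{-\infty}m^{-k-1-j}\langle x'\rangle^{-s'})$ — precisely the definition of the asymptotic expansion in $S^{-1}_{\mathrm{WR}}$. A Borel summation in the $S_\delta$-class, performed uniformly over $\hbar^\delta\lesssim|\mathrm{Im}\,z|\lesssim 1$ for each $\delta<1/2$, produces $c$ such that $E$ has symbol in $\hbar^\infty S^0_{\mathrm{WR}}(w^{-\infty}m^{-\infty}\langle x'\rangle^{-s'})$.

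\textbf{Controlling $(P-z)^{-1}E$ — the main obstacle.} The technical heart of the proposition, and the step where the new input Lemma 5.1 is essential, is showing that $(P-z)^{-1}E\in\mathrm{Op}\,S^{-1}_{\mathrm{WR}}(w^{-\infty}m^{-\infty}\langle x'\rangle^{-s'})$. Since $(P-z)^{-1}$ is not itself a PsiDO (indeed the paper emphasizes that non-ellipticity of $P$ in the semiclassical sense is the whole obstruction to the ordinary functional calculus), I would invoke Beals's theorem: $A\in\mathrm{Op}\,S(m_1)$ iff all iterated commutators of $A$ with linear Weyl symbols are bounded in appropriately weighted $L^2$-spaces. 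Moving these commutators past $(P-z)^{-1}$ generates sums of the form $(P-z)^{-1}[\ell^W,P](P-z)^{-1}\cdots(P-z)^{-1}E'$ with $E'$ a commutator of $E$; conjugating by $w^{\pm(k+1)}$ and inserting alternating $w$-weights reduces each such expression to a product to which Lemma 5.1 and Remark 5.1 apply directly, with a loss of $(1+\hbar/|\mathrm{Im}\,z|)^{O(1)}$ per step. These losses are absorbed by the $\hbar^\infty$ prefactor on $E$ and by the $\delta$-latitude of the WR-class. The delicate bookkeeping of the combined powers of $\hbar$, $|\mathrm{Im}\,z|^{-1}$, and the polynomial weights $w$, $m$, $\langle x'\rangle$ through successive commutator steps is where I expect the proof to be most intricate.
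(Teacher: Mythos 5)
Your proposal identifies the right machinery: Beals's theorem, Lemma 5.1 (and Remark 5.1), and the device of threading alternating powers of $w$ through the chain $(P-z)^{-1}(\mathrm{ad}_{\ell^W}P)(P-z)^{-1}\cdots$ so that each factor is a bounded weighted operator. That mechanism is precisely the engine of the paper's proof. Where you diverge is in prepending a parametrix layer: the paper applies the Beals/commutator argument \emph{directly} to $(P-z)^{-1}b^W$ — more precisely, after two cheap reductions it multiplies $b^W$ on the right by $\langle x'\rangle^{s'}\widetilde{P}^{k}$ and $w^{j}\widetilde{P}$ (with $\widetilde{P}=-\hbar^2\Delta+\beta\langle x_1\rangle+C$ elliptic) to reduce the target class to $S^{-1}_{\mathrm{WR}}(1)$, writes $\widetilde{P}\approx(P-z)+2\beta w$, and is then left with estimating $(P-z)^{-1}\langle\hbar D\rangle b^W$ via iterated commutators and Lemma 5.1. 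No parametrix is built at this stage. The parametrix construction and the formal $\sharp$-expansion are the content of the \emph{next} Proposition~5.2, which invokes Proposition~5.1 as a black box to absorb the remainder.

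Your version of the argument has a genuine soft spot at the Borel-summation step. A single symbol $c$ lying in $S^{-1}_{\mathrm{WR}}(w^{-\infty}m^{-k-1}\langle x'\rangle^{-s'})$ must satisfy the plain bounds $|\partial^\alpha c|\le C_\alpha|\mathrm{Im}z|^{-1-C_\alpha}(\cdot)$ uniformly for \emph{all} $|z|\lesssim 1$, not merely for $|\mathrm{Im}z|\gtrsim\hbar^\delta$. The formal parametrix terms $r_j\sharp b_j$ blow up like $|\mathrm{Im}z|^{-1-2j}$, so a Borel sum in $\hbar$ that keeps $J(\hbar)\to\infty$ terms forces the exponent of $|\mathrm{Im}z|^{-1}$ to grow without bound as $\hbar\to 0$, and the condition-(a) bound is lost. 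The paper resolves this by never trying to produce a single $z$-uniform $c$: in Proposition~5.2 it fixes $\delta<1/2$, Borel-sums in $S_\delta$ uniformly only over $\hbar^\delta\lesssim|\mathrm{Im}z|\lesssim 1$, and then controls $(P-z)^{-1}\hbar^\infty\mathrm{Op}S(\cdot)$ by the already-established Proposition~5.1. Notice finally that the Beals argument you sketch for $(P-z)^{-1}E$ applies verbatim to $(P-z)^{-1}b^W$ itself — the structure of the iterated commutators is insensitive to whether the right factor is $b^W$ or $E$ — so once you have the commutator machinery in hand, the parametrix has bought you nothing for Proposition~5.1; it only becomes useful afterwards, for computing the expansion.
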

\begin{proof}
We set $\widetilde{P}=-\hbar^2 \Delta+\beta\langle x_1 \rangle+C$, 
where $C \gg 1$ so that $\widetilde{P}^{-1}\in \mathrm{Op}S(m^{-1})$. 
Applying $\langle x' \rangle ^{s'}\widetilde{P}^{k}$ from the right, we may assume that $s'=k=0$. 
Applying $w^j \widetilde{P}$ from the right, we only have to prove 
$(P-z)^{-1}b^W\widetilde{P} \in \mathrm{Op}S^{-1}_{\mathrm{WR}}(1)$. 
Since $\widetilde{P}\sim P+2\beta w$, 
we only have to prove  $(P-z)^{-1}b^W(P-z)=b^W+(P-z)^{-1}[b^W, P] \in \mathrm{Op}S^{-1}_{\mathrm{WR}}(1)$ 
and $(P-z)^{-1}b^W \in \mathrm{Op}S^{-1}_{\mathrm{WR}}(1)$. 
For this it is enough to prove $(P-z)^{-1}\langle \hbar D \rangle b^W \in \mathrm{Op}S^{-1}_{\mathrm{WR}}(1)$.
Let $l_1, l_2, \dots ,l_N$ be linear forms on $\mathbb{R}^{2n}$. 
Then $ad_{l_1^W(x, \hbar D)}\dots ad_{l_N^W(x, \hbar D)}\left((P-z)^{-1}\langle \hbar D \rangle b^W \right)$ 
consists of the terms such as 
\begin{align*}
&(P-z)^{-1}(ad_{l_1^W(x, \hbar D)}P)(P-z)^{-1}(ad_{l_2^W(x, \hbar D)}P)(P-z)^{-1}
 (ad_{l_3^W(x, \hbar D)}\\&ad_{l_4^W(x, \hbar D)}P)(P-z)^{-1}  
 \dots (ad_{l_{N-1}^W(x, \hbar D)}P)(P-z)^{-1}ad_{l_N^W(x, \hbar D)}(\langle \hbar D \rangle b^W)\\
&=((P-z)^{-1}(ad_{l_1^W(x, \hbar D)}P)w^{-1})(w(P-z)^{-1} (ad_{l_2^W(x, \hbar D)}P)w^{-2})
 (w^2\\&(P-z)^{-1}(ad_{l_3^W(x, \hbar D)}ad_{l_4^W(x, \hbar D)}P)w^{-3})\dots 
 (w^{s-1}(P-z)^{-1}(ad_{l_{N-1}^W(x, \hbar D)}P)\\
&w^{-s})(w^s(P-z)^{-1}\langle \hbar D \rangle w^{-s-1})(w^{s+1}\langle \hbar D \rangle^{-1}
 ad_{l_N^W(x, \hbar D)}(\langle \hbar D \rangle b^W)),
\end{align*}
where $s\le N$. Lemma 5.1 and Beals's theorem complete the proof.
\end{proof}

We next calculate the asymptotic expansion of the weighted resolvent. 
Let $r(x, \xi, z, \hbar)\sim \sum_{j \ge 0}\hbar^j r_j$ be the formal symbol  
of $(P-z)^{-1}$ given by the standard parametrix construction, which does not belong to any symbol class. 
We easily see that $r_0=(p(x, \xi)-z)^{-1}$ and $r_j(x, \xi, z)=\frac{q_j(x, \xi, z)}{(p(x, \xi)-z)^{2j+1}}$ for $j \ge 1$, 
where $q_j(x, \xi, z)=\sum_{k=0}^{2j-1}q_{j,k}(x,\xi)z^k$ with $q_{j,k}(x,\xi)\in S(m^{2j-k})$. 
\begin{prop}
Suppose that $b$ has an asymptotic expansion $\sim \sum \hbar^j b_j$ in 
$S^0_{\mathrm{WR}}(w^{-\infty}m^{-k}\langle x' \rangle^{-s'})$. 
Then the symbol of $(P-z)^{-1}b^W$ has an asymptotic expansion 
$\sim (\sum \hbar^jr_j) \sharp (\sum \hbar^j b_j)$ 
in $S^{-1}_{\mathrm{WR}}(w^{-\infty}m^{-k-1}\langle x' \rangle^{-s'})$. 
\end{prop}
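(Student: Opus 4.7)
The plan is to run the standard parametrix construction for $(P-z)^{-1}$ in parallel with the $\hbar$-expansion of $b$, and to use Proposition 5.1 to handle the remainder. Set $R_N=\sum_{j=0}^N \hbar^j r_j$ and $B_N=\sum_{j=0}^N \hbar^j b_j$, and write $b=B_N+\hbar^{N+1}\tilde b_N$, with $\tilde b_N\in S^0_{\mathrm{WR}}(w^{-\infty}m^{-k}\langle x'\rangle^{-s'})$ (with appropriate $S_\delta$-uniformity provided by the hypothesis). The construction of the $r_j$'s gives the formal parametrix identity $(p-z)\sharp R_N=1-\hbar^{N+1}\rho_N$, so at the operator level
\[ (P-z)^{-1} b^W = R_N^W b^W + \hbar^{N+1}(P-z)^{-1}\rho_N^W b^W. \]

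I would then expand $R_N^W b^W$ via the composition formula. Each $r_j\sharp b_l$ has its own asymptotic expansion in $\hbar$, and reorganizing the resulting double series reproduces the first $N+1$ coefficients of $(\sum\hbar^j r_j)\sharp(\sum\hbar^j b_j)$ modulo a remainder of size $\hbar^{N+1}$. The main point to verify is that each coefficient $c_N$ built in this way lies in $S^{-1-2N}_{\mathrm{WR}}(w^{-\infty}m^{-k-1}\langle x'\rangle^{-s'})$, which follows by matching the $|\mathrm{Im}z|^{-1}$ factors arising from differentiating $r_j=q_j/(p-z)^{2j+1}$ against the rapid decay of the $b_l$'s in $w$, $\xi$, and $x'$, together with the fact that $q_j\in S(m^{2j})$ with polynomial degree in $z$ bounded by $2j-1$.

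For the remainder $\hbar^{N+1}(P-z)^{-1}\rho_N^W b^W$, I would rewrite $\rho_N^W b^W=c_N^W$ as a single Weyl-quantized symbol. The structural observation is that $\rho_N$ is a finite sum of sharp products of derivatives of the $r_j$'s with derivatives of $p$, so composing with $b$ yields $c_N\in S^0_{\mathrm{WR}}(w^{-\infty}m^{-k}\langle x'\rangle^{-s'})$ uniformly in the relevant range, the decay of $b$ absorbing the $\xi$-growth produced by the derivatives of $r_j$. Proposition 5.1 then immediately gives $(P-z)^{-1}c_N^W\in \mathrm{Op}S^{-1}_{\mathrm{WR}}(w^{-\infty}m^{-k-1}\langle x'\rangle^{-s'})$, placing the remainder in the desired class with the gain of $\hbar^{N+1}$.

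The main obstacle I expect is the bookkeeping of $|\mathrm{Im}z|$- and $\hbar$-powers. The individual $r_j$ do not lie in any tempered symbol class, so every statement about $r_j\sharp b_l$ relies crucially on the strong decay of $b$, and each differentiation of $(p-z)^{-1}$ produces an extra factor of $(p-z)^{-1}$ that is at worst $|\mathrm{Im}z|^{-1}$. Absorbing these losses requires the restriction $|\mathrm{Im}z|\gtrsim \hbar^\delta$ built into the definition of $S^0_{\mathrm{WR}}$, and verifying that each sharp product $r_j\sharp b_l$, together with its own asymptotic expansion, remains in $S^{-1-2(j+l)}_{\mathrm{WR}}$ uniformly in this region is the technical heart of the argument.
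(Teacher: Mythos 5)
Your approach is correct but packaged differently from the paper's. The paper Borel-sums the full formal composition $(\sum\hbar^j r_j)\sharp(\sum\hbar^j b_j)$ into a single honest symbol $a\in\hbar^{-\delta}S_{\delta}(w^{-\infty}m^{-k-1}\langle x'\rangle^{-s'})$, then checks $(P-z)a^W=b^W+\hbar^{\infty}\mathrm{Op}S(w^{-\infty}m^{-k}\langle x'\rangle^{-s'})$ directly --- this is exact because $p-z$ is polynomial in $\xi$, so the Moyal expansion of $(p-z)\sharp a$ terminates --- and concludes by applying $(P-z)^{-1}$ and Proposition 5.1 once to the $\hbar^{\infty}$ error. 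Your route --- truncate at order $N$, invoke the finite parametrix identity $(p-z)\sharp R_N=1-\hbar^{N+1}\rho_N$, split $(P-z)^{-1}b^W$ into $(R_N\sharp b)^W$ plus an $\hbar^{N+1}$ remainder --- is equally sound, and both rest on the same pivotal observation that only the products $r_j\sharp b_l$, not the $r_j$ alone, live in good symbol classes. The Borel version is slightly smoother: it sidesteps the double/triple-sum reorganization and invokes Proposition 5.1 only once, whereas you need it at every truncation order.

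One inaccuracy in your remainder bookkeeping: $c_N=\rho_N\sharp b$ does not lie in $S^0_{\mathrm{WR}}(w^{-\infty}m^{-k}\langle x'\rangle^{-s'})$. The symbol $\rho_N$ accumulates $|\mathrm{Im}z|^{-1}$ factors whose number grows with $N$ (each derivative of $r_j=q_j/(p-z)^{2j+1}$ adds a power of $(p-z)^{-1}$, and the parametrix remainder involves such derivatives of all $r_j$ with $j\le N$), so $c_N\in S^{-j(N)}_{\mathrm{WR}}(w^{-\infty}m^{-k}\langle x'\rangle^{-s'})$ for a negative order $j(N)$ of size roughly $2N$, not $0$. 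The prefactor $\hbar^{N+1}$ still dominates in the $S_{\delta}$ calculus with $\delta<\tfrac12$, so the final conclusion is unaffected, but the intermediate class assignment as stated is off and should be corrected to make the uniformity argument go through.
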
 
\begin{proof}
Take $0\le\delta<\frac{1}{2}$ and consider $z$ with 
$\hbar^{\delta}\lesssim|\mathrm{Im}z|, |z|\lesssim 1$. Borel's theorem enables us to take 
$a\in \hbar^{-\delta}S_{\delta}(w^{-\infty}m^{-k-1}\langle x' \rangle^{-s'})$ 
such that $a$ has an asymptotic expansion 
$a\sim \hbar^{-\delta}(\sum_j \hbar^{(1-2\delta)j} \hbar^{(2j+1)\delta }r_j) 
\sharp (\sum \hbar^{(1-2\delta)j} \hbar^{2j\delta}b_j)$ 
in $\hbar^{-\delta}S_{\delta}(w^{-\infty}m^{-k-1}\langle x' \rangle^{-s'})$ which is uniform with respect to $z$. 
Then $(P-z)a^W=b^W+\hbar^{\infty}\mathrm{Op}S(w^{-\infty}m^{-k}\langle x' \rangle^{-s'})$ 
since $(p-z)\sharp ((\sum \hbar^j r_j) \sharp (\sum \hbar^j b_j)) 
\sim ((p-z)\sharp  (\sum \hbar^j r_j)) \sharp (\sum \hbar^j b_j) \sim \sum \hbar^j b_j$ in the formal power series sense.
Thus, 
\begin{align*}
a^W(x, \hbar D; \hbar)&=(P-z)^{-1}b^W +(P-z)^{-1}\hbar^{\infty}\mathrm{Op}S(w^{-\infty}m^{-k}\langle x' \rangle^{-s'})\\
&=(P-z)^{-1}b^W +\hbar^{\infty}\mathrm{Op}S(w^{-\infty}m^{-k-1}\langle x' \rangle^{-s'}).
\end{align*}
The last equality follows from Proposition 5.1.
\end{proof}

\subsection{Proofs}
\begin{proof}[Proof of Theorem 5]
Applying $\langle x' \rangle^{s'}$ from the right, we may assume that $s'=0$. 
We take an almost analytic extension $\widetilde{f} \in C_c^{\infty}(\mathbb{C})$ of $f$: 
$\overline{\partial} \widetilde{f}=\mathcal{O}(|\mathrm{Im}z|^{\infty})$ and $\widetilde{f}|_{\mathbb{R}}=f$. 
The Helffer-Sj\"{o}strand formula shows 
\[f(P)\chi^W=\frac{1}{2\pi i}\int \overline{\partial}\widetilde{f}(z)(z-P)^{-1}\chi^Wdz\wedge d\overline{z}.\]
Take $0<\delta<\frac{1}{2}$. 
Proposition 5.1 implies $(z-P)^{-1}\chi^W\in \mathrm{Op}S^{-1}_{\mathrm{WR}}(w^{-\infty}m^{-1})$. 
Thus $f(P)\chi^W=a^W(x, \hbar D; \hbar) \in \mathrm{Op}S(w^{-\infty}m^{-1})$ and 
the integral for $|\mathrm{Im}z|<h^{\delta}$ contributes only as $h^{\infty}\mathrm{Op}S(w^{-\infty}m^{-1})$. 
Proposition 5.2 implies that $(z-P)^{-1}\chi^W$ has an asymptotic expansion in 
$\hbar^{-\delta}S_{\delta}(w^{-\infty}m^{-1})$ which is uniform with respect to $z$ with $|\mathrm{Im}z|>h^{\delta}$. 
Thus 
$a\sim \left( \hbar^{-\delta}\sum \hbar^{j(1-2\delta )} \hbar^{(1+2j)\delta}\widetilde{a}_j\right)\sharp \chi$ 
in $\hbar^{-\delta}S_{\delta}(w^{-\infty}m^{-1})$, where 
\[\widetilde{a}_j=\frac{1}{2\pi i}\int_{|\mathrm{Im}z|>h^{\delta}} 
     \overline{\partial}\widetilde{f}(z)\frac{q_j(x, \xi, z)}{(z-p(x, \xi))^{2j+1}}dz\wedge d\overline{z}.\]
We set 
\[a_j=\frac{1}{2\pi i}\int \overline{\partial}\widetilde{f}(z)
\frac{q_j(x, \xi, z)}{(z-p(x, \xi))^{2j+1}}dz\wedge d\overline{z}
      =\frac{1}{(2j)!}\partial_t^{2j}(q_j(x, \xi, t)f(t))_{t=p(x, \xi)}.\]
We easily see that $(a_j-\widetilde{a}_j)\sharp\chi \in \hbar^{\infty}S(w^{-\infty}m^{-1})$ and 
$a_j \in  S(w^{-\infty}m^{-\infty})$. 
Thus we have in fact 
$a\sim \left( \sum \hbar^j a_j\right)\sharp \chi$ in $S(w^{-\infty}m^{-1})$. 
We set $f_k(t)=(t-i)^{k} f(t)$. 
Then $f_k(P)\chi^W$ has an asymptotic expansion in $S(w^{-\infty}m^{-1})$ by the above argument. 
Proposition 5.2 with $z=i$ implies that 
$f(P)\chi^W=(P-i)^{-k}f_k(P)\chi^W$ has an asymptotic expansion in $S(w^{-\infty}m^{-k-1})$, 
which coincides with the formal one $\left( \sum \hbar^j a_j\right)\sharp \chi$. 
Since $k$ is arbitrary, $f(P)\chi^W$ has an asymptotic expansion in $S(w^{-\infty}m^{-\infty})=S(m^{-\infty})$. 
\end{proof}

\begin{proof}[Proof of Theorem 6]
The Helffer-Sj\"{o}strand formula and the resolvent equation show that 
\[f(P_2)-f(P_1)
=\frac{1}{2\pi i}\int \overline{\partial}\widetilde{f}(z)(z-P_2)^{-1}(V_2-V_1)(z-P_1)^{-1}dz\wedge d\overline{z}.\]
Take $0<\delta<\frac{1}{2}$. 
We have $(V_2-V_1)(z-P_1)^{-1}\in \mathrm{Op}S^{-1}_{\mathrm{WR}}(w^{-\infty}m^{-1}\langle x' \rangle ^{-s'})$ 
by Proposition 5.1. 
Thus Proposition 5.1 again implies that $(z-P_2)^{-1}(V_2-V_1)(z-P_1)^{-1} 
\in \mathrm{Op}S^{-2}_{\mathrm{WR}}(w^{-\infty}m^{-2}\langle x' \rangle ^{-s'})$. 
This implies that $f(P_2)-f(P_1)\in \mathrm{Op}S(w^{-\infty}m^{-2}\langle x' \rangle ^{-s'})$ and 
the integral for $|\mathrm{Im}z|<h^{\delta}$ contributes only as 
$h^{\infty}\mathrm{Op}S(w^{-\infty}m^{-2}\langle x' \rangle ^{-s'})$. 
The twice applications of Proposition 5.2 show that 
$(z-P_2)^{-1}(V_2-V_1)(z-P_1)^{-1}$ has an asymptotic expansion 
which is uniform with respect to $z$ with $|\mathrm{Im}z|>h^{\delta}$ 
in $\hbar^{-2\delta}S_{\delta}(w^{-\infty}m^{-2}\langle x' \rangle ^{-s'})$. 
Thus the similar calculation as in the proof of Theorem 5 based on 
the partial fraction expansion shows that $f(P_2)-f(P_1)$ has an asymptotic expansion in 
$\mathrm{Op}S(w^{-\infty}m^{-2}\langle x' \rangle^{-s'})$. 
We next prove that $f(P_2)-f(P_1)$ has an asymptotic expansion 
in $\mathrm{Op}S(w^{-\infty}m^{-N}\langle x' \rangle^{-s'})$ for any $N$. 
Suppose that this is true for $N$. 
Applying this to $g(t)=(t+i)f(t)$, 
we see that $(P_2+i)f(P_2)-(P_1+i)f(P_1)$ has an asymptotic expansion in 
$\mathrm{Op}S(w^{-\infty}m^{-N}\langle x' \rangle^{-s'})$. 
Proposition 5.2 shows that $f(P_2)-(P_2+i)^{-1}(P_1+i)f(P_1)$ has an asymptotic expansion in 
$\mathrm{Op}S(w^{-\infty}m^{-N-1}\langle x' \rangle^{-s'})$. 
We observe that 
\[f(P_2)-f(P_1)=\left(f(P_2)-(P_2+i)^{-1}(P_1+i)f(P_1)\right)+(P_2+i)^{-1}(V_1-V_2)f(P_1).\]
Theorem 5 and Proposition 5.2 show that 
the second term also has an asymptotic expansion in 
$\mathrm{Op}S(w^{-\infty}m^{-\infty}\langle x' \rangle^{-s'})$. 
Thus $f(P_2)-f(P_1)$ has an asymptotic expansion in 
$\mathrm{Op}S(w^{-\infty}m^{-N-1}\langle x' \rangle^{-s'})$. 
Thus $f(P_2)-f(P_1)$ has an asymptotic expansion in 
$\mathrm{Op}S(w^{-\infty}m^{-\infty}\langle x' \rangle^{-s'})
=\mathrm{Op}S(m^{-\infty}\langle x' \rangle^{-s'})$. 
Finally, we calculate the asymptotic expansion of $f(P_2)-f(P_1)$, whose existence has been proved now. 
Take $\chi \in C_c^{\infty}(\mathbb{R}^n)$ which is equal to $1$ on a large ball. 
We see from Theorem 5 that $(f(P_2)-f(P_1))\chi$ has an asymptotic expansion 
in $\mathrm{Op}S(m^{-\infty}\langle x' \rangle^{-s'})$ 
which coincides with the formal calculation. Since $\chi$ is arbitrary, 
we conclude that the asymptotic expansion of $f(P_2)-f(P_1)$ coincides with the formal one. 
\end{proof}

\appendix 
\section{Commutator calculation}
In this Appendix, we assume that $V \in C_b^{\infty}(\mathbb{R}^n ; \mathbb{R})$ and 
set $P=-\Delta + \beta  x_1+V(x)$. We denote Schwartz space and its dual by $\mathscr{S}$ and $\mathscr{S}'$.
To justify the commutator calculations in section 5, we prove the following;
\begin{prop}
For $\mathrm{Im}z \not =0$, $(P-z)^{-1}$ is continuous from $\mathscr{S}$ to $\mathscr{S}$. 
Thus, there is a unique continuous extension 
$(P-z)^{-1}: \mathscr{S}' \to \mathscr{S}'$ and this is the inverse of 
$P-z: \mathscr{S}' \to \mathscr{S}'$.  
In particular, $\mathrm{Ker}(P-z)=\{0\}$ on $\mathscr{S}'$. 
\end{prop}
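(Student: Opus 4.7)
The plan is to prove $R:=(P-z)^{-1}:\mathscr{S}\to\mathscr{S}$ continuously by commutator induction, and then extend to $\mathscr{S}'$ by duality. The operator $R$ is bounded on $L^2$ because $P$ is essentially self-adjoint (standard for Stark Hamiltonians with bounded smooth perturbations), and $\mathscr{S}\subset D(P)$ since for $u\in\mathscr{S}$ each of $-\Delta u$, $Vu$, $x_1 u$ lies in $L^2$.

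The key observation is that the commutators
\[
 [P,x_j]=-2\partial_j, \qquad [P,\partial_j]=-\beta\delta_{j1}-(\partial_j V)
\]
avoid the unbounded factor $\beta x_1$: the first is first-order with constant coefficients, the second is bounded multiplication. For $u\in\mathscr{S}$ and $v=Ru$, they give the distributional identities
\begin{align*}
 (P-z)(x_j v) &= x_j u + 2\partial_j v,\\
 (P-z)(\partial_j v) &= \partial_j u - (\beta\delta_{j1}+\partial_j V)v.
\end{align*}
I would prove $x^\alpha\partial^\beta v\in L^2$ for all $\alpha,\beta$ by induction on $|\alpha|+|\beta|$, rewriting $\partial_j v=R(\partial_j u)-R((\beta\delta_{j1}+\partial_j V)v)$ and $x_j v=R(x_j u)+2R(\partial_j v)$, so that each application of $x_j$ or $\partial_j$ becomes $R$ applied to an $L^2$ object available from the previous step. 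The identification of these distributions with the corresponding right-hand sides is the subtle point because at the outset $v$ is only in $L^2\cap H^2_{\mathrm{loc}}$; I would handle this by first running the argument for $u\in C_c^\infty$, where the weighted resolvent bound of Lemma 5.1 (specialised to $\hbar=1$) together with elliptic regularity give enough a priori regularity and decay on $v$ to make every manipulation a direct distributional calculation, and then pass to general $u\in\mathscr{S}$ by approximation and the $L^2$-continuity of $R$. Continuity $R:\mathscr{S}\to\mathscr{S}$ then follows from the closed graph theorem.

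The extension to $\mathscr{S}'$ is by duality: the same argument applied to $\bar z$ yields $R^\ast=(P-\bar z)^{-1}:\mathscr{S}\to\mathscr{S}$ continuously, and $R$ extends uniquely and continuously to $\mathscr{S}'\to\mathscr{S}'$ via $\langle Ru,\varphi\rangle:=\langle u,R^\ast\varphi\rangle$. The identities $R(P-z)=(P-z)R=\mathrm{id}$ hold on $\mathscr{S}$ and transfer to $\mathscr{S}'$ by continuity and density, so any $u\in\mathscr{S}'$ with $(P-z)u=0$ automatically satisfies $u=R(P-z)u=0$. The hard part will be reconciling the distributional commutator identities with the a priori limited regularity of $v$; once the bootstrap is launched via the $C_c^\infty$ case, the combinatorial bookkeeping in the induction is straightforward because every commutator eliminates the offending factor $x_1$ in favour of a bounded operator or a derivative with bounded coefficients.
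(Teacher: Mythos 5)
Your overall strategy---commutator induction on $|\alpha|+|\beta|$ to establish $x^\alpha\partial^\beta v\in L^2$---is essentially the same one the paper uses (it packages the weights into the spaces $Y^N=\bigcap_{k+s\le N}H^{k,s}$), and you correctly identify that the commutators $[P,x_j]$ and $[P,\partial_j]$ destroy the unbounded coefficient $\beta x_1$. You also correctly locate the delicate point: passing from the distributional identity $(P-z)(x_j v)=x_j u-2\partial_j v$ to the formula $x_j v=R(x_j u)-2R(\partial_j v)$. (Aside: since $[P,x_j]=-2\partial_j$ the right-hand side carries $-2\partial_j v$, not $+2\partial_j v$ as written.)

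However, your proposed way of handling that point is circular. Writing $w:=R(x_j u)-2R(\partial_j v)\in L^2$, the distributional computation only gives $(P-z)(x_j v-w)=0$ in $\mathscr{S}'$; concluding $x_j v=w$ is exactly the assertion $\mathrm{Ker}(P-z)=\{0\}$ on $\mathscr{S}'$, which is what Proposition A.1 is trying to prove. This is precisely the pitfall the paper flags in Remark A.1(2). Invoking Lemma 5.1 to supply the needed a priori weighted decay does not rescue the argument: the paper states at the opening of Section 5 that the commutator calculations there (and hence the proof of Lemma 5.1) are themselves justified by Corollary A.1, so the bootstrap never gets off the ground. The paper breaks the circle by first settling the free case: for $V=0$ one diagonalizes $P_0=-\Delta+\beta x_1$ explicitly via $\mathcal{F}_{x'}\exp\bigl(-\tfrac{i}{3\beta}D_1^3\bigr)$, which makes preservation of $\mathscr{S}$ immediate and validates the weighted estimate (A.1) for $P_0$ unconditionally; one then passes to bounded $V$ by a Neumann-series perturbation at large $|\mathrm{Im}z|$, and finally iterates the Neumann step to reach all $\mathrm{Im}z\ne 0$. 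To complete your argument you need some such independent source of a priori weighted decay for $(P-z)^{-1}u$ (the explicit Airy-type Green's function behaviour is what the diagonalization supplies); once that is in hand, your commutator induction and the duality argument for the extension to $\mathscr{S}'$ go through as stated.
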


This enables us to compute the commutator with the resolvent. 
\begin{corollary}
For any linear operator $T: \mathscr{S}' \to \mathscr{S}'$, 
$[T, (P-z)^{-1}]=-(P-z)^{-1}[T, P](P-z)^{-1}$ as an operator from 
$\mathscr{S}'$ to $\mathscr{S}'$. 
\end{corollary}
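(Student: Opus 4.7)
The plan is to reduce the corollary to a purely algebraic manipulation on $\mathscr{S}'$, using Proposition A.1 as the sole nontrivial input. First I would check that each operator appearing in the claimed identity is a well-defined continuous endomorphism of $\mathscr{S}'$: $T$ by hypothesis, $(P-z)^{-1}$ by Proposition A.1, and $P = -\Delta + \beta x_1 + V$ because its coefficients (the polynomial $\beta x_1$ and the $C_b^{\infty}$ function $V$) together with $\Delta$ give an operator that maps $\mathscr{S}$ continuously to $\mathscr{S}$, hence by duality $\mathscr{S}'$ to $\mathscr{S}'$. Consequently $[T,P]$ and the compositions on both sides of the claimed identity are well-defined operators $\mathscr{S}' \to \mathscr{S}'$.

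Next I would invoke the two-sided inverse property supplied by Proposition A.1, namely $(P-z)(P-z)^{-1} = I = (P-z)^{-1}(P-z)$ as maps on $\mathscr{S}'$, and derive the standard algebraic identity $[T, A^{-1}] = -A^{-1}[T, A]A^{-1}$ with $A = P - z$. Concretely, applying $T$ to $(P-z)^{-1}(P-z) = I$ and subtracting the identity $(P-z)^{-1}(P-z)T = T$ gives
\[ [T,(P-z)^{-1}](P-z) + (P-z)^{-1}[T, P-z] = 0 \]
on $\mathscr{S}'$; composing with $(P-z)^{-1}$ on the right and using $[T, P-z] = [T,P]$ yields the desired formula. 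Each step is legal because every operator involved maps $\mathscr{S}'$ into $\mathscr{S}'$ and the cancellation $(P-z)(P-z)^{-1} = I$ holds there.

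The main obstacle is essentially absent once Proposition A.1 is in hand: without the continuity of $(P-z)^{-1}$ on $\mathscr{S}'$ one could not even form the composition $(P-z)^{-1}[T,P](P-z)^{-1}$ meaningfully, since $[T,P]$ need not preserve any Sobolev-type space on which $(P-z)^{-1}$ is a priori bounded, and without the two-sided inverse identity on $\mathscr{S}'$ the cancellation step in the derivation would fail. Proposition A.1 supplies exactly these two ingredients, so the corollary follows by the above two-line algebraic manipulation.
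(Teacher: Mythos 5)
Your proof is correct and takes essentially the same approach the paper has in mind: the paper gives no explicit proof of Corollary A.1, but Remark A.1(2) makes clear that the only subtle point is that $P-z$ has trivial kernel on $\mathscr{S}'$ (equivalently, $(P-z)^{-1}$ is a two-sided inverse there), which Proposition A.1 supplies and which you correctly identify as the ingredient that makes the standard algebraic manipulation $[T,A^{-1}]=-A^{-1}[T,A]A^{-1}$ legitimate on $\mathscr{S}'$.
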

 
\begin{remark}
(1). We always have $(P-z)[T, (P-z)^{-1}]u=-[T, P](P-z)^{-1}u$ if $u, Tu \in L^2$. 
If we know that $[T, P](P-z)^{-1}u \in L^2$ and $[T, (P-z)^{-1}]u\in L^2$, 
we conclude that $[T, (P-z)^{-1}]u=-(P-z)^{-1}[T, P](P-z)^{-1}u$ since 
the domain of $P$ is $\{u\in L^2|Pu\in L^2\}$.   

(2). If we only know that $[T, P](P-z)^{-1}u \in L^2$, we cannot immediately conclude that 
$[T, (P-z)^{-1}]u\in L^2$ and $[T, (P-z)^{-1}]u=-(P-z)^{-1}[T, P](P-z)^{-1}u$. 
If we had a generalized eigenfunction $v\in \mathscr{S}'$ with  $(P-z)v=0$, 
there would be the possibility that $[T, (P-z)^{-1}]u=v-(P-z)^{-1}[T, P](P-z)^{-1}u \not \in L^2$. 
The above Proposition excludes this possibility.
\end{remark}
 
To apply the perturbation argument, we introduce the Banach space $Y^N=\bigcap_{k+s\le N}H^{k, s}$, 
where $H^{k, s}$ is the weighted Sobolev space 
\[H^{k, s}=\{u \in L^2 | \|u\|_{k,s}=\|\langle D \rangle^{k} \langle x \rangle^{s}u\|_{L^2}<\infty \}.\]
We only consider $k,s\in \mathbb{Z}_{\ge 0}$. 
The following proposition implies the Proposition A.1 
since $\mathscr{S}=\bigcap_{k ,s \ge 0}H^{k, s}$ including the topology.
\begin{prop}
For $\mathrm{Im}z \not =0$, $(P-z)^{-1}: Y^N \to Y^N$ is a bounded operator for any $N\ge 0$.
\end{prop}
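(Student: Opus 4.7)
The plan is to proceed by induction on $N$. The base case $N=0$ is simply $Y^0 = L^2$, on which $R = (P-z)^{-1}$ is bounded with norm $|\mathrm{Im}\,z|^{-1}$ by the self-adjointness of $P$ (essentially self-adjoint on $C_c^\infty$ since $V \in C_b^\infty$ and $\beta x_1$ is a real multiplication operator).

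For the inductive step, I would assume $R : Y^{N-1} \to Y^{N-1}$ boundedly and, given $u \in Y^N$, reduce (via the standard characterization of $Y^N$ in terms of first-order monomials in $x_j, \partial_j$ applied to elements of $Y^{N-1}$) the goal to showing that $\partial_j Ru$ and $x_j Ru$ lie in $Y^{N-1}$ with norms controlled by $\|u\|_{Y^N}$. The relevant commutators are $[\partial_j, P] = \beta\delta_{j1} + \partial_j V$ (bounded, since $V \in C_b^\infty$) and $[x_j, P] = 2\partial_j$, which formally yield
\[
\partial_j R = R\partial_j - R(\beta\delta_{j1} + \partial_j V) R, \qquad x_j R = R x_j - 2 R\partial_j R.
\]
Each right-hand side is a composition of operators that preserve $Y^{N-1}$ by the inductive hypothesis (using the elementary fact that multiplication by any $C_b^\infty$ function preserves every $H^{k,s}$, hence $Y^{N-1}$); and $\partial_j u, x_j u \in Y^{N-1}$ for $u \in Y^N$. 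Thus, if the identities are valid, the required bounds are immediate, and moreover iterating within the second identity uses the first.

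The main obstacle is the rigorous justification of these commutator identities, as flagged in Remark~A.1: one cannot deduce $\partial_j Ru \in L^2$ from applying $R$ to the $L^2$ element $[\partial_j, P]Ru$ without already knowing $\partial_j Ru \in L^2$, which is what must be proved. To bypass this circularity I would first establish $R\phi \in \mathscr{S}$ for every $\phi \in C_c^\infty$. Smoothness of $R\phi$ follows from standard interior elliptic regularity for the second-order elliptic operator $P$. For rapid decay, one can use an Agmon-type weighted estimate at complex energy (exploiting $\mathrm{Im}\,z \neq 0$ together with the Stark term $\beta x_1$), or equivalently the integral formula $R\phi = i\int_0^\infty e^{-i(P-z)t}\phi\,dt$ (for $\mathrm{Im}\,z > 0$) combined with the Avron--Herbst explicit formula, which shows that $e^{-itP_0}$ with $P_0 = -\Delta + \beta x_1$ preserves $\mathscr{S}$ with seminorms growing at most polynomially in $t$, and a Duhamel perturbation argument extending this to $e^{-itP}$ since $V \in C_b^\infty$ preserves $\mathscr{S}$.

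With $R\phi \in \mathscr{S}$ established, the commutator identities are literally valid on the dense subspace $C_c^\infty \subset Y^N$, and the inductive argument delivers the uniform a priori estimate $\|R\phi\|_{Y^N} \lesssim \|\phi\|_{Y^N}$ for $\phi \in C_c^\infty$. Density of $C_c^\infty$ in $Y^N$, combined with the $L^2$-continuity of $R$ and the fact that $Y^N$ is defined by testing with bounded operators (powers of $\langle D \rangle$ and $\langle x \rangle$) applied inside the $L^2$ norm, extends the bound to all of $Y^N$, completing the induction.
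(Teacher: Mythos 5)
Your overall strategy is sound and genuinely different from the paper's. The paper never proves that $(P-z)^{-1}$ preserves $\mathscr{S}$ for the perturbed operator and for all $\mathrm{Im}\,z\neq 0$ at once; instead it (i) proves it for the \emph{free} operator $P_0$ via the explicit Airy diagonalization, (ii) uses the Neumann series $(P-z)^{-1}=(1+(P_0-z)^{-1}V)^{-1}(P_0-z)^{-1}$ to get $Y^N$-boundedness for $|\mathrm{Im}\,z|$ large, thereby justifying the formal commutator computation in that region (via Remark A.1(1), not via density), and (iii) bootstraps by alternating the a~priori estimate (A.1) with Neumann series around shifted base points, showing the admissible $|\mathrm{Im}\,z|$ threshold tends to $0$. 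Your plan replaces (ii)--(iii) by the single claim that $(P-z)^{-1}\phi\in\mathscr{S}$ for $\phi\in C_c^\infty$ and all $\mathrm{Im}\,z\neq 0$, from which the rest follows by density. If that claim holds, the proof is cleaner. But the claim is exactly where your argument has a gap.

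The gap is in the Duhamel step. The Avron--Herbst formula does give $\|e^{-itP_0}\|_{Y^N\to Y^N}\le C_N(1+|t|)^{a_N}$ with $a_N\ge 1$, and $V\in C_b^\infty$ is bounded on each $Y^N$. Plugging these into the Duhamel expansion $e^{-itP}\phi=e^{-itP_0}\phi-i\int_0^t e^{-i(t-s)P_0}Ve^{-isP}\phi\,ds$ and running Gronwall gives $\|e^{-itP}\phi\|_{Y^N}\lesssim(1+t)^{a_N}\exp\left(C\int_0^t(1+s)^{a_N}ds\right)\|\phi\|_{Y^N}$, which is \emph{superexponential} in $t$ for $a_N\ge 1$ (iterating Duhamel instead of Gronwall gives the same). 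With that bound, the integral $\int_0^\infty e^{-t\,\mathrm{Im}\,z}\|e^{-itP}\phi\|_{Y^N}\,dt$ diverges for every $\mathrm{Im}\,z>0$, so the Laplace-transform representation of $(P-z)^{-1}\phi$ does not converge in $Y^N$, and the claimed rapid decay of $R\phi$ does not follow. Polynomial (not merely exponential) growth of the $Y^N$ norms of $e^{-itP}\phi$ is in fact true, but it requires a structural argument that you do not invoke: one conjugates $x_j$, $D_j$ through $e^{-itP}$ (Heisenberg picture) and uses that $i[P,x_j]=2D_j$ is an unbounded operator of lower order while $i[P,D_j]=-\beta\delta_{j1}-\partial_jV$ is \emph{bounded} uniformly in $t$, so the evolved observables solve a lower-triangular (essentially nilpotent) linear ODE with bounded forcing. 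This is a different estimate from the state-level Duhamel/Gronwall you describe, and without it (or without the paper's Neumann-series bootstrap) the argument does not close. The alternative ``Agmon-type weighted estimate at complex energy'' is likewise not developed; the direction $x_1\to-\infty$ is classically allowed for every real energy, so the decay there rests on the complex-energy (Airy) structure and is not an off-the-shelf Agmon estimate.

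As a secondary point: even if you only establish $(P-z)^{-1}\phi\in\mathscr{S}$ for $|\mathrm{Im}\,z|$ large (which \emph{would} follow from a crude exponential-in-$t$ bound on $\|e^{-itP}\|_{Y^N\to Y^N}$, had you one), you would then obtain the a~priori estimate only for large $|\mathrm{Im}\,z|$ and would still need the iterative Neumann-series argument of the paper's proof to reach all $\mathrm{Im}\,z\neq 0$; your write-up omits this step.
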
 
\begin{proof}
We first give a formal proof without justifying the commutator calculation.
Take $u \in Y^N$. Then for $k+s \le N$, 
\begin{align*}
&\|(P-z)^{-1}u\|_{k,s}=\|\langle D \rangle^{k} \langle x \rangle^{s}(P-z)^{-1}u\|_{L^2}\\
&\le \|(P-z)^{-1}[\langle D \rangle^{k} \langle x \rangle^{s}, P](P-z)^{-1}u\|_{L^2} 
	+\|(P-z)^{-1}\langle D \rangle^{k} \langle x \rangle^{s}u\|_{L^2} \\
&\le |\mathrm{Im}z|^{-1}\|[\langle D \rangle^{k} \langle x \rangle^{s}, P](P-z)^{-1}u\|_{L^2}
 +|\mathrm{Im}z|^{-1}\|u\|_{k,s}.
\end{align*}
Since $[\langle D \rangle^{k} \langle x \rangle^{s}, P]$ consists of the terms which can be estimated by 
$\langle D \rangle^{k-1} \langle x \rangle^{s}$ and $\langle D \rangle^{k+1} \langle x \rangle^{s-1}$, 
\[\|[\langle D \rangle^{k} \langle x \rangle^{s}, P](P-z)^{-1}u\|_{L^2} 
\lesssim \|(P-z)^{-1}u\|_{k-1, s}+\|(P-z)^{-1}u\|_{k+1, s-1}\]
(if k=0 or s=0, the first or the second term does not appear). 
Since one computation of the commutator adds $|\mathrm{Im}z|^{-1}$, 
the repetition of this procedure shows that 
\begin{equation}
\|(P-z)^{-1}\|_{Y^N \to Y^N}\le C_N |\mathrm{Im}z|^{-1} \max\{1, (1/|\mathrm{Im}z|)^{2N}\} \tag{A.1}
\end{equation}
if the above calculation is justified. 
We next give a rigorous proof.

We first assume that $V=0$. We set $P_0=-\Delta+\beta x_1$.
Then we have an explicit diagonalization 
$\mathcal{F}_{x'}\exp(-\frac{i}{3\beta}D_1^3)P_0\exp(\frac{i}{3\beta}D_1^3)
\mathcal{F}_{x'}^{-1}=|\xi'|^2+\beta x_1$, 
where $\mathcal{F}_{x'}$ is the Fourier transform with respect to $x'$. 
Since  
$\mathcal{F}_{x'}\exp(-\frac{i}{3\beta}D_1^3)$ and $(|\xi'|^2+\beta x_1-z)^{-1}$ preserve $\mathscr{S}$, 
we conclude that $(P_0-z)^{-1}$ preserves $\mathscr{S}$. 
Thus Proposition A.1 and Corollary A.1 are true for $V=0$.
Then the above calculation is justified and the estimate (A.1) is true for $P_0-z$.

We next assume that $V\in C_b^{\infty}(\mathbb{R}^n; \mathbb{R})$ and fix $N\ge 0$. 
We note that $V$ is a bounded operator from $Y^N$ to $Y^N$. 
This and the estimate (A.1) for $P_0$ imply that there exists $\rho_0>0$ 
such that $\|(P_0-z)^{-1}V\|_{Y^N \to Y^N}<1$ for $|\mathrm{Im}z|> \rho_0$. 
Thus the Neumann series argument shows that 
 $(P-z)^{-1}=(1+(P_0-z)^{-1}V)^{-1}(P_0-z)^{-1}$ is bounded from $Y^N$ to $Y^N$ for $|\mathrm{Im}z|> \rho_0$. 
Then the above calculation is justified by Remark A.1.(1) 
and the a priori estimate (A.1) (rather than the estimate from the Neumann series argument)
is true for $P-z$ with $|\mathrm{Im}z|> \rho_0$. 

We next weaken the assumption that $|\mathrm{Im}z|> \rho_0$.
Take $z_0$ with $|\mathrm{Im}z_0|> \rho_0$. 
If $|z-z_0| C_N |\mathrm{Im}z_0|^{-1} \max\{1, ({1}/{|\mathrm{Im}z_0|})^{2N}\}<1$, 
the estimate (A.1) for $P-z_0$ and the Neumann series argument show that 
$(P-z)^{-1}=(1+(z_0-z)(P-z_0)^{-1})^{-1}(P-z_0)^{-1}$ is bounded from $Y^N$ to $Y^N$. 
Thus the above calculation is justified by Remark A.1.(1) and the estimate (A.1) is true for $P-z$. 
Since $|\mathrm{Im}z_0|> \rho_0$ is arbitrary, (A.1) is true for $P-z$ with $|\mathrm{Im}z|> \rho_1$, where 
$\rho_1=\rho_0-(C_N \rho_0^{-1} \max\{1, ({1}/{\rho_0})^{2N}\})^{-1}$. 

The repetition of this argument shows that 
the estimate (A.1) is true for $P-z$ with $|\mathrm{Im}z|> \rho_j$, where 
$\rho_j=\rho_{j-1}-(C_N \rho_{j-1}^{-1} \max\{1, ({1}/{\rho_{j-1}})^{2N}\})^{-1}$. 
We may assume that $C_N>1$ and thus $\rho_j>0$. 
Since $\rho_0 > \rho_1 >\rho_2 > \cdots >0$, 
there exists $\rho_{\infty}=\lim_{j \to \infty}\rho_j$. 
To finish the proof, it is enough to show that $\rho_{\infty}=0$. 
Assume on the contrary that $\rho_{\infty}>0$. 
Then $\rho_{j-1}-\rho_j=(C_N \rho_{j-1}^{-1} \max\{1, ({1}/{\rho_{j-1}})^{2N}\})^{-1}>
(C_N \rho_{\infty}^{-1} \max\{1, ({1}/{\rho_{\infty}})^{2N}\})^{-1}$ for any $j$. 
Thus $\lim_{j \to \infty}\rho_j=-\infty$, which is a contradiction.
\end{proof}

\begin{remark}
All the results in this Appendix are true for $\beta=0$. 
The free diagonalization is of course the Fourier transform. 
If we replace $|\mathrm{Im}z|$ by $\mathrm{dist}(z, \sigma(P))$ in the proof, 
the results in this case are also true for any $z$ in the resolvent set $\mathbb{C}\setminus \sigma(P)$.
\end{remark}

\section*{Acknowledgement}
The author is grateful to his advisor Shu Nakamura for discussions and the encouragement. 
The author is also grateful to the anonymous referee for valuable suggestions to 
improve the manuscript. 
The author is under the support of the FMSP program at the 
Graduate School of Mathematical Sciences, the University of Tokyo.

Graduate School of Mathematical Sciences, University of Tokyo, 
3-8-1, \\Komaba, Meguro-ku, Tokyo 153-8914, Japan

E-mail address: kameoka@ms.u-tokyo.ac.jp


\begin{thebibliography}{99}
\bibitem{AH}
J. Avron, I. Herbst; Spectral and Scattering Theory of Schr\"{o}dinger Operators Related to the Stark Effect, 
Commun. Math. Phys. 52 (1977), 239-254.
\bibitem{Br}
P. Briet; General Estimates on Distorted Resolvents and Application to Stark Hamiltonians, 
Rev. Math. Phys. 8 (1996), 639-653.
\bibitem{BZ}
N. Burq, M. Zworski; Resonance Expansions in Semi-Classical Propagation, 
Commun. Math. Phys. 223 (2001), 1-12.
\bibitem{DV1}
K. Datchev, A. Vasy; Propagation through trapped sets and semiclassical resolvent estimates, 
Ann. Inst. Fourier, Grenoble 62 (2012), 2347-2377.
\bibitem{DV2}
K. Datchev, A. Vasy; Gluing Semiclassical Resolvent Estimates via Propagation of Singularities, 
Int. Math. Res. Not. Vol. 2012, 5409-5443.
\bibitem{DF}
M. Dimassi, S. Fujii\'{e}; A Time-independent Approach for the Study of the Spectral Shift Function 
and an Application to Stark Hamiltonians, 
Commun. Part. Diff. Eqs. 40:10 (2015), 1787-1814.
\bibitem{DP}
M. Dimassi, V. Petkov; Spectral shift function and resonances for non-semi-bounded and Stark Hamiltonians,
J. Math. Pures Appl. 82 (2003), 1303-1342.
\bibitem{DS} 
M. Dimassi, J. Sj\"{o}strand; Spectral Asymptotics in the Semi-Classical Limit, 
LMS Lecture Series, Cambridge University Press, 1999. 
\bibitem {DZ}
S. Dyatlov, M. Zworski; Mathematical Theory of Scattering Resonances,
AMS, 2019.
\bibitem{HeSjo}
B. Helffer, J. Sj\"{o}strand; R\'{e}sonances en limite semi-classique, 
M\'{e}m. Soc. Math. France (N.S.) 24-25 (1986).
\bibitem{H}
I. Herbst; Dilation Analyticity in Constant Electric Field I, Commun. Math. Phys. 64 (1979), 279-298.
\bibitem{H2}
I. Herbst; Exponential Decay in the Stark Effect, Commun. Math. Phys. 75 (1980), 197-205. 
\bibitem{HeSi}
I. Herbst, B. Simon; Dilation Analyticity in Constant Electric Field II, 
Commun. Math. Phys. 80 (1981), 181-216.
\bibitem{HN}
P. D. Hislop, S. Nakamura; Semiclassical resolvent estimates, 
Ann. Inst. Henri Poincar\'{e}, 51 (1989), 187-198. 
\bibitem{HS}
P. D. Hislop, I. M. Sigal; Introduction to Spectral Theory with Applications to Schr\"{o}dinger Operators, 
Springer, 1996.
\bibitem{Hu}
W. Hunziker, Distortion analyticity and molecular resonance curves, 
Ann. Inst. Henri Poincar\'{e}. 45, (1986), 339-358.
\bibitem{K}
T. Kato; Perturbation Theory for Linear Operators, Springer, 1966.
\bibitem{M}
A. Martinez; Resonance free domains for non-globally analytic potentials, Ann. Henri Poincar\'{e} 4 (2002), 739-756.
\bibitem{NSZ}
S. Nakamura, P. Stefanov, M. Zworski; Resonance Expansions of Propagators in the Presence of Potential Barriers, 
J. Funct. Anal. 205 (2003), 180-205.
\bibitem{RS}
M. Reed, B. Simon; Methods of Modern Mathematical Physics II, Academic Press, 1975.
\bibitem{RW}
D. Robert, X. P. Wang; Time-delay and spectral density for Stark Hamiltonians (II), 
Chinese Ann. Math. Ser. B 12 (1991), 358-384.
\bibitem{Si}
I. M. Sigal; Sharp Exponential Bounds on Resonances States and Width of Resonances, 
Adv. Appl. Math. 9 (1988), 127-166.
\bibitem{Si2}
I. M. Sigal; Geometric Theory of Stark Resonances in Multielectron Systems, 
Commun. Math. Phys. 119 (1988), 287-314.
\bibitem{SZ}
J. Sj\"{o}strand, M. Zworski; Fractal Upper Bounds on the Density of Semiclassical Resonances, 
Duke. Math. 137 (2007), 381-459.
\bibitem{S}
P. Stefanov; Quasimodes and Resonances: Sharp Lower Bounds, 
Duke Math. J. 99 (1999), 75-92.
\bibitem{S2}
P. Stefanov; Sharp upper bounds on the number of resonances near the real axis for trapping systems, 
Amer. Jour. Math. 125 (2003), 183-224.
\bibitem{W1}
X. P. Wang; Semiclassical Estimates on Resolvents of Schr\"{o}dinger Operators with Homogeneous 
Electric Field, J. Differ. Equ. 78 (1989), 354-373.
\bibitem{W2}
X. P. Wang; Bounds on Widths of Resonances for Stark Hamiltonians, Acta Math. Sinica Ser. B 6 
(1990), 100-119.
\bibitem{W3}
X. P. Wang; Resonances of N-body Schr\"{o}dinger operators with Stark effect, 
Ann. Inst. Henri Poincar\'{e}, 52 (1990), 1-30.
\bibitem{Z}
M. Zworski; Semiclassical Analysis, 
AMS, 2012.
\end{thebibliography}
\end{document}